\documentclass[12pt]{article}
\usepackage{amsmath}
\usepackage{graphicx}
\usepackage{enumerate}
\usepackage{natbib}
\usepackage{url} 


\addtolength{\oddsidemargin}{-.5in}%
\addtolength{\evensidemargin}{-1in}%
\addtolength{\textwidth}{1in}%
\addtolength{\textheight}{1.7in}%
\addtolength{\topmargin}{-1in}%

\usepackage{usebib}
\usepackage{blkarray}
\usepackage{array}
\usepackage{amsmath}
\usepackage{amssymb}
\usepackage{amsthm}
\usepackage{natbib}
\usepackage{amsfonts}
\RequirePackage[colorlinks,citecolor=blue,urlcolor=blue]{hyperref}
\usepackage{graphicx}
\usepackage{framed}
\usepackage[normalem]{ulem}
\usepackage{enumerate}
\usepackage{enumitem}
\usepackage[utf8]{inputenc}
\usepackage{bbm}
\usepackage{mathtools}
\usepackage{latexsym}
\usepackage{epsfig}
\usepackage{color}
\usepackage{xparse}
\usepackage{float}
\usepackage{lscape}
\usepackage{subfigure}
\usepackage{MnSymbol}
\usepackage{fancyhdr}
\usepackage{lastpage}
\usepackage{url}
\usepackage{mathrsfs}
\usepackage{cleveref}
\usepackage{multirow}
\usepackage{appendix}
\usepackage[font = scriptsize]{caption}


\DeclareSymbolFont{AMSb}{U}{msb}{m}{n}  
\DeclareMathSymbol{\Sph}{\mathbin}{AMSb}{"53} \DeclareMathSymbol{\R}{\mathbin}{AMSb}{"52}
\DeclareMathSymbol{\T}{\mathbin}{AMSb}{"54} \DeclareMathSymbol{\Z}{\mathbin}{AMSb}{"5A}
\DeclareMathSymbol{\K}{\mathbin}{AMSb}{"4B} \DeclareMathSymbol{\Pb}{\mathbin}{AMSb}{"50}
\DeclareMathSymbol{\Q}{\mathbin}{AMSb}{"51}
\DeclareMathSymbol{\N}{\mathbin}{AMSb}{"49}
\DeclareMathOperator{\vect}{vec}

\newcommand{\abs}[1]{\ensuremath{\left| #1 \right|}}

\newcommand{\lp}{\ensuremath{\left(}}
\newcommand{\rp}{\ensuremath{\right)}}
\newcommand{\lb}{\ensuremath{\left[}}
\newcommand{\rb}{\ensuremath{\right]}}
\newcommand{\vc}[1]{\ensuremath{\vect\lp #1 \rp}}

\DeclareMathOperator{\rank}{rank}

\newcommand{\ind}[1]{\ensuremath{\mathbbm{1}\lp#1\rp}}
\newcommand{\deter}[1]{\ensuremath{\det\lp#1\rp}}

\NewDocumentCommand\Prob{mg}{\ensuremath{\Pb \IfNoValueTF{#2}{}{_{#2}} \lp #1 \rp}}

\NewDocumentCommand\Exp{mg}{\ensuremath{\mathbb{E} \IfNoValueTF{#2}{}{_{#2}} \lb #1 \rb}}
\newcommand{\indy}{\ensuremath{\perp\!\!\!\perp}}

\NewDocumentCommand\func{mg}{\ensuremath{f \IfNoValueTF{#2}{}{^{#2}} \lp #1 \rp}}
\NewDocumentCommand\charfunc{mg}{\ensuremath{\varphi \IfNoValueTF{#2}{}{_{#2}} \lp #1 \rp}}

\crefname{assumption}{assumption}{assumptions}
\crefname{proof}{proof}{proofs}

\usepackage{amsbsy}

\crefname{enumi}{condition}{conditions}
\Crefname{enumi}{Condition}{Conditions}

\newtheorem{theorem}{Theorem}[section]
\newtheorem{lemma}[theorem]{Lemma}

\newtheorem{assumption}[theorem]{Assumption}
\newtheorem{corollary}[theorem]{Corollary}
\newtheorem{definition}{Definition}[section]

\newcommand{\blind}{1}

\begin{document}
\bibliographystyle{agsm}

\def\spacingset#1{\renewcommand{\baselinestretch}%
{#1}\small\normalsize} \spacingset{1}


\if1\blind
{
  \title{\bf Vaccine efficacy for binary post-infection outcomes under misclassification without monotonicity}
  \author{Rob Trangucci\thanks{
    The authors gratefully acknowledge helpful conversations with Gongjun Xu, Fan Li, and Peng Ding about this work.}\\
    Department of Statistics, Oregon State University, Corvallis\\
    \and 
    Yang Chen \\
    Department of Statistics, University of Michigan, Ann Arbor\\
    \and 
    Jon Zelner \\
    Department of Epidemiology\\ Center for Social Epidemiology and Population Health\\ University of Michigan School of Public Health, Ann Arbor
    }
  \maketitle
} \fi

\if0\blind
{
  \bigskip
  \bigskip
  \bigskip
  \begin{center}
    {\LARGE\bf Identified vaccine efficacy for binary post-infection outcomes under misclassification without monotonicity}
\end{center}
  \medskip
} \fi

\bigskip
\begin{abstract}
In order to meet regulatory approval, a new vaccine must show that it reduces the total risk of a post-infection outcome like transmission, symptomatic disease, severe illness, or death in a randomized clinical trial.
Because infection is necessary for a post-infection outcome, one can use principal stratification to partition the total causal effect of vaccination into two causal effects: vaccine efficacy against infection, and the principal effect of vaccine efficacy on post-infection outcomes in patients who would be infected under both placebo and vaccination.
Despite the importance of such principal effects to policymakers, these estimands are generally unidentifiable, even under strong assumptions that are rarely satisfied in real-world trials.
We develop a novel method to nonparametrically point identify these principal effects while eliminating the monotonicity assumption and allowing for measurement error.
Furthermore, our results allow for multiple treatments, and are general enough to be applicable outside of vaccine efficacy.
Our method relies on the fact that many vaccine trials are run at geographically disparate health centers, and measure biologically-relevant categorical pretreatment covariates.
We show that our method can be applied to a variety of clinical trial settings where vaccine efficacy against infection and a post-infection outcome can be jointly inferred.
This methodology can yield new insights from existing vaccine efficacy trial data and will aid researchers in designing new multi-arm clinical trials.
\end{abstract}

\noindent%
{\it Keywords:}  3 to 6 keywords, that do not appear in the title
\vfill

\newpage
\spacingset{1.9} 

\section{Introduction}

Phase 3 randomized, placebo-controlled clinical trials are the gold-standard by which vaccine candidates are assessed for efficacy and safety.
Such trials are an important source of data about whether vaccines prevent infection, and post-infection outcomes like secondary transmission, severe illness, or death.
For example, the COVID-19 vaccination trials detailed in \cite{polackSafetyEfficacyBNT162b22020} and \cite{baden_efficacy_2021} measured vaccine efficacy against symptomatic disease, as well as severe illness and death.
Principal stratification, developed in \cite{frangakisPrincipalStratificationCausal2002}, may be used to partition the intention-to-treat effect of vaccination on an outcome like hospitalization into vaccine efficacy against infection and vaccine efficacy against hospitalization given infection in the always-infected stratum; these separate effects help policy makers optimize vaccination programs, communicate with the public, allocate scarce resources, and guide future pharmaceutical therapeutic development \citep{lipsitchInterpretingVaccineEfficacy2021}.
Methods to infer principal effects for vaccine efficacy were first developed for continuous post-infection outcomes in \cite{gilbertSensitivityAnalysisAssessment2003,jemiaiSemiparametricEstimationTreatment2007,shepherdSensitivityAnalysesComparing2006,shepherdSensitivityAnalysesComparing2007}, and further developed for binary post-infection outcomes in \cite{hudgensCausalVaccineEffects2006}.

Unfortunately, vaccine efficacy against post-infection outcomes, binary or otherwise, is not generally identifiable, even under the assumption that vaccine efficacy against infection is non-negative almost-surely (monotonicity).
Moreover, the method requires that both infection and post-infection outcomes are perfectly measured.
Neither monotonicity nor error-free measurements can be assumed to hold in vaccine trials.
Monotonicity can be violated if a vaccine increases the per-exposure probability of infection for a participant \citep{gilbertSensitivityAnalysisAssessment2003}, which is possible in influenza vaccine trials where the vaccine targets a different antigen than the circulating strain.
Another way monotonicity can be violated is if vaccination increases exposure for certain participants.
This can occur in a double-blinded placebo-controlled study where the vaccine is reactogenic and leads to some participants in the vaccine group becoming unblinded.

Measurement error is common in vaccine trials due to the imperfect nature of diagnostic tests for infection \citep{kisslerViralDynamicsAcute2021,wangDetectionSARSCoV2Different2020}.
Post-infection outcomes like symptoms may also be observed with error.
For example, in an influenza vaccine trial, many different viruses circulate during influenza season that produce similar symptom profiles.

We develop novel methodology to point identify vaccine efficacy against binary post-infection outcomes without assuming monotonicity while allowing infection and post-infection outcomes to be misclassified.
Our framework immediately generalizes to multiple treatments as we will show.
We capitalize on the fact that many randomized trials for vaccines are run as multi-center trials (i.e. geographically well-separated study sites) \citep{francis_prevention_1982,jr_estimation_2000,noauthor_quadrivalent_2007,halloran_design_2010,baden_efficacy_2021,polackSafetyEfficacyBNT162b22020}, and typically measure pretreatment covariates that can plausibly be noisy measurements of infection principal stratum.
We show that when combined with appropriate and reasonable assumptions, these extra sources of information yield point identification of the estimand of interest, VE against post-infection outcomes.

Our method builds on literature for identifying principal stratum effects with covariates
\citep{rubinCausalInferencePotential2006,dingIdentifiabilityEstimationCausal2011,jiangPrincipalCausalEffect2016},
on inferring principal stratum effects in multisite randomized trials \citep{wang2017causal,yuan_identifying_2019,luo2023causal},
on using covariates to hone large-sample nonparametric bounds
\citep{zhangEstimationCausalEffects2003,grilliNonparametricBoundsCausal2008,longSharpeningBoundsPrincipal2013},
and on identifying causal estimands under unmeasured confounding \citep{miaoIdentifyingCausalEffects2018,shiMultiplyRobustCausal2020}.
Our method also relates to recent literature on inferring causal estimands under measurement error \citep{jiangMeasurementErrorsBinary2020} and on identification of latent variable models \citep{ouyangIdentifiabilityLatentClass2022}. 

We show that our method can be used to design randomized trials for comparison of multiple vaccines, which will be a necessity for public health agencies in future pandemics.
Due to recent updates to regulatory guidance from the European Medicines Agency, the authority that authorizes pharmaceuticals in the European Union, principal effects are acceptable target estimands in randomized clinical trials and principal stratification is an acceptable analysis method for these trial data \citep{bornkamp, lipkovich_using_2022}.
This means that our methodology can be used by regulatory agencies to design new clinical trials for vaccines that target post-infection outcome VE.
As noted by several authors, vaccine efficacy against post-infection outcomes is mathematically analogous to the widely-studied survivor average treatment effects \citep{dingIdentifiabilityEstimationCausal2011,tchetgentchetgenIdentificationEstimationSurvivor2014,dingPrincipalStratificationAnalysis2017}, so our methodology can be readily used outside the domain of vaccine efficacy. 

\section{Vaccine efficacy in multi-arm multi-center trials}\label{sec:multi-arm}
A common VE study design is a multi-arm multi-center trial in which participants are enrolled at many different well-geographically-separated health centers and subsequently randomized to receive one of several treatments, one of which is usually a placebo.
In order to make the statistical model more concrete, we will focus the development of the method in this section on a hypothetical influenza vaccine trial.
Despite this focus, our method is general enough to be applicable to any vaccine-preventable disease and any post-infection outcome.
Examples of these outcomes include secondary transmission, medically-attended disease, severe symptoms, viral load.
All of these infection/post-infection-outcome pairs satisfies the basic structure we will develop in this section.

Consider the example of an influenza vaccine trial, where researchers are interested in understanding vaccine efficacy against influenza infection and vaccine efficacy against severe illness caused by influenza infection.
Crucially, it is not possible to perfectly observe influenza infection or severe illness. 
Instead, researchers are limited to using imperfect tests for infection, like polymerase chain reaction (PCR) tests, or serology to detect a participant's infection status. 
These methods measure infection with error, with varying levels of sensitivity and specificity. 
For example, PCRs for COVID-19 have very high specificity, but tend to have sensitivities in the range of $0.6$ to $0.8$ due to variation among patients in how the virus populates the nasal cavity, variation in swab quality, and viral RNA dynamics \citep{kisslerViralDynamicsAcute2021,wangDetectionSARSCoV2Different2020}.
Depending on the severity of the post-infection outcome, these outcomes may also be mismeasured.
For instance, a high proportion of participants report symptoms in vaccine efficacy studies, despite many of these participants testing negative for the target disease.
In the presence of high-sensitivity tests, this necessarily means that specificity of symptoms following infection is below $1$.
This is because it is possible for participants to develop influenza-like severe illness from non-influenza viruses during a clinical trial.
Thus our framework assumes that observed infection and severe illness are noisy proxies for true unobservable infection and severe illness states.
The next section outlines the data structure for each participant.

\subsection{Notation and assumptions}

Suppose there are $n$ participants in the trial, and we observe the following sextuplet for each participant $i$: $(\tilde{S}_i, \tilde{Y}_i, Z_i, R_i, A_i, X_i)$, 
where $\tilde{S}_i$ is binary influenza test result, $\tilde{Y}_i$ is observed binary severe illness status.
$Z_i$ is a categorical variable with $N_z \geq 2$ categories representing treatment assignment.
Let $Z_i \in \{z_1, \dots, z_{N_z}\}$.
$R_i$ is a categorical variable indicating the health center with which each participant is associated, $A_i$ is a discrete pre-treatment covariate related to infection under treatment and control, and $X_i$ is a univariate discrete pre-treatment covariate that may represent the intersection of several distinct covariates like age, sex, occupation, and pre-existing conditions. 
Let $R_i$ take values from $1$ to $N_r$, $A_i$ take values from $1$ to $N_a$, and $X_i$ take values from $1$ to $N_x$.

Let $S_i$ be the latent influenza infection state, and $Y_i$ be the latent influenza-caused severe illness state for each participant.
We use the Neyman-Rubin causal model to define $S_i$, and $Y_i$ as partially-observed realizations of counterfactual outcomes \citep{neyman1923application,rubin_estimating_1974,rubin_bayesian_1978, holland_statistics_1986}.
For an extensive review of statistical approaches to causal inference through the lens of missing data see \cite{ding2018causal}.
Let any potential treatment plan for all $n$ individuals in the trial be the length-$n$ categorical vector $\mathbf{z}$, where the $i^\mathrm{th}$ element is the potential treatment of the $i^\mathrm{th}$ participant.
Accordingly, each individual is associated with a length-$N_z$ counterfactual infection outcome, $S_i(\mathbf{z})$, and a counterfactual severe illness outcome, $Y_i(\mathbf{z}, S_i(\mathbf{z}))$, under treatment status $\mathbf{z}$.
Let the observed treatment status for all $n$ individuals in the trial be the length-$n$ categorical vector $\mathbf{Z}$, where the $i^\mathrm{th}$ element is the assigned treatment of the $i^\mathrm{th}$ participant.

Our causal model enforces the constraint that an individual not infected by influenza cannot have severe illness caused by influenza infection.
In other words, post-infection outcomes are defined such that they are \emph{caused} by infection from a pathogen of interest \citep{gilbertSensitivityAnalysisAssessment2003,hudgensCausalVaccineEffects2006}.
Then $Y_i(\mathbf{z},0)$ is undefined for all $\mathbf{z}$, and is denoted as $Y_i(\mathbf{z}, 0) = \star$.
$Y_i(\mathbf{z},S_i(\mathbf{z}))$ is defined as a binary variable only when $S_i(\mathbf{z}) = 1$, or, equivalently, $Y_i(\mathbf{z},1)$.
For the remainder of the paper we assume that $S_i(\mathbf{z}) = S_i(\mathbf{z}^\prime)$ and $Y_i(\mathbf{z},S_i(\mathbf{z})) = Y_i(\mathbf{z}^\prime,S_i(\mathbf{z}^\prime))$ if $\mathbf{z}_i = \mathbf{z}^\prime_i$.
Therefore, we assume the Stable Unit Treatment Value Assumption (SUTVA) holds:
\begin{assumption}[SUTVA]
\label{cond:SUTVA}
There is only one version of each treatment, and counterfactual outcomes are a function of only a unit's respective treatment status, $z$.
\end{assumption}

SUTVA can be satisfied for vaccine efficacy trials by restrictions on
participants and recruitment \citep{gilbertSensitivityAnalysisAssessment2003}. 
Furthermore, recruited participants are a small fraction of the total population at risk of infection \citep{zhangLikelihoodBasedAnalysisCausal2009}.
Thus the length $2\,N_z$ vector $S_i(z_1), Y_i(z_1,S(z_1)), \dots, S_i(z_{N_z}), Y_i(z_{N_z},S(z_{N_z}))$ is the complete definition of counterfactual outcomes under each treatment $z_j$ for each individual in the trial.

We assume that the study is a randomized experiment. 
This means that all trial participants have positive probabilities of being assigned to any treatment, and that treatment assignment is unconfounded \citep{imbens2015causal}.

\begin{assumption}[Random treatment assignment]
  \label{cond:pc-obs-unconfound-multi-z}
  The probability of being assigned to treatment for each individual lies strictly between $0$ and $1$: $$0 < P(Z_i = z_j \mid S_i(z_1), Y_i(z_1,S(z_1)), \dots, S_i(z_{N_z}), Y_i(z_{N_z},S(z_{N_z}))) < 1$$ for all
  $z_j \in \{z_1, \dots, z_{N_z}\}$.\\
  Treatment assignment is independent of all potential outcomes, or $$S_i(z_1), Y_i(z_1,S(z_1)), \dots, S_i(z_{N_z}), Y_i(z_{N_z},S(z_{N_z})) \indy Z_i.$$ 
\end{assumption}

Given \Cref{cond:SUTVA} and \Cref{cond:pc-obs-unconfound-multi-z}, the latent realized values of the counterfactual variables are as follows:
\begin{align}
  \label{eq:observational-model}
  S_i  = \textstyle\sum_{j=1}^{N_z} S_i(z_j) \ind{Z_i = z_j},  \quad
  Y_i  = \textstyle\sum_{j=1}^{N_z}Y_i(z_j,S_i(z_j)) \ind{Z_i=z_j}.
\end{align}

Let principal stratum, $S_i^{P_0}$, be defined as the ordered $N_z$-vector of counterfactual infection outcomes for unit $i$, or 
\begin{definition}[Principal stratum]
  $$S^{P_0}_i = (S_i(z_1),S_i(z_2), \dots, S_i(z_{N_z})),\, S_i(z_j) \in \{0,1\},\, 1 \leq j \leq N_z.$$
\end{definition}
Let the set of all principal strata be denoted as $\mathcal{S}$.
When the set of principal strata is not restricted $\mathcal{S} \equiv \{0,1\}^{N_z}$ and $\abs{\mathcal{S}} = 2^{N_z}$.
Let $u$ be an element of $\mathcal{S}$, and let $u_j \equiv P(S_i(z_j)=1 \mid S^{P_0}_i = u)$.

A third condition that will be assumed to hold is that of non-differential measurement errors for the influenza test result and the severe illness observation.
\begin{assumption}[Non-differential Misclassification Errors]\label{cond:nondiff}
  Misclassification errors for $\tilde{S}_i, \tilde{Y}_i$ are conditionally independent of all else given the true values $S_i, Y_i$ or 
  $$\tilde{S}_i \indy Z_i, S^{P_0}_i, R_i, A_i, Y_i, X_i \mid S_i,\quad\tilde{Y}_i \indy Z_i, S^{P_0}_i, R_i, A_i, S_i, X_i \mid Y_i.$$
\end{assumption}
Under \Cref{cond:nondiff}, we may completely characterize the distributions of the noisy outcomes $\tilde{S}_i, \tilde{Y}_i$ via the following four unknown parameters $\mathrm{sn}_S = P(\tilde{S}_i = 1 \mid S_i = 1), \mathrm{sp}_S = P(\tilde{S}_i = 0 \mid S_i = 0)$ and $\mathrm{sn}_Y = P(\tilde{Y}_i = 1 \mid Y_i = 1), \mathrm{sp}_Y = P(\tilde{Y}_i = 0 \mid Y_i = 0)$, or the respective sensitivities and specificities for infection and the post-infection outcome.

\Cref{cond:nondiff} may not be a reasonable assumption for some combinations of infection test and post-infection outcome.
For example, a PCR test may have higher sensitivity if a patient is infected and also exhibiting respiratory symptoms; this would invalidate the conditional independence of $\tilde{S}_i$ and $Y_i$ given $S_i$.
An example where this assumption would reasonably hold is in a secondary transmission study of dyads where the post-infection outcome is the infection status of an unvaccinated partner in a dyad.
We discuss how this assumption can be loosened for infection misclassification in \Cref{sec:discuss-ve}.

We can thus define several causal estimands of interest related to the latent infection and severe illness states, $S_i(z_j), Y_i(z_j)$, comparing treatment $z_j$ to treatment $z_k$.
\begin{definition}[Vaccine efficacy against infection]\label{def:ve-infection}
  \[\mathrm{VE}_{S,jk} =\Exp{S_i(z_k) - S_i(z_j)}/\Exp{S_i(z_k)},\,\mathrm{and}\] 
\end{definition}
\begin{definition}[Intention-to-treat vaccine efficacy against severe illness]\label{def:itt-post}
  \[\mathrm{VE}_{ITT,jk} = \Exp{Y_i(z_k,S_i(z_k))S_i(z_k) - Y_i(z_j,S_i(z_j))S_i(z_j)}/\Exp{Y_i(z_k,S_i(z_k))S_i(z_k)}.\]
\end{definition}
where we let $Y_i(z_j,S_i(z_j))S_i(z_j) = 0$ when $S_i(z_j) = 0$ and similarly for $z_k$.
By convention, $z_k$ is set to be the placebo in the above expressions.
When $N_z=2$, the subscript $jk$ may be dropped from $\mathrm{VE}_{S,jk}$ because there are only two treatment arms being compared:
$$\mathrm{VE}_S = \Exp{S_i(0) - S_i(1)}/\Exp{S_i(0)}.$$
$\mathrm{VE}_S$ can be thought of as the percent change in the risk of infection conferred by vaccination.

Next we examine how to use principal stratification to formulate causal estimands that condition on infection \citep{frangakisPrincipalStratificationCausal2002}.

\subsection{Conditional-on-infection effects and principal stratification}

Despite the conceptual simplicity of comparing rates of severe illness in between treatment arms among infected participants, we will show these comparisons are not valid causal quantities.
In a two arm trial, consider the estimand represented as $\mathrm{VE}_{I}^{\mathrm{net}} = \frac{\Exp{Y_i(0) \mid S_i(0) = 1} - \Exp{Y_i(1) \mid S_i(1) = 1}}{\Exp{Y_i(0) \mid S_i(0) = 1}}$ \citep{hudgensCausalVaccineEffects2006}.
As shown in \cite{frangakisPrincipalStratificationCausal2002,hudgensCausalVaccineEffects2006}, the set of participants $\{i \mid S_i = 1, Z_i = 1\}$ is different from the set of participants $\{i \mid S_i = 1, Z_i = 0\}$, which invalidates the contrast as a causal quantity. 
For a comparison to have a causal interpretation, the \emph{only} systematic difference between the quantities being compared may be the difference in treatment, as in the numerator for $\mathrm{VE}_S$: $\Exp{S_i(0) - S_i(1)}$.
We show in the Appendix that the numerator of $\mathrm{VE}_I^{\mathrm{net}}$ contains a true causal estimand $\Exp{Y_i(0,1) - Y_i(1,1) \mid S_i^{P_0} = (1,1)}$, but is contaminated by differences in baseline risk of severe illness under treatment and vaccination among principal strata.

The $(1, 1)$ stratum, otherwise known as the always-infected stratum, is the only stratum in which a causal comparison can be made.
This is because it is the only stratum in which participants have a well-defined post-infection outcome under vaccination and under placebo.
Thus, as in \cite{hudgensCausalVaccineEffects2006}, let the vaccine efficacy against severe illness when $N_z=2$ be defined as:
\begin{definition}[Vaccine efficacy against post-infection outcome]\label{defn:ve-i}
  \[\mathrm{VE}_{I} = \Exp{Y_i(0) - Y_i(1) \mid S^{P_0}_i = (1,1)}/\Exp{Y_i(0) \mid S^{P_0}_i = (1,1)}.\]
\end{definition}
$\mathrm{VE}_{I}$ is a principal effect as defined in \cite{frangakisPrincipalStratificationCausal2002} because it is conditional on a principal stratum.

In the two-arm trial, only one stratum admits a well-defined post-infection causal estimand, but when $N_z > 2$ principal effects can be constructed in several strata.
Let the set of principal strata that admit well-defined principal effects be:
$$
\mathcal{U} \equiv \{u \in \mathcal{S} \mid \textstyle{\sum}_{\ell=1}^{2^{N_z}} u_\ell > 1\}.
$$
Then for any $u \in \mathcal{U}$ we may define the following principal effect for any two treatments $z_j$ and $z_k$ such that $u_j u_k = 1$:
\begin{definition}[Vaccine efficacy against post-infection outcome $Y$ for multiarm studies]\label{defn:ve-i-multi}
  \[\mathrm{VE}^u_{I,jk} = \Exp{Y_i(z_k) - Y_i(z_j) \mid S^{P_0}_i = u}/\Exp{Y_i(z_k) \mid S^{P_0}_i = u}.\]
\end{definition}
Because there are now several principal strata in which a comparison between  $z_j$ and $z_k$ can be made, we add a superscript $u$ to denote the stratum in which the comparison is being made.

To give a concrete example about how one might use the expanded definition of vaccine efficacy against severe illness, we will use \cite{montoComparativeEfficacyInactivated2009} as an example.
\cite{montoComparativeEfficacyInactivated2009} treated the four-arm trial as a three-arm trial by combining the two separate placebo arms into one unified arm.
Given that both placebo arms received inert treatments, albeit via different routes of administration, this is a reasonable assumption.
The aim of the trial was to measure the absolute and relative efficacies against symptomatic influenza; thus, it may be of interest to infer the relative efficacy against severe illness given influenza infection for the two competing vaccines.
The following causal estimand captures this effect, where $z_1$ is the placebo, $z_2$ is the live-attenuated vaccine, and $z_3$ is the inactivated vaccine:
\begin{equation}\label{eq:ve-ests}
  \frac{\Exp{Y_i(z_2)\mid S^{P_0}_i = (1,1,1)} - \Exp{Y_i(z_3)\mid S^{P_0}_i = (1,1,1)}}{\Exp{Y_i(z_1)\mid S^{P_0}_i = (1,1,1)}} = \mathrm{VE}^{(1,1,1)}_{I,31} - \mathrm{VE}^{(1,1,1)}_{I,21}.
\end{equation}

The fundamental problem of causal inference \citep{holland_statistics_1986}, namely that we observe only one counterfactual outcome for each individual, prevents the development of a simple ratio estimator.
Indeed, \cite{hudgensCausalVaccineEffects2006} show that $\mathrm{VE}_I$ is not identifiable.
In the next section, we will show how taking into account the extra information available in multisite randomized VE trials may yield identifiability.

\subsection{Incorporating study-site and covariate information} \label{subsec:nonmono-multiple-groups}


Enrollment criteria for vaccine efficacy trials typically stipulate that patients with a history of infection or a recent infection with the pathogen of interest are excluded from the trial. 
Furthermore, multisite studies typically rely on study sites themselves to recruit patients for the trial, which mean that site-specific full patient accrual may occur at different times \citep{weinberger_multisite_2001}.
Because enrollment is conditional on the patient-specific lack of infection and intersite patient enrollment may systematically differ, patients at different study sites can have different times at risk for infection prior to the study.
To the extent there is patient-level frailty in time to infection, this can lead to patients who are enrolled later in the study as being less frail than those enrolled earlier.
It is reasonable to assume that patient frailty is related to principal strata, and thus it is reasonable to assume that study site is not independent of principal strata given covariates.

Another reason to expect that the distribution of principal strata is not independent of study site is variation in exposure to the pathogen of interest between study sites.
Variation in disease prevalence during the trial can lead to variation in rates of infection, and, subsequently, principal strata at the study site.
Multi-scale spatial variation of disease prevalence is a hallmark of infectious disease \citep{bauerStratifiedSpaceTime2018}.
Thus, if study sites are sufficiently separated geographically, it is reasonable to expect that the study site variable is predictive of exposure during the duration of the trial.
This variation in exposure should lead to variation in principal strata due to differences in exposure.
\begin{assumption}[Study site relevance]\label{cond:r-relevant}
  $S^{P_0}_i$ is conditionally dependent on study site affiliation and baseline covariates, or $R_i \not \indy S^{P_0}_i \mid X_i$. 
\end{assumption}
This is a common assumption in multi-site principal stratification modeling, but it is one that is especially reasonable in vaccine efficacy trials.
See \cite{yuan_identifying_2019} and references therein.

Another key feature of vaccine efficacy trials is that patients' susceptibility to the pathogen of interest is assessed via comprehensive baseline measurements of correlates of protection.
For example, in influenza trials it is common to measure the pre-season, pre-vaccination (i.e. baseline) influenza antibody concentrations via hemagglutination inhibition (HAI) assays or neuraminidase inhibition (NAI) assays against different strains of influenza \citep{montoComparativeEfficacyInactivated2009}.
These assays are categorical measurements generated from serial two-fold dilutions of patient serum samples; high titer values are associated with lower infection risk, and can be considered surrogates for past infections and/or past influenza vaccinations \citep{zelner_effects_2019}.
Given the fact that the participants will be inoculated against influenza, it is reasonable that the measurements of these values are not independent of principal strata.
We will call these measurements $A_i$, and assume that $A_i \not \indy S^{P_0}_i \mid X_i$.
The structure of multisite randomized trials is such that we make two further assumptions about the joint distribution of covariate values $A_i$ and potential post-infection outcomes $(Y_i(z_1), \dots, Y_i(z_{N_z}))$.
Formally,
\begin{assumption}[Covariate homogeneity]\label{cond:a-indy}
  $A_i$ is conditionally independent of the study site and treatment assignment given the principal stratum and baseline covariates, or $A_i \indy R_i, Z_i \mid S^{P_0}_i, X_i$, {\rm and}
\end{assumption}
\begin{assumption}[Causal Homogeneity]\label{cond:homogeneity}
  Conditional on principal stratum $S^{P_0}_i, A_i$, and $X_i$ the potential outcomes $(Y_i(z_1),\dots, Y_i(z_{N_z}))$ are independent of $R_i$, or $(Y_i(z_1),\dots, Y_i(z_{N_z})) \indy R_i \mid S^{P_0}_i, A_i, X_i$.
\end{assumption}
These two assumptions are crucial for the nonparametric identifiability of vaccine efficacy against post-infection outcomes.
Both assumptions depend on the validity of the measurements $A_i$ being a good proxy for infection risk frailty and post-infection outcome frailty conditional on principal stratum.

\Cref{cond:a-indy} is equivalent to assuming that individuals' covariate measurements $A_i$ are exchangeable within strata defined by $(X_i = x, S_i^{P_0} = u)$; it is thus similar to the assumption of no unmeasured confounders in observational trials.
The randomization of treatment assignment assures the independence of $A_i$ and $Z_i$.
The conditional independence of $A_i$ and $R_i$ given other baseline covariates and principal stratum means that $A_i$ must measure baseline biological predisposition to infection and predisposition for vaccine efficacy against infection; if these characteristics are captured by $A_i$, then conditional on principal stratum, or the joint measurement of infection under placebo and infection under treatment, there should be no variation between study sites.
The independence assumption may be characterized as $A_i$ being a measure of susceptibility to infection, while $R_i$ affects the frailty of infection conditional on being susceptible.
A priori, there is no reason to assume that susceptibility to infection, conditional on principal stratum varies by study site.
\citep{longiniFrailtyMixtureModel1996,aalen_survival_2008}.

\Cref{cond:homogeneity} is equivalent to assuming conditional exchangeability of $Y_i(z)$ within strata defined by $(X_i = x, S_i^{P_0} = u, A_i = k)$ \citep{saarela_exchangeability}.
\Cref{cond:homogeneity} is commonly deemed reasonable in vaccine efficacy trials \citep{tsiatis_estimating_2022}, and is justifiable given the conditioning on $A_i$ and other pretreatment covariates.
We may loosen \Cref{cond:a-indy} by employing a parametric model for $A_i \mid R_i, S^{P_0}, X_i$.
This is discussed more in \Cref{subsec:sens}.

Now we will define the joint distribution of influenza test results, reported severe illness, and pre-season antibody concentration given treatment assignment, study site membership, and baseline covariates under \Crefrange{cond:nondiff}{cond:homogeneity}.

Let $\theta^{r,x}_{u} = P(S^{P_0}_i = u \mid R_i = r, X_i = x)$ where $u \in \{0,1\}^{N_z}$, and $\beta^{u,x}_{j,k} =  P(Y_i(z_j) = 1 \mid S^{P_0}_i = u, A_i = k, X_i = x)$.
Let $u_j = P(S_i(z_j) = 1 \mid S_i^{P_0} = u)$.
Then $\beta_{j,k}^{u,x}$ is only defined for $j$ such that $u_j = 1$.
Let $a^{u,x}_{k} = P(A_i = k \mid S^{P_0}_i = u, X_i = x)$.
Further, recall $\mathrm{sn}_S = P(\tilde{S}_i = 1 \mid S_i = 1), \mathrm{sp}_S = P(\tilde{S}_i = 0 \mid S_i = 0)$ and $\mathrm{sn}_Y = P(\tilde{Y}_i = 1 \mid Y_i = 1), \mathrm{sp}_Y = P(\tilde{Y}_i = 0 \mid Y_i = 0)$.
Let the probability of observing an infection test result $s$ and a preseason antibody titer level $k$ given treatment assignment $z$, study site membership $r$ and pretreatment covariates $x$, or $q_{sk\mid zrx} = P(\tilde{S}_i = s, A_i = k \mid Z_i = z, R_i = r, X_i = x)$, be defined as:
\begin{align*}
  q_{sk\mid jrx} & = \mathrm{sn}_S^s(1 - \mathrm{sn}_S)^{1 - s} \textstyle\sum_{u \mid u \in \mathcal{S}, u_j = 1}a^{u,x}_k\theta^{r,x}_{u}
                   + \mathrm{sp}_S^{1-s}(1 - \mathrm{sp}_S)^{s}  \textstyle\sum_{u \mid u \in \mathcal{S}, u_j = 0}a^{u,x}_k\theta^{r,x}_{u}.
\end{align*}

Similarly, we define the probability of observing a reported severe illness outcome $y$ given treatment assignment $z$, study site membership $r$, and pretreatment covariates $x$, or $q_{y \mid kzrx} = P(\tilde{Y}_i = y \mid Z_i = z, R_i = r, A_i = k, X_i = x)$, as:
\begin{align*}
  q_{y \mid kjrx} & = \mathrm{sn}_Y^y(1 - \mathrm{sn}_Y)^{1 - y} \textstyle\sum_{u \mid u \in \mathcal{S}, u_j = 1} \beta^{u,x}_{j,k}\theta^{r,x}_{u} \\
                  &  + \mathrm{sp}_Y^{1-y}(1 - \mathrm{sp}_Y)^{y} (\textstyle\sum_{u \mid u \in \mathcal{S}, u_j = 1} (1 - \beta^{u,x}_{j,k})\theta^{r,x}_{u} + \textstyle\sum_{u \mid u \in \mathcal{S}, u_j = 0}\theta^{r,x}_{u})
\end{align*}

In designing a clinical trial in which a primary or secondary endpoint is severe illness, limiting the definition of severe illness to encompass only the most extreme illness can at once increase the sensitivity and specificity of reporting.
In the next section we will show which model parameters are identifiable from observed data, as well as showing that this identifies $\mathrm{VE}^u_{I,jk}$.

\subsubsection{Identifiability of expanded model}

Now we will show that the joint variation in observed antibody concentrations and infection rates across study sites identifies the joint distribution of principal strata proportions and covariate values by study site under Assumptions \Crefrange{cond:nondiff}{cond:homogeneity}.
Then, given sufficient variation in principal strata proportions between study sites, the distribution of post-infection potential outcomes can be identified as well, along with $\mathrm{VE}^u_{I,jk}$, the VE against post-infection outcomes in principal stratum $u$ for treatments $z_j$ (assuming $z_k$ is the placebo).

The proof depends on representing the observed distribution $q_{sk \mid jrx}$ for a fixed $x$ as a $3$-way array and using a modified tensor decomposition uniqueness theorem from \cite{kruskalThreewayArraysRank1977}.
\citeauthor{kruskalThreewayArraysRank1977}'s theorem defines sufficient conditions for the uniqueness of the \emph{triple product} decomposition of $L$, where this product is defined in \Cref{defn:triple-prod}.
\begin{definition}[Array triple product]\label{defn:triple-prod}
  Let the array triple product with resulting array $L \in \R^{I \times J \times K}$ be defined between matrices $A \in \R^{I \times M}$,
  $B \in \R^{J \times M}$, $C \in \R^{K \times M}$.
  The operation is represented as $L = [A, B, C]$. 
  As a result, the $(i,j,k)^\mathrm{th}$ element of $L$, $L_{ijk}$, is defined the sum of three-way-products of elements $a_{im}, b_{jm}, c_{km}$, i.e.:
  $$
  L_{ijk} = \sum_{m=1}^M a_{im}b_{jm}c_{km}.
  $$
\end{definition}
The sufficient conditions concern the \emph{Kruskal ranks} of the matrices $A, B, C$, defined in \Cref{defn:krank}.
\begin{definition}[Kruskal rank]\label{defn:krank}
  Let the Kruskal rank of a matrix $B \in \R^{I \times M}$ be $k_B \in [0, 1, 2, \dots, M]$, and let $k_B$ be the maximum integer such that every set of $k_B$ columns of $B$ are linearly independent.
\end{definition}
Kruskal rank is stricter than matrix rank. 
To see why, consider a matrix with $M$ columns of which two are repeated. 
The rank of the matrix is at most $M-1$, but Kruskal rank is at most $1$.
A corollary of the definition is that if a matrix is full column rank, its Kruskal rank equals its column rank.

Let $L$ be the $3$-way array representing $q_{sk \mid zrx}$, where we fix $X_i = x$ for each unique value of $X_i$. The array's dimensions are $2N_z \times N_a \times N_r$ and is defined so that the $(j,k,r)^\mathrm{th}$ element $P(S_i = \ind{j \leq N_z}, A_i = k \mid Z = z_{j - 1 \bmod N_z + 1}, R_i = r, X_i = x)$.
If we look at the matrix that results from fixing the third array index, also known as the \emph{3-slab} and denoted as $L_r \in \R^{2N_z \times N_a}$, we can see a possible decomposition of this array.
Let $\sum_{u_z=s}{a^{u,x}_k} \theta_u^{r,x}$ denote the sum over elements $u \in \mathcal{S}$ such that $P(S_i(z) = 1 \mid S_i^{P_0} = u)$.
The let $L_r$ be defined as 
{\scriptsize\makeatletter\setlength\BA@colsep{1pt}\makeatother
\begin{align*}
  \begin{blockarray}{cccc}
    (a = 1) & \dots & (a = N_a)  \\ 
    \begin{block}{[ccc]c}
      (1 - \mathrm{sp}_S)\sum_{u_{z_1} = 0}{a^{u,x}_1} \theta_u^{r,x} +  \mathrm{sn}_S\sum_{u_{z_1} = 1}{a^{u,x}_1} \theta_u^{r,x} & \dots & (1 - \mathrm{sp}_S)\sum_{u_{z_1} = 0}{a^{u,x}_{N_a}} \theta_u^{r,x} +  \mathrm{sn}_S\sum_{u_{z_1} = 1}{a^{u,x}_{N_a}} \theta_u^{r,x} & (s=1,z=z_1) \\
      (1 - \mathrm{sp}_S)\sum_{u_{z_2} = 0}{a^{u,x}_{1}} \theta_u^{r,x} +  \mathrm{sn}_S\sum_{u_{z_2} = 1}{a^{u,x}_{1}} \theta_u^{r,x}  & \dots & (1 - \mathrm{sp}_S)\sum_{u_{z_2} = 0}{a^{u,x}_{N_a}} \theta_u^{r,x} +  \mathrm{sn}_S\sum_{u_{z_2} = 1}{a^{u,x}_{N_a}} \theta_u^{r,x}  & (s=1,z=z_2) \\
      \vdots & \ddots & \vdots & \vdots \\ 
      (1 - \mathrm{sp}_S)\sum_{u_{z_{N_z}} = 0}{a^{u,x}_{1}} \theta_u^{r,x} +  \mathrm{sn}_S\sum_{u_{z_{N_z}} = 1}{a^{u,x}_{1}} \theta_u^{r,x}  & \dots & (1 - \mathrm{sp}_S)\sum_{u_{z_{N_z}} = 0}{a^{u,x}_{N_a}} \theta_u^{r,x} +  \mathrm{sn}_S\sum_{u_{z_{N_z}} = 1}{a^{u,x}_{N_a}} \theta_u^{r,x}  & (s=1,z=z_{N_z}) \\
      \mathrm{sp}_S\sum_{u_{z_1} = 0}{a^{u,x}_1} \theta_u^{r,x} +  (1 - \mathrm{sn}_S)\sum_{u_{z_1} = 1}{a^{u,x}_1} \theta_u^{r,x}      & \dots &  \mathrm{sp}_S\sum_{u_{z_1} = 0}{a^{u,x}_{N_a}} \theta_u^{r,x} +  (1 - \mathrm{sn}_S)\sum_{u_{z_1} = 1}{a^{u,x}_{N_a}} \theta_u^{r,x} & (s=0,z=z_1) \\
      \mathrm{sp}_S\sum_{u_{z_2} = 0}{a^{u,x}_{1}} \theta_u^{r,x} +  (1 - \mathrm{sn}_S)\sum_{u_{z_2} = 1}{a^{u,x}_{1}} \theta_u^{r,x}      & \dots & \mathrm{sp}_S\sum_{u_{z_2} = 0}{a^{u,x}_{N_a}} \theta_u^{r,x} +  (1 - \mathrm{sn}_S)\sum_{u_{z_2} = 1}{a^{u,x}_{N_a}} \theta_u^{r,x}& (s=0,z=z_2) \\
      \vdots & \ddots & \vdots & \vdots \\ 
      \mathrm{sp}_S\sum_{u_{z_{N_z}} = 0}{a^{u,x}_{1}} \theta_u^{r,x} +  (1 - \mathrm{sn}_S)\sum_{u_{z_{N_z}} = 1}{a^{u,x}_{1}} \theta_u^{r,x}      & \dots & \mathrm{sp}_S\sum_{u_{z_{N_z}} = 0}{a^{u,x}_{N_a}} \theta_u^{r,x} +  (1 - \mathrm{sn}_S)\sum_{u_{z_{N_z}} = 1}{a^{u,x}_{N_a}} \theta_u^{r,x}& (s=0,z=z_{N_z}) \\
    \end{block}
  \end{blockarray}
\end{align*}
}
We can define matrices encoding the distribution of principal strata by study site, and the distribution of pre-season titers by principal stratum.
In order to define these matrices such that they share a common ordering along the axes defined by the principal strata, we shall use the the natural ordering of the binary vectors $S_i^{P_0}$: the base-10 representation of the principal stratum.
In order to formalize this ordering, we define a map, $\varpi_m(j)$, which generates the $m$-digit binary representation of the integer $j$ as a length-$m$ binary vector.
We also define its inverse, $\varpi_m(u)^{-1}$, where $u$ is an element of $\mathcal{S}$.
\begin{definition}[Base-10 to binary map]\label{defn:bin-map}
  Let the operator $\varpi_m$ be defined as $\varpi_m(\cdot): j \to \{0,1\}^m, j \in \mathbb{N}, j \leq 2^{m}-1$ with elements $\varpi_m(j)_i \in \{0,1\}$, so $\varpi_m(j)$ is the base-2 representation of $j$ with $m$ digits represented as a binary $m$-vector.
  The binary representation is indexed so the $i^{\mathrm{th}}$ element of the vector corresponds to the digit for $2^{i-1}$.
  Let the inverse operator $\varpi_m^{-1}(\cdot):\{0,1\}^{m} \to j$, or the binary to base-10 conversion. 
  Let digit $i$ of $\varpi_m(\cdot)$ represent the digit for $2^{i - 1}$.
\end{definition}
For example, $\varpi_3(4) = (0,0,1)$, $\varpi_5(4) = (0,0,1,0,0)$, and $\varpi_3((0,0,1))^{-1} = \varpi_5((0,0,1,0,0))^{-1} = 4$.
In order to see how $\mathcal{S}$ is ordered, suppose that $N_z = 3$ so $\mathcal{S} \equiv \{0,1\}^3$. Then the third and fourth elements of the ordered set of principal strata are $(0,1,0)$ and $(1,1,0)$ respectively.

With this ordering defined, let the matrix $P_{N_z}^x(A \mid S^{P_0})$ in $\R^{N_a \times 2^{N_z}}$ encode the distributions $A_i \mid S^{P_0}_i, X_i$
with $(k,j)^\mathrm{th}$ element $P_{N_z}(A_i = k\mid S^{P_0}_i = \varpi_{N_z}(j-1), X_i = x)$. 
Let the matrix $P_{N_z}^x(S^{P_0} \mid R)$ in $\R^{2^{N_z} \times N_r}$ encode the distribution $S^{P_0}_i \mid R_i, X_i$ with $(k,j)^\mathrm{th}$ element 
$P_{N_z}^x(S^{P_0}_i = \varpi_{N_z}(k-1)\mid R_i = j, X_i = x)$. 

Finally, let matrix $P_{N_z}(\tilde{S} \mid Z, S^{P_0}) \in \R^{2N_z \times 2^{N_z}}$ with $(k,j)^\mathrm{th}$ element
\begin{align*}
\mathrm{sn}_S^{\varpi_{N_z}(j-1)_k}(1 - \mathrm{sp}_S)^{1 - \varpi_{N_z}(j-1)_k} \ind{k \leq N_z} + (1-\mathrm{sn}_S)^{\varpi_{N_z}(j-1)_{k-N_z}}\mathrm{sp}_S^{1 - \varpi_{N_z}(j-1)_{k-N_z}} \ind{k > N_z}.
\end{align*}

Let $P_{N_z}^x(S^{P_0} \mid R = r)$ be the $r^\mathrm{th}$ column of matrix $P_{N_z}^x(S^{P_0} \mid R)$. Then $L_{r}$ can be represented in matrix form as
$$
L_r = P_{N_z}(\tilde{S} \mid Z, S^{P_0})\mathrm{diag}(P_{N_z}^x(S^{P_0} \mid R = r))  P_{N_z}^x(A \mid S^{P_0})^T
$$
This structure again allows us to define $L$ as the triple product of these three matrices, each of which have columns that correspond to principal strata:
$$ L = [P_{N_z}(\tilde{S} \mid Z, S^{P_0}),P_{N_z}^x(A \mid S^{P_0}), P_{N_z}^x(S^{P_0} \mid R)^T].$$ 

The conditions for the identifiability of the model parameters are outlined below:
\begin{theorem}\label{thm:id-noisy}
Let $N_z \geq 2$. 
Suppose \Crefrange{cond:SUTVA}{cond:homogeneity} hold.
If both $\mathrm{sn}_S, \mathrm{sp}_S$ lie in $[0, 1/2)$ or both lie in $(1/2, 1]$, $P_{N_z}^x(A \mid S^{P_0})$ is at least Kruskal rank $2^{N_z} - 1$ and $P_{N_z}^x(S^{P_0} \mid R)$ is rank $2^{N_z}$ for all $x$ then the counterfactual distributions $P(S^{P_0}_i = u \mid R_i = r, X_i = x)$,  $P(A_i = k \mid S^{P_0}_i = u, X_i = x)$ are identifiable as are the quantities $\mathrm{sn}_S, \mathrm{sp}_S, \mathrm{sp}_Y, \mathrm{VE}^u_{I,jk}(k)$, and $ \mathrm{VE}^u_{I,jk}$.
Furthermore, if $\mathrm{sn}_Y$ is unknown (known), distributions $P(Y_i(z_j) = 1 \mid S^{P_0}_i = u, A_i = k, X_i = x)$ are identifiable up to an unknown (known) common constant, $r_Y = \mathrm{sn}_Y + \mathrm{sp}_Y - 1$.  
\end{theorem}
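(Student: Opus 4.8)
The plan is to reduce to \Cref{thm:id-general-z} by first pinning down the misclassification parameters and then ``deconvolving'' the measurement error. Observe that the map from the noiseless cell probabilities $(p_{11kjr},p_{10kjr},p_{0*kjr})$ to the noisy ones $(q_{00kjr},q_{01kjr},q_{10kjr},q_{11kjr})$ is linear, $q=Mp$ with $M\in\R^{4\times 3}$ built from $\mathrm{sn}_S,\mathrm{sp}_S,\mathrm{sn}_Y,\mathrm{sp}_Y$, and $M$ admits a left inverse whenever $\mathrm{sn}_S+\mathrm{sp}_S\neq 1$ and $\mathrm{sn}_Y+\mathrm{sp}_Y\neq 1$. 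So once those four quantities are identified, up to the stated $r_Y=\mathrm{sn}_Y+\mathrm{sp}_Y-1$ ambiguity, one recovers the noiseless observed-data distribution and invokes \Cref{thm:id-general-z}.

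I would carry this out in four steps. First, I would identify $\mathrm{sn}_S,\mathrm{sp}_S$ together with the noiseless infection/covariate distribution. Marginalizing $q_{sykzr}$ over $\tilde Y_i$ and using \Cref{cond:nondiff-s,cond:nondiff-y,cond:indy-tests} gives the affine relation
\[
P(\tilde S_i=1,A_i=k\mid Z_i=z_j,R_i=r)=(1-\mathrm{sp}_S)\,a_{kr}+r_S\,h_{1kjr},
\]
where $a_{kr}=P(A_i=k\mid R_i=r)$ is observed, $r_S=\mathrm{sn}_S+\mathrm{sp}_S-1$, and $h_{1kjr}=\sum_{u:u_j=1}a^u_k\theta^r_u$. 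Treating $\tilde S_i$ as a proxy for the latent principal stratum, the array $\{P(\tilde S_i=s,A_i=k\mid Z_i=z_j,R_i=r)\}$ has a structured three-way form amenable to the decomposition machinery of \Cref{thm:id-general-z}; the Kruskal-rank hypothesis on $P_{N_z}(A\mid S^{P_0})$ and the full column rank of $P_{N_z}(S^{P_0}\mid R)$ force the one-parameter family of triples $(\mathrm{sn}_S,\mathrm{sp}_S,\{h_{1kjr}\})$ compatible with the observed margins to collapse to a single point up to a global ``relabel-and-flip'' symmetry, which the restriction $\mathrm{sn}_S,\mathrm{sp}_S\in[0,1/2)^2\cup(1/2,1]^2$ then removes. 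This delivers $\mathrm{sn}_S,\mathrm{sp}_S$, hence $h_{1kjr}$ and $h_{0kjr}=a_{kr}-h_{1kjr}$, and applying \Cref{thm:id-general-z} to these noiseless infection/covariate probabilities yields $\theta^r_u=P(S^{P_0}_i=u\mid R_i=r)$ and $a^u_k=P(A_i=k\mid S^{P_0}_i=u)$.

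Second, I would identify $\mathrm{sp}_Y$. Combining the $\tilde S_i$-margin, the $\tilde Y_i$-margin, and the $(\tilde S_i,\tilde Y_i)=(1,1)$ cell of $q$ yields
\[
q_{11kjr}-\mathrm{sn}_S\,(q_{01kjr}+q_{11kjr})=-(1-\mathrm{sp}_Y)\,r_S\,h_{0kjr},
\]
and since $r_S\neq 0$ (by the sign restriction) and $h_{0kjr}>0$ for some $(k,j,r)$ (by $\mathrm{rank}\,P_{N_z}(S^{P_0}\mid R)=2^{N_z}$), this solves for $\mathrm{sp}_Y$; equality across all admissible $(k,j,r)$ is a testable implication. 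Third, I would identify the outcome contrasts: from the $\tilde Y_i$-margin, $r_Y\,p_{11kjr}=(q_{01kjr}+q_{11kjr})-(1-\mathrm{sp}_Y)\,a_{kr}$ is now known, with $p_{11kjr}=\sum_{u:u_j=1}a^u_k\theta^r_u\beta^u_{j,k}$, so feeding $r_Y\,p_{11kjr}$ into the outcome step of \Cref{thm:id-general-z} — a linear solve against the now-known $\theta^r_u$ followed by division by the known positive $a^u_k$ — returns $r_Y\,\beta^u_{j,k}$. The conditional and marginal vaccine-efficacy estimands are ratios of such quantities, so the common factor $r_Y$ cancels and they are identified; if $\mathrm{sn}_Y$ is known then $r_Y$ is known and $\beta^u_{j,k}=P(Y_i(z_j)=1\mid S^{P_0}_i=u,A_i=k)$ is identified outright, otherwise only up to $r_Y$.

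The hard part will be the first step: showing that the affine family of misclassification-corrected infection distributions compatible with the observed margins degenerates to a single point. This is where the extension of the Three-Way Array Decomposition Uniqueness Theorem and the rank-$2^{N_z}$ condition on $P_{N_z}(S^{P_0}\mid R)$ must be brought to bear, and where the sign restriction on $(\mathrm{sn}_S,\mathrm{sp}_S)$ is needed to break the residual symmetry that relabels the principal strata while flipping the infection test. Everything downstream of that step is the same linear-algebra machinery already used for \Cref{thm:id-general-z}, now propagated through the extra misclassification layer.
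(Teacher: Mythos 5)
Your overall architecture mirrors the paper's: form the three-way array of $(\tilde S_i, A_i)$ given $(Z_i,R_i)$, identify the factor involving $\mathrm{sn}_S,\mathrm{sp}_S$ together with $P_{N_z}(S^{P_0}\mid R)$ and $P_{N_z}(A\mid S^{P_0})$ via Kruskal-type uniqueness with the half-interval condition killing a permutation symmetry, then recover $\mathrm{sp}_Y$ and $r_Y\beta^u_{j,k}$ by a linear solve against the now-known $\theta^r_u$ (your identities $q_{11kjr}-\mathrm{sn}_S(q_{01kjr}+q_{11kjr})=-(1-\mathrm{sp}_Y)r_S\,p_{0*kjr}$ and $(q_{01kjr}+q_{11kjr})-(1-\mathrm{sp}_Y)a_{kr}=r_Y p_{11kjr}$ are algebraically correct, and the cancellation of $r_Y$ in the efficacy ratios matches the paper's conclusion). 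However, there is a genuine gap exactly where you flag ``the hard part'': you never establish uniqueness of the decomposition when the infection-misclassification factor is unknown, and this cannot be obtained by ``invoking \Cref{thm:id-general-z}'' or its machinery as you suggest. The uniqueness lemma underlying \Cref{thm:id-general-z} (\Cref{lemma:triple-prod}) requires the first factor to be a \emph{fixed, known} matrix, which holds for the noiseless $0/1$ matrix $P_{N_z}(S\mid Z,S^{P_0})$ but fails for $P_{N_z}(\tilde S\mid Z,S^{P_0})$, whose entries depend on the unknown $\mathrm{sn}_S,\mathrm{sp}_S$.

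The paper's proof supplies precisely the missing content: (i) new rank computations showing that $P_{N_z}(\tilde S\mid Z,S^{P_0})$ has Kruskal rank $3$ and rank $N_z+1$ whenever $\mathrm{sn}_S+\mathrm{sp}_S\neq 1$ (\Cref{lemma:p_s_tilde-gen-z}, \Cref{lemma:rank_p_s_tilde-gen-z}, proved by induction on $N_z$); (ii) a different uniqueness result (\Cref{lemma:triple-prod-constrain}) that uses only the row- and column-sum constraints on $P_{N_z}(S^{P_0}\mid R)^T$ and $P_{N_z}(A\mid S^{P_0})$ and yields uniqueness only up to a \emph{common column permutation}; and (iii) the domain-restriction argument (\Cref{lemma:domain}) showing that when $\mathrm{sn}_S,\mathrm{sp}_S$ lie in the same half-interval, no nontrivial column permutation preserves the column domains of $P_{N_z}(\tilde S\mid Z,S^{P_0})$, so the permutation is the identity and identification is strict. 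Your ``relabel-and-flip symmetry'' remark names the right obstruction, but asserting that the rank hypotheses ``force the family to collapse to a single point'' is the theorem's core claim, not an application of existing machinery; also note the ambiguity a priori involves two unknown scalars plus a discrete permutation, not a one-parameter affine family, so a margin-counting argument would not suffice. Your downstream steps (identifying $\mathrm{sp}_Y$ from $p_{0*kjr}>0$ for some cell, solving for $r_Y a^u_k\beta^u_{j,k}$ via full row rank of $P_{N_z}(S^{P_0}\mid R)$, and taking ratios) are fine once step one is repaired, and in fact working with the joint $q$'s there is a clean variant of the paper's Moore--Penrose step; but as written the proposal leaves the central identification argument unproven.
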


Given that the marginal distribution of reported severe illness, or $P(\tilde{Y}_i(z_j) = 1 \mid S^{P_0}_i = u, A_i = k, X_i = x)$ is identified, along with $\mathrm{sp}_Y$, we can write the estimand of interest, the vaccine efficacy against severe illness within the always-infected stratum explicitly.
To see why, note that for any latent binary random variable $Q$ and its measurement, $\tilde{Q}$ with associated sensitivity $\mathrm{sn}_Q$ and specificity $\mathrm{sp}_Q$, we have the following identity:
\begin{align}\label{eq:id-binary}
  P(Q = 1) = \frac{P(\tilde{Q} = 1) - (1 - \mathrm{sp}_Q)}{\mathrm{sn}_Q + \mathrm{sp}_Q - 1}.
\end{align}
Employing \Cref{eq:id-binary} to \Cref{defn:ve-i-multi} yields:
\begin{align*}
  \mathrm{VE}^u_{I,jk} = \frac{\Exp{\tilde{Y}_i(z_k) - \tilde{Y}_i(z_j)  \mid S^{P_0}_i = u}}{\Exp{\tilde{Y}_i(z_k) \mid S^{P_0}_i = u} - (1 - \mathrm{sp}_Y)}.
\end{align*}
Thus $\mathrm{VE}^u_{I,jk}$ is point identified by observed data without needing to identify $\mathrm{sn}_Y$.
This estimand marginalizes over the population distribution of $X_i$, which may be known, or may be estimated.

Furthermore, the identifiability of the conditional counterfactual distributions $P(\tilde{Y}_i(z_j) = 1 \mid S^{P_0}_i = u, A_i = k, X_i = x)$ allows for causal effect heterogeneity by covariate $A_i$. 
\begin{definition}[Conditional VE against post-infection outcome $Y$]\label{defn:hetero-treat}
  \[\mathrm{VE}_{I,jk}(\ell) = \frac{\Exp{\tilde{Y}_i(z_k) - \tilde{Y}_i(z_j)  \mid S^{P_0}_i = u, A_i = \ell}}{\Exp{\tilde{Y}_i(z_k) \mid S^{P_0}_i = u, A_i = \ell} - (1 - \mathrm{sp}_Y)}.\]
\end{definition}
Again this estimand marginalizes over $X_i$.

\Cref{thm:id-noisy} allows for a more realistic model of infection measurement than \cite{hudgensCausalVaccineEffects2006} and does not require any restrictions on the space of principal strata. 
The primary benefit of an unrestricted principal strata distribution is that we can jointly infer vaccine efficacy against infection and vaccine efficacy against a post-infection outcome.
This will aid in designing comprehensive randomized trials for vaccine efficacy.

The proof of \Cref{thm:id-noisy}, shown in \Cref{proof:id-noisy} is related to the methods in \cite{jiangPrincipalCausalEffect2016} and \cite{dingIdentifiabilityEstimationCausal2011}.
\cite{dingIdentifiabilityEstimationCausal2011} addresses problems of identifiability in survivor average treatment effects, which is mathematically analogous to vaccine efficacy for post-infection outcomes, by measuring covariates that are related to the principal strata.
\cite{jiangPrincipalCausalEffect2016} identifies principal causal effects in binary surrogate endpoint evaluations. Despite not being mathematically identical to vaccine efficacy, binary surrogacy endpoint evaluation is ultimately a problem in identification of principal causal effects.
Most importantly, the proof does not require any restrictions on the distribution of post-infection outcomes.
This makes the result applicable to categorical post-infection outcomes like symptom scores, or continuous outcomes like viral load.
More broadly, our method is applicable to principal stratification problems outside the scope of vaccine efficacy. 

The identifiability results in \Cref{thm:id-noisy} suggest the following so-called transparent parameterization\footnote{See \citep{gustafson2015bayesian} for more details on inference in partially identified Bayesian models}: $(\beta^{u,x}_{j,k}, \mathrm{sp}_Y,\mathrm{sn}_Y) \to (\tilde{p}^{u,x}_{j,k} = (\mathrm{sn}_Y + \mathrm{sp}_Y - 1) \beta_{j,k}^{u,x} + (1 - \mathrm{sp}_Y), \mathrm{sp}_Y, \mathrm{sn}_Y)$ .
The quantities 
$
\tilde{p}^{u,x}_{j,k} = P(\tilde{Y}_i = 1 \mid Z_i = z_j, S_i^{P_0} = u, A_i = k)
$
and $\mathrm{sp}_Y$ are identified by the data, while $\mathrm{sn}_Y$ is not.
This yields the following asymptotic identification regions for $\mathrm{sn}_Y$ and $\beta_{j,k}^{u,x}$:
\begin{align}
    \mathrm{sn}_Y \in \lp \max_{x,u,j,k}(\tilde{p}^{u,x}_{j,k}) ,  1\rp, \quad \beta_{j,k}^{u,x} \in \lp\frac{\tilde{p}^{u,x}_{j,k} - (1 - \mathrm{sp}_Y)}{\mathrm{sp}_Y}, \frac{\tilde{p}^{u,x}_{j,k} - (1 - \mathrm{sp}_Y)}{\max_{x,u,j,k}(\tilde{p}^{u,x}_{j,k}) + \mathrm{sp}_Y - 1} \rp
\end{align}
This may be useful for policymakers interested in absolute risk of post-infection outcomes to forecast the burden on healthcare centers under different vaccination policies.

We will present a final corollary that will be useful in our applied examples:
\begin{corollary}\label{cor:id-noisy-A}
  Suppose in addition to \Crefrange{cond:SUTVA}{cond:homogeneity}, researchers do not directly observe $A_i$, but instead observe a misclassified version of $A_i$, $\tilde{A}_i$, such that the following nondifferential error assumption holds: $\tilde{A}_i \indy \tilde{S}_i, \tilde{Y}_i, Y_i(z_j,S(z_j)), R_i, Z_i, S^{P_0}_i \mid A_i, X_i$.
  If both $\mathrm{sn}_S, \mathrm{sp}_S$ lie in $[0, 1/2)$ or both lie in $(1/2, 1]$, $P_{N_z}^x(\tilde{A} \mid S^{P_0})$ is at least Kruskal rank $2^{N_z} - 1$ and $P_{N_z}^x(S^{P_0} \mid R)$ is rank $2^{N_z}$ for all $x$ then the counterfactual distributions $P(S^{P_0}_i = u \mid R_i = r, X_i = x)$,  $P(\tilde{A}_i = k \mid S^{P_0}_i = u, X_i = x)$ are identifiable as are the quantities $\mathrm{sn}_S, \mathrm{sp}_S, \mathrm{sp}_Y$, and $ \mathrm{VE}^u_{I,jk}$.
  Furthermore, if $\mathrm{sn}_Y$ is unknown (known), distributions $P(Y_i(z_j) = 1 \mid S^{P_0}_i = u, \tilde{A}_i = k, X_i = x)$ are identifiable up to an unknown (known) common constant, $r_Y = \mathrm{sn}_Y + \mathrm{sp}_Y - 1$.  
\end{corollary}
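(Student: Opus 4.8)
\textbf{Proof plan for \Cref{cor:id-noisy-A}.} The plan is to show that $\tilde{A}_i$ can be treated as a bona fide categorical pretreatment covariate satisfying exactly the hypotheses that \Cref{thm:id-noisy} imposes on $A_i$, so that the corollary follows by invoking \Cref{thm:id-noisy} verbatim with $\tilde{A}_i$ in place of $A_i$. Write $\tilde{a}^u_k = P(\tilde{A}_i = k \mid S^{P_0}_i = u)$, $\tilde{\beta}^u_{j,k} = P(Y_i(z_j) = 1 \mid S^{P_0}_i = u, \tilde{A}_i = k)$, and $\mathbf{Y}_i = (Y_i(z_1),\dots,Y_i(z_{N_z}))$. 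First I would show that the observed-data array $q'_{sykzr} = P(\tilde{S}_i = s, \tilde{Y}_i = y, \tilde{A}_i = k \mid Z_i = z, R_i = r)$ has \emph{precisely} the multilinear form of $q_{sykzr}$ in \eqref{eq:misclass-obs-model-multi-site-multi-z}--\eqref{eq:misclass-obs-model-prob-multi-site-multi-z}, but with $(a^u_k, \beta^u_{j,k})$ replaced by $(\tilde{a}^u_k, \tilde{\beta}^u_{j,k})$ and with the same $\theta^r_u, \mathrm{sn}_S, \mathrm{sp}_S, \mathrm{sn}_Y, \mathrm{sp}_Y$. This follows by marginalizing the \Cref{thm:id-noisy} model over $A_i$: decomposition and weak union applied to the new error assumption give $\tilde{A}_i \indy \tilde{S}_i, \tilde{Y}_i \mid A_i, Z_i, R_i$ and $\tilde{A}_i \indy Z_i, R_i \mid A_i$, so $q'_{sykzr} = \sum_{k'} P(\tilde{A}_i = k \mid A_i = k')\, q_{syk'zr}$, and the assumption $\tilde{A}_i \indy \mathbf{Y}_i, S^{P_0}_i \mid A_i$ then lets the weighted sums collapse, $\sum_{k'} P(\tilde{A}_i=k\mid A_i=k') a^u_{k'} = \tilde{a}^u_k$ and $\sum_{k'} P(\tilde{A}_i=k\mid A_i=k') a^u_{k'}\beta^u_{j,k'} = \tilde{a}^u_k \tilde{\beta}^u_{j,k}$, where the latter uses $P(\tilde A_i=k\mid A_i=k',\mathbf Y_i,S^{P_0}_i)=P(\tilde A_i=k\mid A_i=k')$.

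Next I would verify that $\tilde{A}_i$ inherits \Cref{cond:a-indy,cond:homogeneity,cond:nondiff-s,cond:nondiff-y} (while \Cref{cond:indy-tests} does not involve $A_i$ and is unchanged). \Cref{cond:a-indy} for $\tilde{A}_i$, i.e.\ $\tilde{A}_i \indy R_i, Z_i \mid S^{P_0}_i$, follows from $A_i \indy R_i, Z_i \mid S^{P_0}_i$ and $\tilde{A}_i \indy R_i, Z_i \mid A_i, S^{P_0}_i$ by factoring $P(\tilde{A}_i, A_i, R_i, Z_i \mid S^{P_0}_i)$ and summing out $A_i$; the misclassification conditions $\tilde{S}_i \indy Z_i, S^{P_0}_i, R_i, \tilde{A}_i \mid S_i$ and $\tilde{Y}_i \indy Z_i, S^{P_0}_i, R_i, \tilde{A}_i \mid Y_i$ follow from \Cref{cond:nondiff-s,cond:nondiff-y} together with $\tilde{A}_i \indy \tilde{S}_i, \tilde{Y}_i, S_i, Y_i, \ldots \mid A_i$ in the same way (this is also implicit in the derivation of $q'$ above). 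The remaining requirement, \Cref{cond:homogeneity} for $\tilde{A}_i$ — namely $\mathbf{Y}_i \indy R_i \mid S^{P_0}_i, \tilde{A}_i$ — is the one needing the most care, because here $\tilde{A}_i$ sits in the conditioning set rather than on the left of an independence, so it is not obtained by decomposition/weak union alone. I would prove it by factoring $P(\mathbf{Y}_i, \tilde{A}_i, A_i, R_i \mid S^{P_0}_i)$ using $\tilde{A}_i \indy \mathbf{Y}_i, R_i \mid A_i, S^{P_0}_i$, then \Cref{cond:homogeneity} for $A_i$, then $A_i \indy R_i \mid S^{P_0}_i$ (from \Cref{cond:a-indy}), and summing out $A_i$: the result is a product of a function of $(\mathbf{Y}_i, \tilde{A}_i, S^{P_0}_i)$ and a function of $(R_i, S^{P_0}_i)$, which is equivalent to the claimed conditional independence.

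Finally, with every hypothesis of \Cref{thm:id-noisy} verified for the covariate $\tilde{A}_i$ (the Kruskal-rank condition on $P_{N_z}(\tilde{A}\mid S^{P_0})$, the rank condition on $P_{N_z}(S^{P_0}\mid R)$, and the $\mathrm{sn}_S, \mathrm{sp}_S$ sign restriction are assumed directly in the corollary), I would simply apply \Cref{thm:id-noisy} to conclude that $P(S^{P_0}_i = u \mid R_i = r)$, $\tilde{a}^u_k = P(\tilde{A}_i = k \mid S^{P_0}_i = u)$, and $\mathrm{sn}_S, \mathrm{sp}_S, \mathrm{sp}_Y$ are identified, that $\tilde{\beta}^u_{j,k}=P(Y_i(z_j)=1\mid S^{P_0}_i=u,\tilde A_i=k)$ is identified up to the common constant $r_Y = \mathrm{sn}_Y + \mathrm{sp}_Y - 1$ (known iff $\mathrm{sn}_Y$ is), and — since $\sum_k \tilde{a}^u_k \tilde{\beta}^u_{j,k} = P(Y_i(z_j)=1 \mid S^{P_0}_i = u)$ equals the same marginal computed from $(a^u_k,\beta^u_{j,k})$ and the factor $r_Y^{-1}$ cancels in the ratio — that $\mathrm{VE}^u_{I,jk}$ is identified. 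The main obstacle is the conditional-independence bookkeeping for \Cref{cond:homogeneity} under $\tilde{A}_i$; the rest is a direct translation of \Cref{thm:id-noisy}.
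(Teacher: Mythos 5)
Your proposal is correct and takes essentially the same route as the paper: its proof also reduces the corollary to \Cref{thm:id-noisy} by observing that $P(\tilde{A} = k \mid S^{P_0} = u) = \sum_{\ell} P(\tilde{A} = k \mid A = \ell)P(A = \ell \mid S^{P_0} = u)$, so the observed array has the same triple-product form with $\tilde{A}$ in place of $A$, after which it repeats the Moore--Penrose step for $\tilde{Y}$ and the $r_Y$-cancellation argument for $\mathrm{VE}^u_{I,jk}$. Your explicit check that \Cref{cond:homogeneity} is inherited with $\tilde{A}_i$ in the conditioning set is a detail the paper leaves implicit, but it is precisely the justification needed for its displayed decomposition of $P(\tilde{Y} \mid Z, R, \tilde{A} = k)$, so it strengthens rather than departs from the paper's argument.
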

The proof, shown in \Cref{proof:id-noisy-A}, follows directly from the proof of \Cref{thm:id-noisy} and the nondifferential misclassification error assumption for $A$. 

While misclassified $\tilde{A}$ precludes learning heterogeneous treatment effects like in \Cref{defn:hetero-treat}, marginalizing over the identifiable distribution $\tilde{A}_i \mid S^{P_0}, X_i$ will yield the average post-infection vaccine efficacy.

\subsection{Models, priors and sensitivity analyses}\label{subsec:sens}

Under the assumptions laid out in \Cref{sec:multi-arm} the observed data likelihood is multinomial for each study site and treatment group.

Let $\tilde{n}_{syk}(j,r,x)$ be $\sum_{i=1}^n \ind{\tilde{S}_i = s}\ind{\tilde{Y}_i = y}\ind{Z_i = z_j}\ind{R_i = r}\ind{A_i = k}\ind{X_i = x}$, and let the error-free partially-observed causal model probabilities
$p_{syk \mid jrx} = P(S_i = s, Y_i = y, A_i = k \mid Z_i = z_j, R_i = r, X_i = x)$ be defined as:
\begin{align}\label{eq:misclass-obs-model-prob-multi-site-multi-z}
  p_{1yk \mid jrx} = \textstyle\sum_{u \mid u \in \mathcal{S}, u_j = 1}a^{u,x}_k\theta^{r,x}_{u} (\beta^{u,x}_{j,k})^y(1 - \beta^{u,x}_{j,k})^{1-y}, \quad p_{0*k \mid jr} = \textstyle\sum_{u \mid u \in \mathcal{S}, u_j = 0}a^{u,x}_k\theta^{r,x}_{u},
\end{align}
where we note that $p_{01k \mid jrx} = 0$ for all $k,j,r$.
Then we define the observable joint probabilities $q_{syk \mid jrx} = P(\tilde{S}_i = s, \tilde{Y}_i = y, A_i = k \mid Z_i = z_j, R_i = r, X_i = x)$ as
\begin{align*}
  q_{syk \mid jrx} & = \mathrm{sn}_S^s(1 - \mathrm{sn}_S)^{1 - s} \mathrm{sn}_Y^y(1 - \mathrm{sn}_Y)^{1 - y} p_{11kjrx}
              + \mathrm{sn}_S^s(1 - \mathrm{sn}_S)^{1 - s} \mathrm{sp}_Y^{1-y}(1 - \mathrm{sp}_Y)^{y} p_{10kjrx} \\
            & + \mathrm{sp}_S^{1-s}(1 - \mathrm{sp}_S)^{s} \mathrm{sp}_Y^{1-y}(1 - \mathrm{sp}_Y)^{y} p_{0*kjrx},
\end{align*}
This allows us to define the observational model as:

{\scriptsize
  \begin{align}
    \begin{split}\label{eq:misclass-obs-model-multi-site-multi-z}
      (&\tilde{n}_{00 1}(j,r,x),\tilde{n}_{01 1}(j,r,x), \tilde{n}_{101}(j,r,x), \tilde{n}_{111}(j,r,x), \dots, \tilde{n}_{00 N_a}(j,r,x),\tilde{n}_{01 N_a}(j,r,x), \tilde{n}_{10N_a}(j,r,x), \tilde{n}_{11N_a}(j,r,x)) \sim \\
       & \text{Multinomial}(n(j,r,x) \mid q_{001 \mid jrx}, q_{011\mid jrx}, q_{101\mid jrx}, q_{111\mid jrx},\dots,q_{00N_a\mid jrx},q_{01N_a\mid jrx}, q_{10N_a\mid jrx}, q_{11N_a\mid jrx}),\\
       &j \in \{1, \dots, N_z\}, \, r \in \{1,\dots,N_r\}, \, x \in \{1, \dots, N_x\}
    \end{split}
  \end{align}
}

The post-infection severe-illness models can be formulated as saturated logistic regressions:
\begin{align*}
  \log\frac{P(Y_i(z_j) = 1 \mid S^{P_0}_i = u, A_i = k, X_i = x)}{P(Y_i(z_j) = 0 \mid S^{P_0}_i = u, A_i = k, X_i = x)} = \alpha_j^{u,x} +  \delta^{u,x}_{j,k},\,\,\beta^{u,x}_{j,k} & = \frac{e^{\alpha^{u,x}_j + \delta^{u,x}_{j,k}}}{1 + e^{\alpha^{u,x}_j + \delta^{u,x}_{j,k}}}, \\
  \delta^{u,x}_{j,1} & = 0 \, \forall \,j,\, u,\, x.
\end{align*}
We may also implement a James-Stein-type estimator that asymptotically reduces to the saturated logistic regression:
\begin{align*}
  \log\frac{P(Y_i(z_j) = 1 \mid S^{P_0}_i = u, A_i = k, X_i = x)}{P(Y_i(z_j) = 0 \mid S^{P_0}_i = u, A_i = k, X_i = x)} = \alpha_j^{u} + \epsilon^{u,x}_{j,k}, \,
  \epsilon^{u,x}_{j,k} \sim N(0, \tau_{\epsilon}^2) \, \forall \,j,\, u,\, x,\,k.
\end{align*}
The model can accommodate deviations from \Cref{cond:homogeneity} through an additive term $\varepsilon^{u,x}_{r}$ capturing heterogeneity between study sites:
\begin{align*}
  \log\frac{P(Y_i(z_j) = 1 \mid S^{P_0}_i = u, A_i = k, R_i = r, X_i = x)}{P(Y_i(z_j) = 0 \mid S^{P_0}_i = u, A_i = k, R_i = r, X_i = x)} & = \alpha_j^{u,x} +  \delta^{u,x}_{j,k} + \varepsilon^{u,x}_{r},\\
  \varepsilon^{u,x}_r & \sim \mathrm{Normal}(0, (\tau^u_{\varepsilon})^2).
\end{align*}
We can fix $\tau^{u,x}_{\varepsilon}$ to several values for sensitivity analysis, as developed in \cite{jiangPrincipalCausalEffect2016}. 

We may write the probability models for $S_i^{P_0} \mid R_i, X_i$  and $A_i \mid S_i^{P_0}, X_i$  as two saturated multinomial regressions, given \Cref{cond:a-indy} that $A_i \indy R_i, Z_i \mid  S^{P_0}_i, X_i$. 
\begin{align*}
  \log \frac{P(S^{P_0}_i = u\mid R_i = r, X_i = x)}{P(S^{P_0}_i = u_0\mid R_i = r, X_i = x)} & =  \mu_{u}^r + \eta^{x}_{u} + \eta^{r,x}_{u} \\
  \log \frac{P(A_i = k\mid S^{P_0}_i = u, X_i = x)}{P(A_i = k_0\mid S^{P_0}_i = u, X_i = x)} & =  \nu^u_k + \gamma^x_k +  \gamma^{u,x}_{k},
\end{align*}
where 
\begin{equation*}
\arraycolsep=14.4pt
\begin{array}{cc}
  \theta^{r,x}_{u} = \frac{e^{\mu_u^r + \eta^x_{u} + \eta^{r,x}_u}}{\sum_{w \in \mathcal{S}}e^{\mu_{w}^r +  \eta^x_{u} + \eta^{r,x}_u}}, \,\mu_{u_0}^r = 0 \,\forall\, r, & a^{u,x}_k = \frac{e^{\nu^u_k + \gamma_k^x + \gamma_k^{u,x}}}{\sum_{m = 1}^{N_a} e^{\nu^{u}_m + \gamma_k^x + \gamma_k^{u,x}}},\,\nu^{u}_{k_0} = 0\, \forall\, u.
\end{array}
\end{equation*}
Note that $\eta_{u_0}^x, \gamma_{k_0}^x, \eta_{u_0}^{r,x}, \gamma_{k_0}^x, \gamma_{k_0}^{u,x}$ are all zero for all $x$.
Furthermore, for given reference categories $x_0, u_0, r_0$, $\eta_u^{x_0},\gamma_{k}^{x_0}$ are zero for all $u,k$, while $\eta_{u}^{r_0,x}$ is zero for all $x$, $\eta_{u}^{r,x_0}$ is zero for all $r$, $\gamma_k^{u,x_0}$ is zero for all $u$ and  $\gamma_k^{u_0,x}$ is zero for all $x$.
This leads to a tidy representation of the log-odds of belonging to stratum $u$ vs. $u_0$ conditional on $A_i=k, R_i=r, X_i=x$: 
\begin{equation*}
  \log \frac{P(S^{P_0}_i = u \mid A_i = k, R_i = r, X_i = x)}{P(S^{P_0}_i = u_0 \mid A_i = k, R_i = r, X_i = x)}  = 
  \mu_{u}^r + \nu^u_k - \nu^{u_0}_k + \gamma^{u,x}_k - \gamma^{u_0,x}_k + \eta^x_u + \eta^{r,x}_u.
\end{equation*}

If we suspect deviations from \Cref{cond:a-indy}, we can add an interaction between $A_i$ and $R_i$ in the multinomial regression model for $A_i$:
\begin{equation}\label{eq:a-interact-mod}
  \log \frac{P(A_i = k\mid R_i = r, S^{P_0}_i = u, X_i = x)}{P(A_i = k_0\mid R_i = r, S^{P_0}_i = u, X_i = x)} =  \nu^u_k + \gamma^x_k +  \gamma^{u,x}_{k} + \epsilon^{r}_{k},    \quad  \epsilon^{r}_{k} \sim \mathrm{Normal}(0, (\tau^k_{\epsilon})^2) \,\forall r.
\end{equation}

\section{Design and analysis of vaccine efficacy studies}\label{sec:ve-studies}
There are several real-world applications for \Cref{thm:id-noisy} in vaccine efficacy studies. 
The first is for quantifying vaccine efficacy against post-infection outcomes like severe illness, medically-attended illness or death, which is the primary motivation for the methods we have developed here. 
A second is to quantify the impact on vaccination on secondary transmission to household contacts.
In both of these hypothetical trials, we imagine that participants are prospectively monitored for infection as well as the post-infection outcome of interest.
The infection monitoring might involve regular diagnostic testing or analysis of blood specimens for signs of infection.

A guiding philosophy we follow below is in using Bayesian models to design clinical trials with good Frequentist properties, as is discussed in \cite{berry_bayesian_2011}.
Thus, we chose our rejection region for our test statistic so as to limit our Type $1$ error to no more than $0.05$.

\subsection{Vaccine efficacy against severe illness trial design}\label{subsec:ve-p}
To show how our model can be used to design a vaccine efficacy study, we consider determining the sample size for two hypothetical clinical trials: one three-arm trial inspired by \cite{montoComparativeEfficacyInactivated2009}, and a two-arm trial inspired by \cite{polackSafetyEfficacyBNT162b22020}.
\cite{montoComparativeEfficacyInactivated2009} investigated vaccine efficacy against symptomatic influenza infection in a three-arm, double-blind placebo-controlled randomized trial.
\cite{polackSafetyEfficacyBNT162b22020} presented the results of the COVID-19 Pfizer vaccination trial, which measured vaccine efficacy against symptomatic infection using a two-arm double-blind placebo-controlled randomized trial.
All of our hypothetical trials are designed so as to jointly test the efficacy against infection and the efficacy against severe symptoms for the always-infected group.

In order to design our hypothetical trials, we simulate 200 datasets under the alternative hypothesis for each sample size and measure the proportion of datasets in which we reject the null hypothesis.
For both trials, we will target a power of $0.8$ against an alternative hypothesis that the vaccine efficacy against symptoms is equal to $0.6$ for the always-infected stratum (i.e. $S^{P_0}_i = (1,1,1)$ and $S^{P_0}_i = (1,1)$).
We reject the null when the posterior probability is $0.85$ or larger that vaccine efficacy against severe illness is above $0.1$ and that the vaccine efficacy against infection is greater than $0.3$. 
We can write the rejection region for $\mathrm{Data} = \{(\tilde{S}_i, \tilde{Y}_i, Z_i, R_i, A_i, X_i), \, 1 \leq i \leq n\}$ as $\{\mathrm{Data}: P(\mathrm{VE}^{(1,1,1)}_{I,31} > 0.1, \mathrm{VE}_{S,31} > 0.3 \mid \mathrm{Data}) \geq C\}$ for the three-arm trial and $\{\mathrm{Data}: P(\mathrm{VE}^{(1,1)}_{I,21} > 0.1, \mathrm{VE}_{S,21} > 0.3 \mid \mathrm{Data}) \geq C\}$ for the two-arm trial.
More details on the choice of $C$ is given in the next two subsections.
Broadly, our decision criterion is akin to that used in \cite{polackSafetyEfficacyBNT162b22020}, namely that the posterior probability is greater than $0.986$ that vaccine efficacy against confirmed COVID-19 is greater than $0.3$.
For example, in the two-arm scenario $C = 0.85$ adequately controls the Type $1$ error for a null hypothesis of no vaccine efficacy against severe illness.

We use the model defined in \Cref{eq:misclass-obs-model-multi-site-multi-z}; the computational details are discussed in \Cref{sec:numerical-deets}. 
 Given the results of \Cref{thm:id-noisy}, we can determine the number of study sites and the number of levels for $A_i$ that need to be observed in order to point identify the causal estimand of interest. 
 For the three-arm trial, we need at least $8$ study sites and a covariate with at least $7$ levels, while for the two-arm trial we need only $4$ study sites and a covariate with at least $3$ levels.
 In the two-arm study we simulated data from $8$ sites, while in the three-arm trial we simulated data from $16$ sites.

 \subsubsection{Two-arm trial}

 In the two arm trial, we sample the study-site-specific principal strata proportions from a Dirichlet distribution.
 The distribution's mean corresponds to the vector:
 $$
 (P(S_i^{P_0} = (0,0)), P(S_i^{P_0} = (0,1)), P(S_i^{P_0} = (1,0)), P(S_i^{P_0} = (1,1))),
 $$
 where
 $S_i^{P_0} = (S_i(1) = i, S_i(0) = j)$.
 The mean is set to $(0.78, 0.1, 0.02, 0.1)$, while the variance for each category is 
 $
 P(S_i^{P_0} = (i,j))(1 - P(S_i^{P_0} = (i,j)))/101.
 $
 These parameter settings result in a cumulative incidence of $16\%$.

 We fitted three models, which are described in detail in the appendix. Both included terms to account for possible deviations from assumption \Cref{cond:a-indy}, and both employ a hierarchical model for the probability of severe disease conditional on principal stratum and covariates.
 One model allows for an unrestricted set of principal strata, while the other two assume monotonicity, namely that $P(S_i^{P_0} = (1, 0)) = 0$.
 Of the two models that assume monotonicity, one assumes severe disease and infection are observed without error, while the other employs a measurement error model identical to the model proposed in this paper.
 In the results below, the unrestricted model is \emph{full}, while the model that incorrectly assumes monotonicity and assumes perfect measurements is \emph{mono-wo-meas-error}.
 The model that assumes monotonicity but allows for measurement error is called \emph{mono-w-meas-error}.
 Given the monotonicity assumption, neither monotonic model learns the distribution for $A_i$; they do, however, condition on $A_i$ when learning the distribution of the post-infection outcome.
 We measured coverage of the $95\%$ posterior credible intervals, bias and mean squared error of the posterior median of the target estimand, namely vaccine efficacy against severe illness in the always-infected stratum $S_i^{P_0} = (1,1)$, and vaccine efficacy against infection.
 The data were simulated under two alternative hypotheses and two null hypotheses.
 VE against severe illness was $\approx 0.6$ under one alternative hypothesis, and the alternative hypothesis had a VE against severe illness of $\approx -0.6$.
 Alternatively, under both null hypotheses VE against severe disease is equal to zero, while one has a VE against infection of $\approx 0.5$, and the other has a VE against infection $\approx 0.06$.
 Within these hypotheses we further stratified the simulations by two conditions: one in which \Cref{cond:a-indy} held and one in which \Cref{cond:a-indy} was violated.

 Bias and MSE under both alternative hypotheses are presented in \Crefrange{fig:bias}{fig:mse}.
 The figures show that under most scenarios, the full model has lower bias and MSE compared to the other models, though there are exceptions.
 The notable exceptions are at sample sizes of $20{,}000$, and $40{,}000$ under the null hypotheses.
 In these cases, monotonic models have smaller MSEs.
 This is due to the fact that the models which assume monotonicity have far fewer parameters than the joint model, and the increased bias is small compared to the decrease in variance.
\begin{figure}
  \centering
  \includegraphics[scale=0.7]{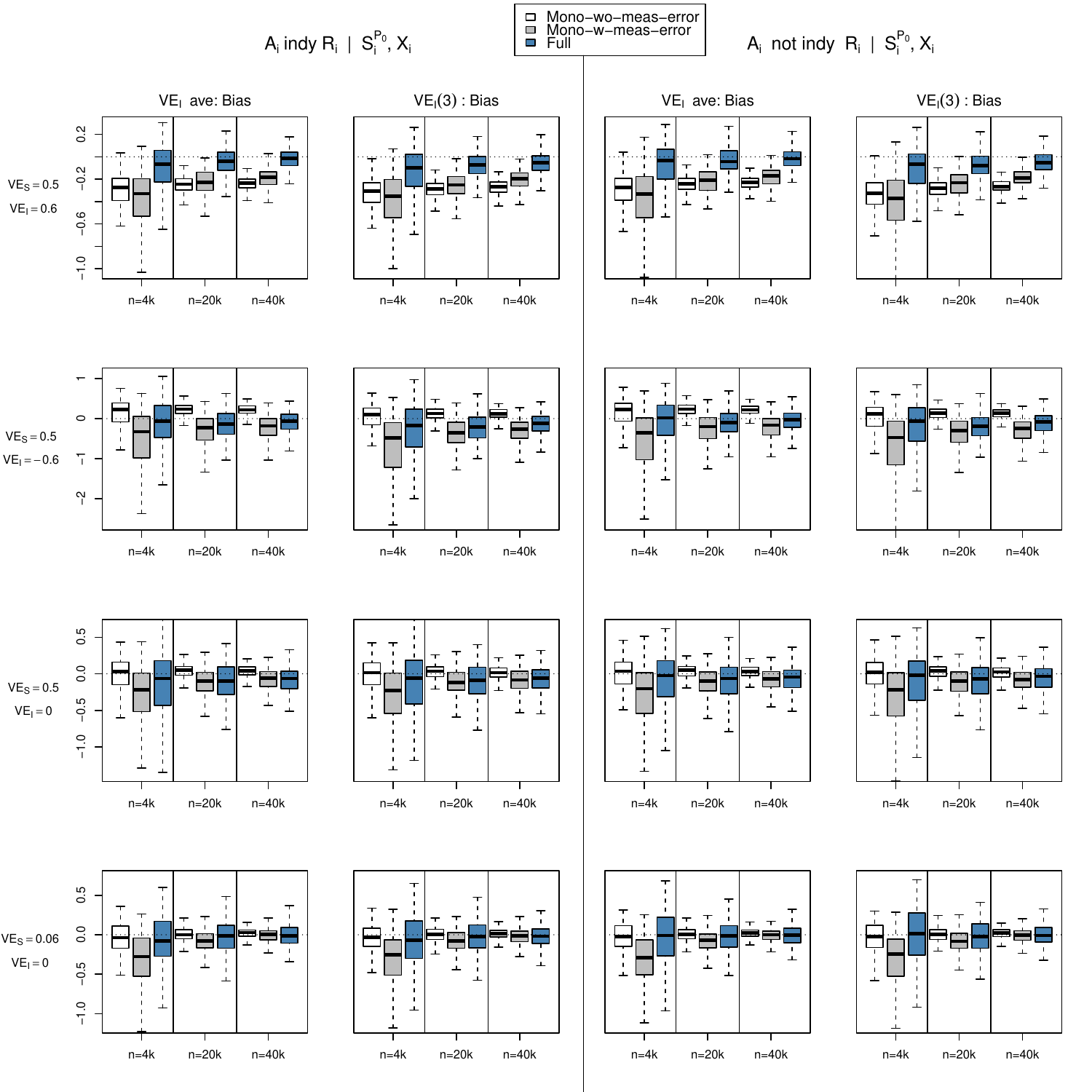}
  \caption{Boxplots collecting bias across simulated datasets for posterior median $\mathrm{VE}_I$ and $\mathrm{VE}_{I}(3)$.
    Each boxplot bar summarizes the bias of an estimator across $200$ datasets under a single sample size and hypothesis combination.
    The blue bar indicates bias under the full model, while the white bar indicates bias under the model employing a monotonicity assumption.
  Rows of the graph correspond to different hypotheses, while the columns correspond to whether or not \Cref{cond:a-indy} holds.}
  \label{fig:bias}
\end{figure}

\begin{figure}
  \centering
  \includegraphics[scale=0.7]{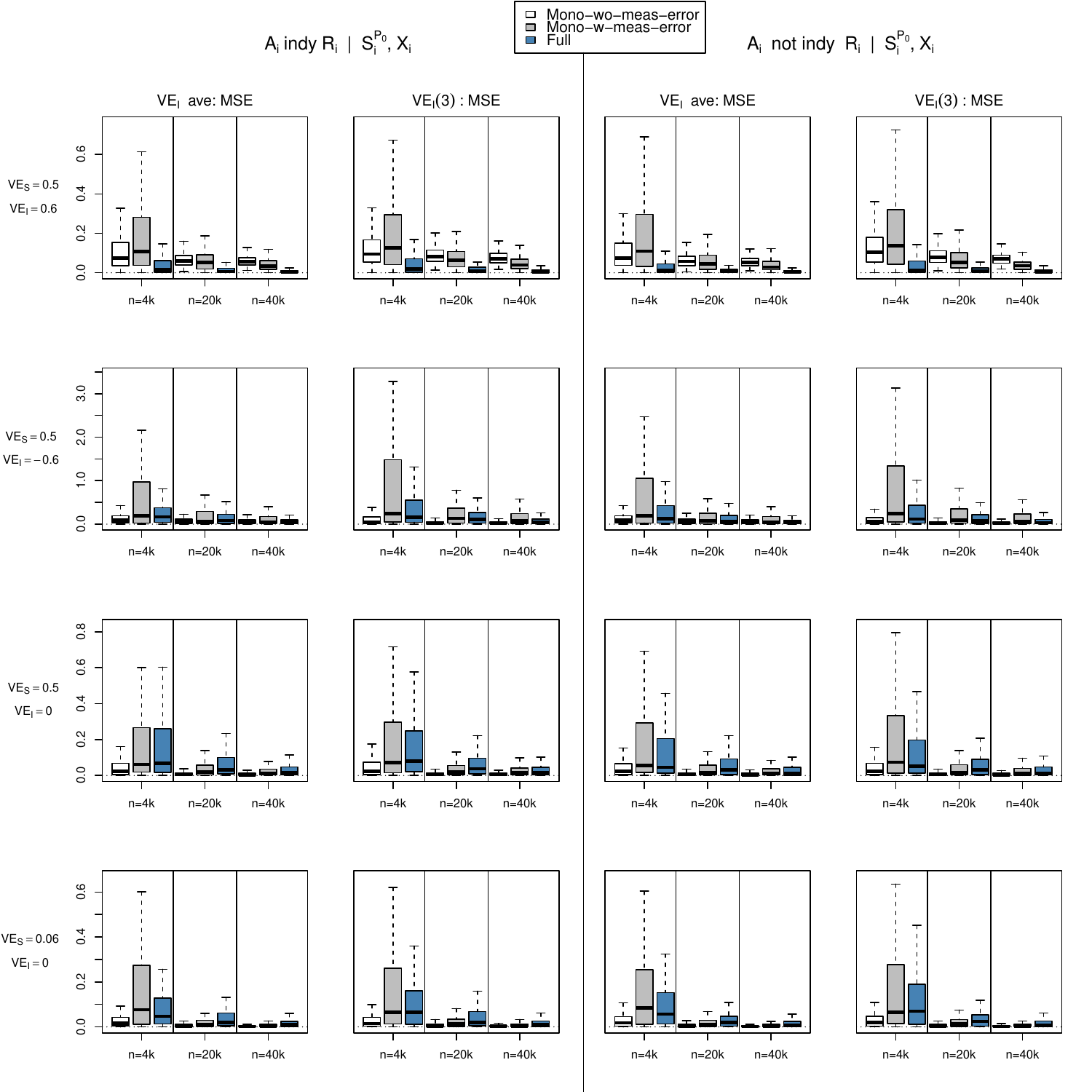}
  \caption{Boxplots collecting mean squared error across simulated datasets for posterior median $\mathrm{VE}_I$ and $\mathrm{VE}_{I}(3)$. 
    Each boxplot bar summarizes the bias of an estimator across $200$ datasets under a single sample size and hypothesis combination.
    The blue bar indicates bias under the full model, while the white bar indicates bias under the model employing a monotonicity assumption.
    Rows of the graph correspond to different hypotheses, while the columns correspond to whether or not \Cref{cond:a-indy} holds. }
  \label{fig:mse}
\end{figure}

\begin{figure}
  \centering
  \includegraphics[scale=0.7]{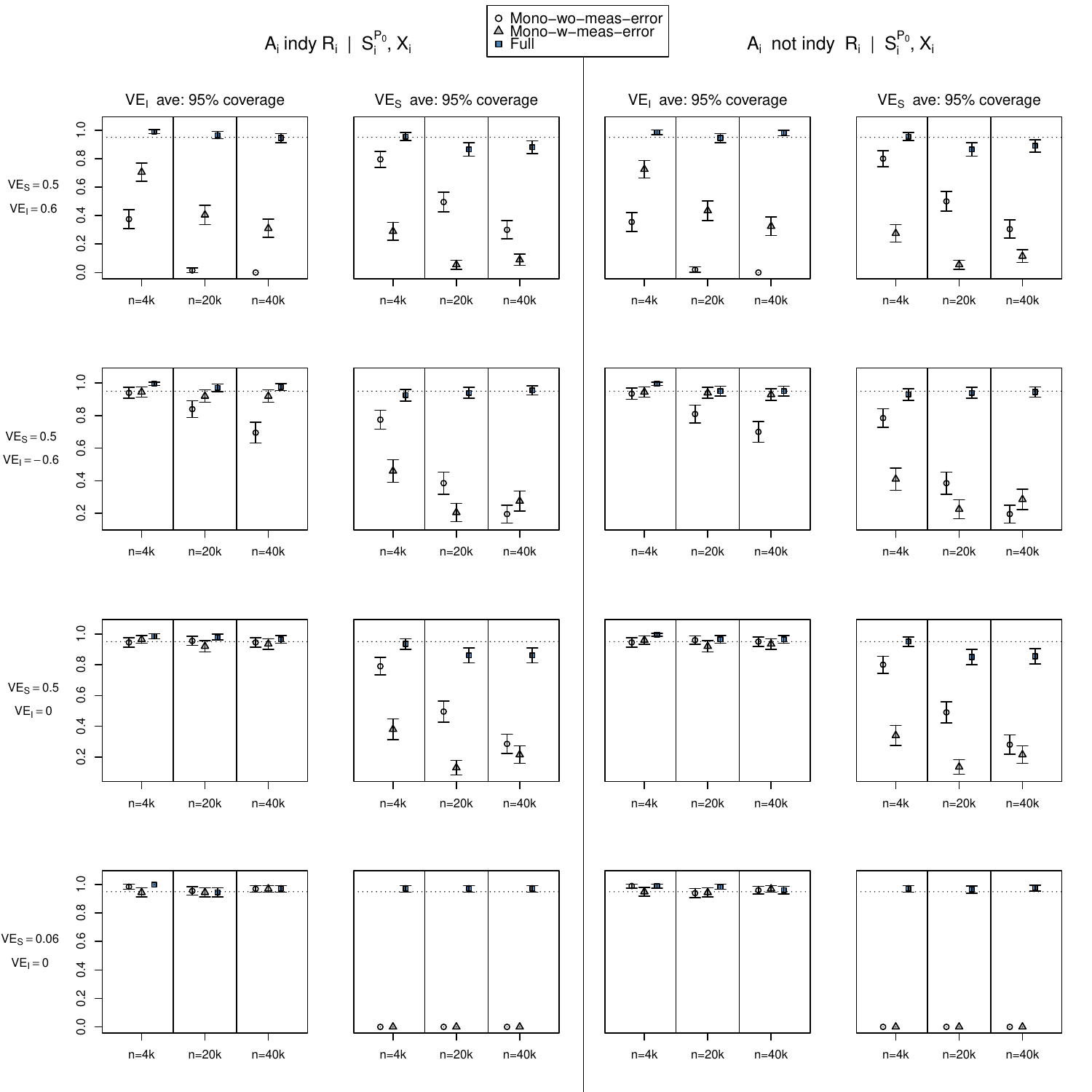}
  \caption{Posterior credible interval coverage across simulated datasets for $\mathrm{VE}_I$ and $\mathrm{VE}_{S}$ (95\% intervals).
    Each point and error bar summarizes the coverage of the credible $200$ datasets under a single sample size and hypothesis combination.
    The blue square points indicate the full model coverage, while the white points indicate the coverage under the model employing a monotonicity assumption.
    Rows of the graph correspond to different hypotheses, while the columns correspond to whether or not \Cref{cond:a-indy} holds.}
  \label{fig:coverage}
\end{figure}

The coverage plots in \Cref{fig:coverage} show that the posterior credible intervals generated by the incorrect models have coverage far below the nominal rates under both alternative hypotheses, but the coverage is at or above the nominal rates under both null hypotheses.
The full model achieves or exceeds the nominal coverage in all scenarios.
Because it is a stated goal of the model to jointly infer $\mathrm{VE}_S$ and $\mathrm{VE}_I$, the coverage plots show that it is at a minimum necessary to use a model that allows an unrestricted causal model to achieve nominal coverage.

The power calculations are presented in \Cref{tab:z2-size}, which shows power as a function of the sample size and the various hypotheses.
Despite the increased MSE under the null hypothesis, the full model does not have significantly inflated Type I errors compared to the \emph{mono} models.
The full model displays higher power uniformly across scenarios and sample sizes.
\begin{table}[H]
\centering
\caption{Power and Type I error rates for sample sizes of $4{,}000$ through $40{,}000$.
  Hypothesis indicates whether the alternative or the null hypothesis was used to generate the datasets, the column $A_i \indy R_i \mid S_i^{P_0}$ indicates whether \Cref{cond:a-indy} holds, and the Model column indicates whether the full model or the models that incorrectly assume monotonicity were fitted.
  The $k$ column indicates the cutoff value used to control Type I error; these values are specific to each model and are set so as to control the Type I error under the null across all sample sizes and $A_i \indy R_i \mid S_i^{P_0}$ scenarios.
  Values for $\mathrm{VE}_I = 0.6$ indicate power, while  $\mathrm{VE}_I = 0$ indicate Type I error.
}
\bgroup
\def\arraystretch{0.5}
\begin{tabular}{llll|rrr}
  \hline
  Hypothesis & $A_i \indy R_i \mid S^{P_0}$? & Model & $C$ & $4{,}000$ & $20{,}000$ & $40{,}000$ \\ 
  \hline
  \multirow{6}{*}{$\mathrm{VE}_I = 0.6$} & \multirow{3}{*}{$A_i \indy R_i \mid S^{P_0}$} & \emph{mono-wo-meas-error} & 0.925   & 0.23 & 0.77 & 0.90 \\ 
             &  & \emph{mono-w-meas-error} & 0.75 &  0.34 & 0.80 & 0.95\\ 
             &  & \emph{full}  & $0.85$ & 0.53 & 0.92 & 1.00 \\ 
             & \multirow{3}{*}{$A_i \not\indy R_i \mid S^{P_0}$} & \emph{mono-wo-meas-error} & 0.925 & 0.23 & 0.78 & 0.92\\ 
             & & \emph{mono-w-meas-error} & 0.75 & 0.35 & 0.81 & 0.96 \\ 
             & & \emph{full} & $0.85$ & 0.59 & 0.95 & 1.00 \\ 
  \hline
  \multirow{6}{*}{$\mathrm{VE}_I = 0$} & \multirow{3}{*}{$A_i \indy R_i \mid S^{P_0}$} & \emph{mono-wo-meas-error} & 0.925 &0.03 & 0.01 & 0.01 \\ 
             &  & \emph{mono-w-meas-error} &0.75 &0.03 & 0.02 & 0.04  \\ 
             &  & \emph{full} & 0.85 & 0.03 & 0.04 & 0.01 \\ 
             & \multirow{3}{*}{$A_i \not\indy R_i \mid S^{P_0}$} & \emph{mono-wo-meas-error} & 0.925 & 0.05 & 0.01 & 0.01\\ 
             &  & \emph{mono-w-meas-error} & 0.75 & 0.04 & 0.03 & 0.03\\ 
   &  & \emph{full} & 0.85 & 0.03 & 0.04 & 0.01 \\ 
  \hline
\end{tabular}
\egroup
\label{tab:z2-size}
\end{table}
While these results show that one needs fairly large sample sizes to achieve 80\% power for the estimands of interest in both scenarios, this is expected because the always-infected principal strata, $(1,1)$, are only $10\%$ of their respective populations in our simulation studies.
Despite the full model having more parameters than either the \emph{mono-w-meas-error} or \emph{mono-wo-meas-error}, it generates the most powerful test in all scenarios while adequately controlling the type I error.
In order to achieve Type I error control, the \emph{mono-wo-meas-error} model requires a much smaller rejection region, or, equivalently, a larger value of $C$ compared to the other two models.
This reduces its power, and results in the \emph{mono-wo-meas-error} having the smallest power among the models.

This highlights the extent to which power calculations, and bias and MSE for our models are dependent on principal strata proportions.
Even though the proportion of individuals who are harmed by the vaccine is only 2\%, the effect on models that assume monotonicity is quite drastic in terms of bias, MSE, and especially coverage.

Furthermore, though the sample sizes are large, randomized vaccine trials of similar magnitude have been run.
For example, the trial presented in \cite{polackSafetyEfficacyBNT162b22020} included approximately $43{,}500$ participants.
This highlights the fact that our model can be used to infer post-infection outcome vaccine efficacy from large real-world studies.

\subsubsection{Three-arm trial}

In the three arm trial, we sample the study-site-specific principal strata proportions from a Dirichlet distribution.
The distribution's mean corresponds to the vector:
\begin{align*}
  (&P(S_i^{P_0} = (0,0,0)), P(S_i^{P_0} = (1,0,0)), P(S_i^{P_0} = (0,1,0)), P(S_i^{P_0} = (1,1,0)), \\
  & P(S_i^{P_0} = (0,0,1)), P(S_i^{P_0} = (1,0,1)), P(S_i^{P_0} = (0,1,1)), P(S_i^{P_0} = (1,1,1))
\end{align*}
where
$S_i^{P_0} = (S_i(0) = i, S_i(1) = j, S_i(2) = k)$.
The mean is set to $(0.7, 0.13, 0.01, 0.01, 0.01, 0.01, 0.01, 0.12)$, while the variance for each category is 
$
P(S_i^{P_0} = (i,j,k))(1 - P(S_i^{P_0} = (i,j,k)))/101,
$
as above.
These parameter settings result in a population-average cumulative incidence of $19\%$.

We simulated data under two scenarios of varying study population size of $4{,}000$, $40{,}000$, and $80{,}000$ participants.
In all scenarios, $\mathrm{VE}_{S,21} = 0.45, \mathrm{VE}_{S,31} = 0.41$ and $\mathrm{VE}^{(1,1,1)}_{I,21} = 0$.
In one scenario,  $\mathrm{VE}^{(1,1,1)}_{I,31} = 0.52$, while $\mathrm{VE}^{(1,1,1)}_{I,31} = 0$ in the null scenario.
The data is generated so that \Cref{cond:a-indy} holds.

We fitted the same three models we fitted above, but the monotonicity assumption is more severe for the three-arm trial compared to the two-arm trial because only four parameters can be estimated from the observed data.
The monotonic models assume that
$$P(S_i^{P_0} = (0,1,0)) = P(S_i^{P_0} = (0,0,1)) = P(S_i^{P_0} = (1,0,1)) = P(S_i^{P_0} = (0,1,1)) = 0,$$
thus misclassifying about 4\% of the population.

In order to compare the models' performance, we measured $\mathrm{VE}^{(1,1,1)}_{I,21}$ and $\mathrm{VE}^{(1,1,1)}_{I,31}$, as well as conditional VE $\mathrm{VE}^{(1,1,1)}_{I,21}(\ell)$ and $\mathrm{VE}^{(1,1,1)}_{I,31}(\ell)$ for $\ell = 5$.
We also compared inferences for the VE against infection estimands, $\mathrm{VE}_{S,21}$ and $\mathrm{VE}_{S,31}$ respectively.
\begin{figure}
  \centering
  \includegraphics[scale=0.7]{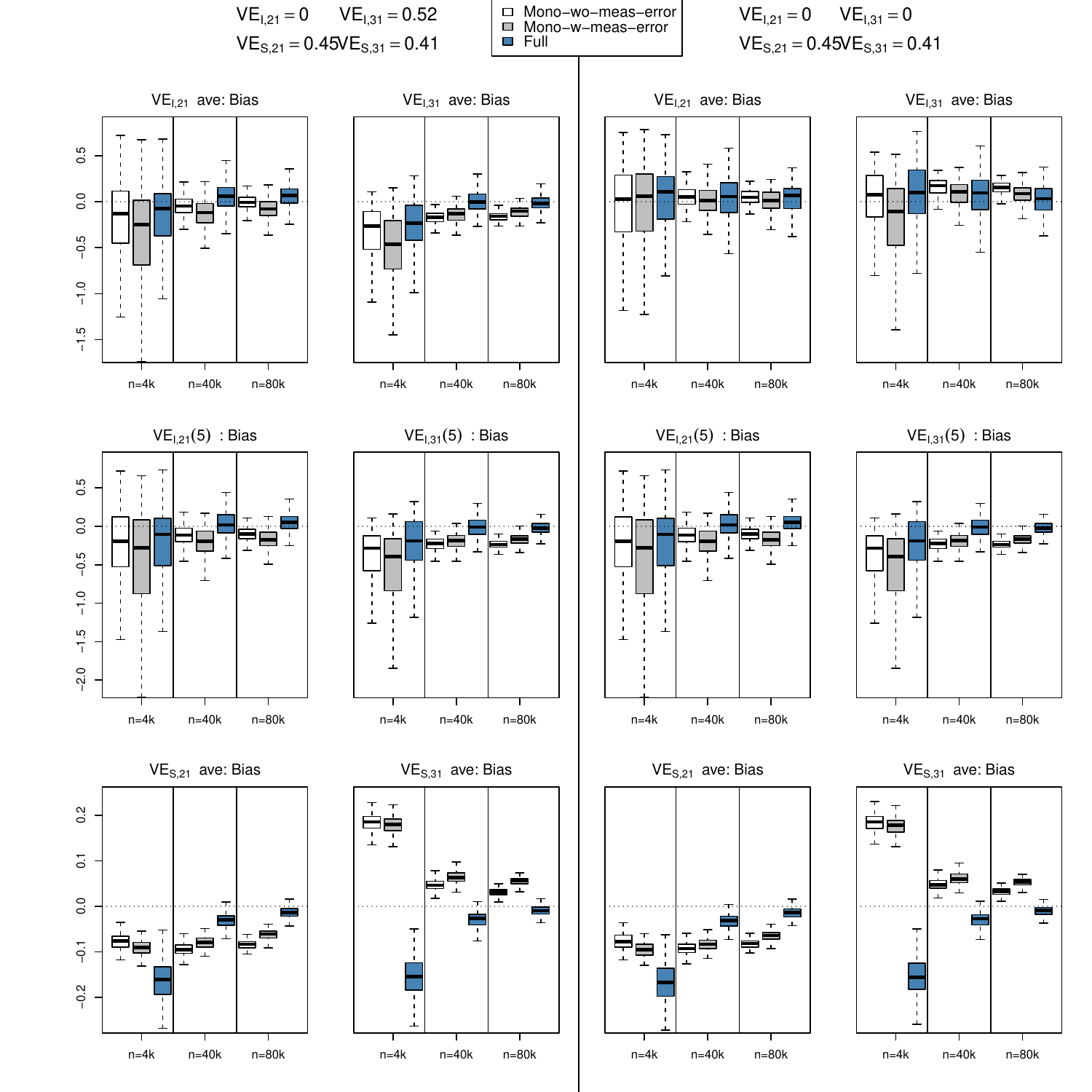}
  \caption{Boxplots collecting bias across simulated datasets for posterior median $\mathrm{VE}^{(1,1,1)}_{I,21}$ and $\mathrm{VE}^{(1,1,1)}_{I,31}$ in the top row,
    $\mathrm{VE}^{(1,1,1)}_{I,21}(k)$ and $\mathrm{VE}^{(1,1,1)}_{I,31}(k)$ in the middle row, and $\mathrm{VE}_{S,21}$ and $\mathrm{VE}_{S,31}$ in the final row.
    Each boxplot bar summarizes the bias of an estimator across $200$ datasets under a single sample size and hypothesis combination.
    The blue bar indicates bias under the full model, while the white bar indicates bias under the model employing a monotonicity assumption.
    The columns correspond to whether the null hypothesis of $\mathrm{VE}^{(1,1,1)}_{I,31} = 0$ holds.}
  \label{fig:bias-z3}
\end{figure}

\begin{figure}
  \centering
  \includegraphics[scale=0.7]{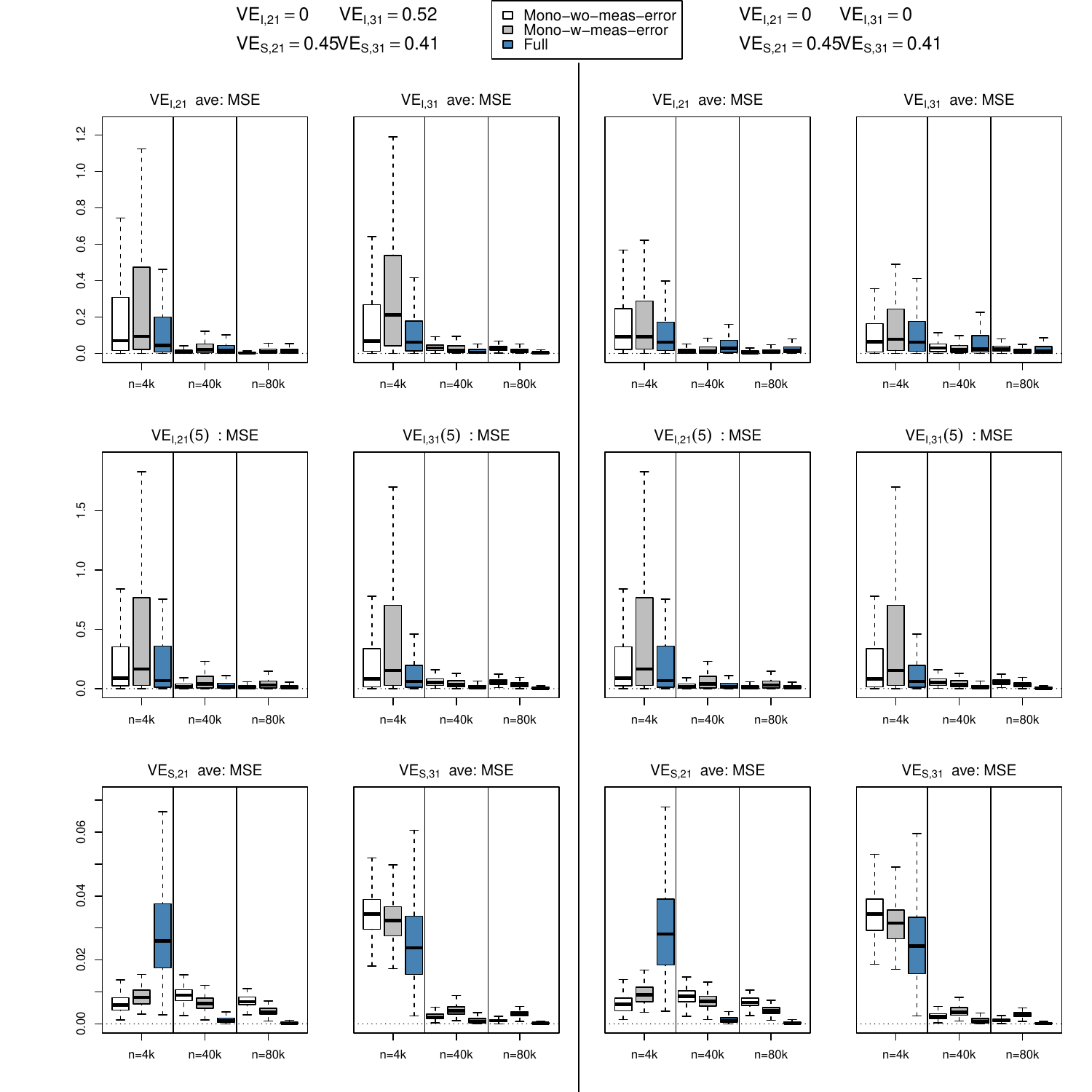}
  \caption{Boxplots collecting mean squared error across simulated datasets for posterior median $\mathrm{VE}^{(1,1,1)}_{I,21}$ and $\mathrm{VE}^{(1,1,1)}_{I,31}$ in the top row,
    $\mathrm{VE}^{(1,1,1)}_{I,21}(k)$ and $\mathrm{VE}^{(1,1,1)}_{I,31}(k)$ in the middle row, and $\mathrm{VE}_{S,21}$ and $\mathrm{VE}_{S,31}$ in the final row. 
    Each boxplot bar summarizes the bias of an estimator across $200$ datasets under a single sample size and hypothesis combination.
    The blue bar indicates bias under the full model, while the white bar indicates bias under the model employing a monotonicity assumption.
    The columns correspond to whether the null hypothesis of $\mathrm{VE}^{(1,1,1)}_{I,31} = 0$ holds.}
  \label{fig:mse-z3}
\end{figure}

\begin{figure}
  \centering
  \includegraphics[scale=0.7]{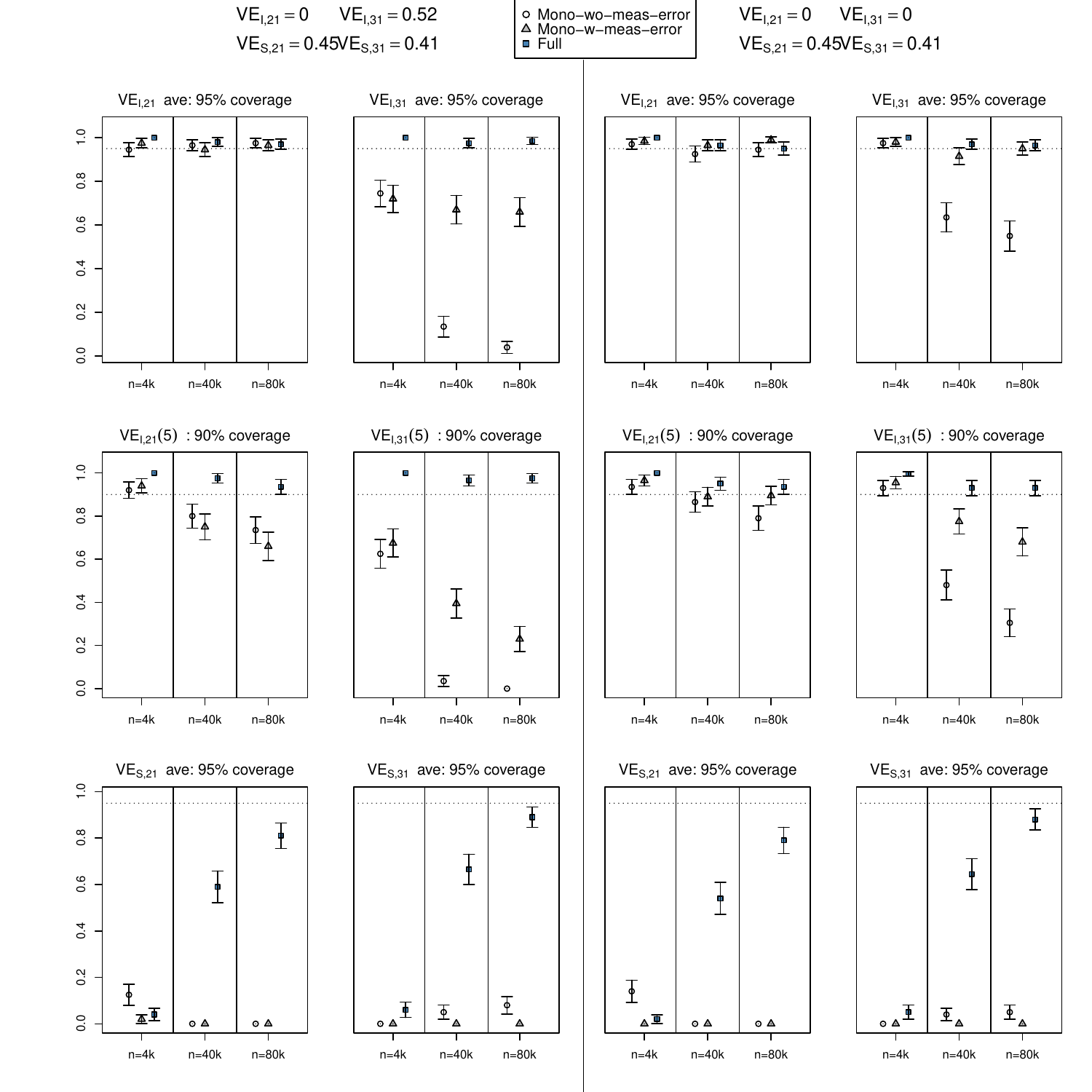}
  \caption{Posterior credible interval coverage across simulated datasets for $\mathrm{VE}^{(1,1,1)}_{I,21}$ and $\mathrm{VE}^{(1,1,1)}_{I,31}$ (95\% intervals)  in the top row,
    $\mathrm{VE}^{(1,1,1)}_{I,21}(k)$ and $\mathrm{VE}^{(1,1,1)}_{I,31}(k)$ in the middle row, and $\mathrm{VE}_{S,21}$ and $\mathrm{VE}_{S,31}$ in the final row (all 90\% intervals). 
    Each point and error bar summarizes the coverage of the credible $200$ datasets under a single sample size and hypothesis combination.
    The blue square points indicate the full model coverage, while the white points indicate the coverage under the model employing a monotonicity assumption.
    The columns correspond to whether the null hypothesis of $\mathrm{VE}^{(1,1,1)}_{I,31} = 0$ holds.}
  \label{fig:coverage-z3}
\end{figure}

\Cref{fig:bias-z3} shows that the bias of the \emph{full} model was the smallest in nearly every scenario for the chosen parameters.
Meanwhile, the \Cref{fig:mse-z3} results are more mixed; the figure shows that the MSE is smallest for the full model when sample size exceeds $4{,}000$ for the $\mathrm{VE}_S$ parameters, but that the MSE for the $\mathrm{VE}_{I}$ and conditional causal estimand is larger when the causal null hypothesis holds.
In \Cref{fig:coverage-z3} we see that the full model attains the nominal coverage for the $\mathrm{VE}_I$ parameters, while the monotonic models fail to achieve nominal coverage for these parameters.
All models fail to achieve the nominal coverage for the $\mathrm{VE}_S$ parameters, but the \emph{full} model's coverage improves as the sample size increases.

\begin{table}[H]
\centering
\caption{Power and Type I error rates for sample sizes of $4{,}000$ through $80{,}000$ for the three-arm treatment simulation study.
  Hypothesis indicates whether the alternative or the null hypothesis was used to generate the datasets, the column $A_i \indy R_i \mid S_i^{P_0}$ indicates whether \Cref{cond:a-indy} holds, and the Model column indicates whether the full model or the two models that incorrectly assume monotonicity were fitted.
  The $k$ column shows the cutoff value that determines the rejection region; the value is chosen so as to control Type I error across all sample sizes for each model.
  Values for $\mathrm{VE}^{(1,1,1)}_{I,31} = 0.52$ indicate power, while  $\mathrm{VE}^{(1,1,1)}_{I,31} = 0$ indicate Type I error. By design, all methods will have Type I error less than or equal to 5\%.
}
\bgroup
\def\arraystretch{0.5}
\begin{tabular}{llll|rrr}
  \hline
  Hypothesis & $A_i \indy R_i \mid S^{P_0}$? & Model & $C$ & $4{,}000$ & $40{,}000$ & $80{,}000$ \\ 
  \hline
  \multirow{3}{*}{$\mathrm{VE}^{(1,1,1)}_{I,31} = 0.52$} & \multirow{3}{*}{$A_i \indy R_i \mid S^{P_0}$} & \emph{mono-wo-meas-error} & 0.99  & 0.02 & 0.68 & 0.94\\ 
              &  & \emph{mono-w-meas-error} & 0.94 & 0.02 & 0.75 & 0.97\\ 
              &  & \emph{full}& 0.93 & 0.00 & 0.83 & 0.98  \\ 
  \hline
  \multirow{3}{*}{$\mathrm{VE}^{(1,1,1)}_{I,31} = 0$} & \multirow{3}{*}{$A_i \indy R_i \mid S^{P_0}$} & \emph{mono-wo-meas-error} & 0.99 & 0.00 & 0.03 & 0.03 \\ 
              &  & \emph{mono-w-meas-error}& 0.94 &  0.00 & 0.04 & 0.03\\ 
              &  & \emph{full} & 0.93 & 0.00 & 0.04 & 0.03 \\ 
  \hline
\end{tabular}
\egroup
\label{tab:z3-size}
\end{table}

\Cref{tab:z3-size} shows the pitfalls of the smaller MSE values from the \emph{mono-wo-meas-error} model, namely that in order to control the Type I error we need a much smaller rejection region.
This leads to lower power compared to the other two models.
In keeping with the results in the two-arm trial, the full model has the highest power among all the models, with the exception of the $4{,}000$ participant scenario.
The results show that one could run a trial with $40{,}000$ to achieve 80\% power.
Recent vaccine trials show that trials of $40{,}000$ are feasible and successful, further demonstrating the applicability of our model to real-world trials.

\section{Discussion}\label{sec:discuss-ve}

Policymakers and public health experts can use vaccine efficacy for post-infection outcomes to design more precise vaccination programs.
Our method makes inferring these causal estimands feasible in real-world multi-arm trials where outcomes are measured with error and vaccines cannot be assumed to have a nonnegative effect on infection for every individual.
The power of our method lies in its ability to be applied to vaccine trials with multiple treatments as well as various post-infection outcomes.
Although we focus much of the paper on the binary severe illness outcome, our method is readily extensible to other binary post-infection outcomes like secondary transmission, hospitalization, or death, and to ordinal and continuous measures like immune response as measured by antibody titer or viral load.
Accordingly, when paired with a parametric likelihood for continuous post-infection outcomes, our method may be more statistically efficient than models identified by likelihood assumptions alone, like that of \cite{zhangLikelihoodBasedAnalysisCausal2009}. 
Furthermore, our identifiability results are nonparametric, though we use parametric Bayesian models to design estimators in our examples.
One can use these methods to design and analyze clinical trials, as we show in \Cref{sec:ve-studies}.

\subsection{Limitations and extensions}

The nondifferential error assumptions may be violated in certain scenarios, so extending the measurement error model to reflect more complex measurement processes is an important future direction.
A simple extension would be to allow infection test sensitivity and specificity to condition on $X$ and $Z$.
For instance, if a vaccine changes how the virus populates the nasal cavity, we might expect that PCR tests from nasopharyngeal swabs will be less sensitive in the vaccinated group.
The model currently assumes that post-infection outcome specificity is constant across study sites, but this may not reflect the reality of some vaccine trials.
An extension to the model would be to allow misclassification rates that differ by study site $R$, but more work is needed to understand the impact of this model expansion.
Further work is also needed to generalize the procedure to categorical intermediate outcomes, which would allow for more general vaccine efficacy against transmission study designs \citep{vanderweeleBoundingInfectiousnessEffect2011}, as well as applications beyond vaccine efficacy to noncompliance in multi-arm trials where the exclusion restriction could be violated \citep{chengBoundsCausalEffects2006}. 

\begin{appendices}
\section{Proofs and further details for simulation studies}

We define our notation for principal stratification in vaccine efficacy (VE) in section \ref{subsec:defn}.
In section \ref{sec:krank}, we give general properties of the Kruskal rank, and extensions to \cite{kruskalThreewayArraysRank1977} theorems that we derived.
We apply these extensions in the context of principle stratification for VE in section \ref{subsec:krank-ve}.
The proof of our main result, \Cref{thm:id-noisy}, is given in section \ref{sec:main-res}.
These proofs are based on results in \Cref{sec:krank} and \Cref{subsec:krank-ve}.

\section{Notation and definitions}\label{subsec:defn}

In the following proofs, we have omitted the subscript $i$ from random variables to simplify our notation.
We have also elided conditioning on $X_i = x$; the proofs shown in \Cref{proof:id-noisy} are understood to be conditional on $X_i = x$.
Let $z$ be the $N_z$-category discrete variable taking values in the set $\{z_1,\dots, z_{N_z}\}$ representing treatment, and let $Z$ be treatment assignment.  
The principal stratum, $S^{P_0}$ is defined as $(S(z_1),\dots,S(z_{N_z}))$, $S(z_j) \in \{0,1\}$.
Let $\mathcal{S}$ be the set of principal strata, which is equal to $\{0,1\}^{N_z}$ when there are no monotonicity assumptions; let $u \in \mathcal{S}$.

Let the set of treatments be $\{z_1, \dots, z_{N_z}\}$, with $z \in \{z_1, \dots, z_{N_z}\}$

Let $A$ have $N_a$ levels and take values in the set $\{1, \dots, N_a\}$. 
Let $P(A \mid R)$ be the $N_a \times N_r$ matrix with $(i, j)^\mathrm{th}$ element equal to $P(A = i \mid R = j)$ and let $P(A = k \mid R)$ be the $N_r \times N_r$ diagonal matrix with $(i,i)^\mathrm{th}$ diagonal element $P(A = k \mid R = i)$. Let $P(\tilde{A} = k \mid R)$ be defined similarly.
Let $P_{N_z}(A \mid S^{P_0})$ be the $N_a \times 2^{N_z}$ matrix with $(i, j)^\mathrm{th}$ element equal to $P(A = i \mid S^{P_0} = \varpi_{N_z}(j-1))$, and let $P_{N_z}(S^{P_0} \mid R)$ be the $2^{N_z} \times N_r$ matrix with $(i,j)^\mathrm{th}$ element equal to $P(S^{P_0} = \varpi_{N_z}(i-1) \mid R = j)$. 
Let $P_{N_z}(A = k \mid S^{P_0})$ be the $2^{N_z} \times 2^{N_z}$ diagonal matrix with $(i,i)^{\mathrm{th}}$ element $P(A = k \mid S^{P_0} = \varpi_{N_z}(i-1))$ and let $P_{N_z}(\tilde{A} = k \mid S^{P_0})$ be defined similarly.
Let $P(y \mid R, Z = z)$ be the $1 \times R$ matrix with element $(1, j)^\mathrm{th}$ equal to $P(y \mid R = j, Z = z)$, and similarly let $P_{N_z}(y \mid S^{P_0}, Z = z)$ be the $1 \times 2^{N_z}$ matrix with element $(1, j)^\mathrm{th}$ equal to $P(y \mid S^{P_0} = \varpi_{N_z}(j-1), Z = z)$.
Let $P(y \mid R, Z = z, A = k)$ be the $1 \times R$ matrix with $(1, j)^\mathrm{th}$ element equal to $P(y \mid R = j, Z = z, A = k)$, and similarly let $P_{N_z}(y \mid S^{P_0}, Z = z, A = k)$ be the $1 \times 2^{N_z}$ matrix with element $(1, j)^\mathrm{th}$ equal to $P(y \mid  S^{P_0} = \varpi_{N_z}(j-1), Z = z, A = k)$.
Let the matrix $P_{N_z}(S \mid Z, S^{P_0})$ be in $\R^{2 N_z \times 2^{N_z}}$ where column denotes principal stratum $S^{P_0} = \varpi_{N_z}(j-1)$ and row represents a combination $(s, z) \in \{(1,1),(1,2),\dots,(1,N_z),(0,1),\dots,(0,N_z)\}$, with $(i,j)^\mathrm{th}$ element denoted $P_{N_z}(S \mid Z, S^{P_0})_{ij}$ defined as
\[
P_{N_z}(S \mid Z, S^{P_0})_{ij} = \varpi_{N_z}(j-1)_i \ind{i \leq N_z} + (1 - \varpi_{N_z}(j-1)_{i-N_z}) \ind{i > N_z},
\]
and let $P_{N_z}(\tilde{S} \mid Z, S^{P_0})$ be in $\R^{2 N_z \times 2^{N_z}}$ with $(i,j)^\mathrm{th}$ element denoted $P_{N_z}(\tilde{S} \mid Z, S^{P_0})_{ij}$ defined:
\begin{align*}
P_{N_z}(\tilde{S} \mid Z, S^{P_0})_{ij} & = \mathrm{sn}_S^{\varpi_{N_z}(j-1)_i}(1 - \mathrm{sp}_S)^{1 - \varpi_{N_z}(j-1)_i} \ind{i \leq N_z} \\ 
& + (1-\mathrm{sn}_S)^{\varpi_{N_z}(j-1)_{i-N_z}}\mathrm{sp}_S^{1 - \varpi_{N_z}(j-1)_{i-N_z}} \ind{i > N_z}.
\end{align*}
Let $B^+$ be the Moore-Penrose inverse of the matrix $B$, $\mathbf{1}_m$ be the $m$-vector of $1$s, $\mathbf{0}_m$ be the $m$-vector of $0$s, and $\mathbf{I}_m$ be the $m \times m$ dimensional identity matrix.

\section{Kruskal rank properties related to VE}\label{subsec:krank-ve}
In this section, we show that (a) the Kruskal rank of the matrix $P_{N_z}(\tilde{S} \mid Z, S^{P_0})$ is $3$ for $N_z \geq 2$ when $\mathrm{sn}_S + \mathrm{sp}_S \neq 1$ and (b) the column domains of $P_{N_z}(\tilde{S} \mid Z, S^{P_0})$ are not invariant to column permutation when  $\mathrm{sn}_S, \mathrm{sp}_S > 0.5$ or $\mathrm{sn}_S, \mathrm{sp}_S < 0.5$ for $N_z \geq 2$.
\begin{lemma}[Kruskal rank $P_2(\tilde{S} \mid Z,S^{P_0})$ ]\label{lemma:p_s_tilde}
The Kruskal rank of 
\begin{align}\label{eq:p_s_tilde}
  \begin{blockarray}{ccccc}
 (0,0) & (1,0) & (0,1) & (1,1)  \\
\begin{block}{[cccc]c}
 1 - \mathrm{sp}_S &  \mathrm{sn}_S & 1-\mathrm{sp}_S & \mathrm{sn}_S       & (s=1,z=1) \\
 1 - \mathrm{sp}_S &  1 - \mathrm{sp}_S & \mathrm{sn}_S & \mathrm{sn}_S     & (s=1,z=2) \\
 \mathrm{sp}_S     &  1 - \mathrm{sn}_S & \mathrm{sp}_S & 1 - \mathrm{sn}_S & (s=0,z=1) \\
 \mathrm{sp}_S     &  \mathrm{sp}_S & 1 - \mathrm{sn}_S & 1 - \mathrm{sn}_S & (s=0,z=2) \\
\end{block}
\end{blockarray}
\end{align}

is $3$ as long as $\mathrm{sn}_S + \mathrm{sp}_S \neq 1$.
\end{lemma}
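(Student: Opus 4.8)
The plan is to reduce the statement to \Cref{lemma:k-rank-psuz} by exhibiting $P_2(\tilde{S}\mid Z, S^{P_0})$ as the image of $P_2(S\mid Z, S^{P_0})$ under an invertible linear map. Let $M$ be the infection misclassification matrix
\[
M=\begin{bmatrix}\mathrm{sn}_S & 1-\mathrm{sp}_S\\ 1-\mathrm{sn}_S & \mathrm{sp}_S\end{bmatrix},
\]
with rows and columns indexed by $\{1,0\}$, so that its $(\tilde s,s)$ entry is $P(\tilde S=\tilde s\mid S=s)$. Under a principal stratum $u=(u_1,u_2)$ we have $S(z)=u_z$ deterministically, so the $\big((s,z),u\big)$ entry of $P_2(\tilde S\mid Z,S^{P_0})$ is $P(\tilde S=s\mid S=u_z)=\sum_{s'}M_{s,s'}\ind{u_z=s'}$, which is exactly the $\big((s,z),u\big)$ entry of $(M\otimes I_2)\,P_2(S\mid Z,S^{P_0})$ for the row ordering $(s{=}1,z{=}1),(s{=}1,z{=}2),(s{=}0,z{=}1),(s{=}0,z{=}2)$ common to \eqref{eq:p_s_tilde} and \Cref{lemma:k-rank-psuz}. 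The first step is to verify this identity, which since there are only four columns is a short column-by-column check.

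Next I would compute $\det M=\mathrm{sn}_S\,\mathrm{sp}_S-(1-\mathrm{sn}_S)(1-\mathrm{sp}_S)=\mathrm{sn}_S+\mathrm{sp}_S-1$, which is nonzero precisely under the hypothesis $\mathrm{sn}_S+\mathrm{sp}_S\neq 1$; hence $M$, and therefore $M\otimes I_2$, is invertible. The final step is the elementary observation that left-multiplication by an invertible matrix is a linear isomorphism, so it sends every set of $k$ linearly independent columns to a set of $k$ linearly independent columns and every linearly dependent set to a linearly dependent set; consequently the Kruskal rank is unchanged. Since $P_2(S\mid Z,S^{P_0})$ has Kruskal rank $3$ by \Cref{lemma:k-rank-psuz}, so does $P_2(\tilde S\mid Z,S^{P_0})=(M\otimes I_2)\,P_2(S\mid Z,S^{P_0})$.

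The only real obstacle is bookkeeping: placing the Kronecker factor on the correct side and matching the block structure to the row ordering in \eqref{eq:p_s_tilde}. If one prefers to mirror the computational style of \Cref{lemma:k-rank-psuz}, the alternative is to check directly that each of the four $4\times 3$ column submatrices of \eqref{eq:p_s_tilde} has a nonzero $3\times 3$ minor — each such minor factoring as a nonzero constant times $(\mathrm{sn}_S+\mathrm{sp}_S-1)$ or $(\mathrm{sn}_S+\mathrm{sp}_S-1)^2$ — together with the fact that $\det P_2(\tilde S\mid Z,S^{P_0})=0$, so the Kruskal rank cannot be $4$. I would favor the factorization argument, since it isolates the role of the hypothesis $\mathrm{sn}_S+\mathrm{sp}_S\neq 1$ entirely in $\det M$ and reuses \Cref{lemma:k-rank-psuz} rather than repeating minor computations.
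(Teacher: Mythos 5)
Your proof is correct, but it takes a genuinely different route from the paper's. The paper argues exactly as in \Cref{lemma:k-rank-psuz}: it observes that every $4\times 3$ column submatrix of \eqref{eq:p_s_tilde} has the form \eqref{eq:minor-1}, computes the common $3\times 3$ minor $a(d-f)-c(b-f)+e(b-d)$ with $(a,b),(c,d),(e,f)$ drawn from the top two rows, and checks by brute force that each such minor equals $\pm(1-\mathrm{sn}_S-\mathrm{sp}_S)^2$, hence is nonzero when $\mathrm{sn}_S+\mathrm{sp}_S\neq 1$; combined with $\det P_2(\tilde S\mid Z,S^{P_0})=0$ this gives Kruskal rank exactly $3$. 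You instead factor the noisy matrix as $P_2(\tilde S\mid Z,S^{P_0})=(M\otimes \mathbf{I}_2)\,P_2(S\mid Z,S^{P_0})$ with $M$ the $2\times 2$ misclassification matrix, note $\det M=\mathrm{sn}_S+\mathrm{sp}_S-1\neq 0$ so $M\otimes \mathbf{I}_2$ is invertible, and use the fact that left multiplication by an invertible matrix preserves linear (in)dependence of any column subset, hence preserves Kruskal rank; the result then follows from \Cref{lemma:k-rank-psuz}. I verified the factorization column by column (column $u=(u_1,u_2)$ of the noiseless matrix is $(u_1,u_2,1-u_1,1-u_2)^T$, and $M\otimes\mathbf{I}_2$ maps it to exactly the corresponding column of \eqref{eq:p_s_tilde}), so the argument is sound. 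What your route buys is conceptual clarity and reusability: the hypothesis $\mathrm{sn}_S+\mathrm{sp}_S\neq 1$ is isolated as invertibility of $M$, no minors need be recomputed, and the same identity $P_{N_z}(\tilde S\mid Z,S^{P_0})=(M\otimes\mathbf{I}_{N_z})P_{N_z}(S\mid Z,S^{P_0})$ would yield \Cref{lemma:p_s_tilde-gen-z} (and the rank statements in \Cref{lemma:rank-p_s_tilde,lemma:rank_p_s_tilde-gen-z}) directly from the noiseless lemmas, bypassing the paper's separate inductions; the paper's computation, by contrast, is self-contained and exhibits the explicit value $(1-\mathrm{sn}_S-\mathrm{sp}_S)^2$ of the minors. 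One cosmetic slip in your parenthetical alternative: the $3\times 3$ minors are all $\pm(\mathrm{sn}_S+\mathrm{sp}_S-1)^2$, not sometimes linear in $(\mathrm{sn}_S+\mathrm{sp}_S-1)$, though this does not affect your main argument.
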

\begin{proof}
All subsets of $3$ columns of the matrix $P_2(\tilde{S} \mid Z, S^{P_0})$ are of the form: 
\begin{align} \label{eq:minor-1}
  \begin{bmatrix}
    a & c & e \\
    b & d & f \\
    1 - a & 1 - c & 1 - e  \\
    1 - b & 1 - d & 1 - f 
  \end{bmatrix}.
\end{align}
These submatrices have a common maximal minor of 
$$ 
a (d - f) - c(b - f) + e(b - d).
$$
The quantities $a,b,c,d,e,f$ are the elements of the $2\times 3$ matrix
\begin{align}
  \begin{blockarray}{ccc}
\begin{block}{[ccc]}
  a & c & e\\
  b & d & f \\
\end{block}
\end{blockarray}
\end{align}
in which $(a,b)^T,(c,d)^T,(e,f)^T$ are any $3$ columns drawn without replacement from the $2 \times 4$ submatrix of \Cref{eq:p_s_tilde}:
\begin{align}
  \begin{blockarray}{cccc}
\begin{block}{[cccc]}
 1 - \mathrm{sp}_S & \mathrm{sn}_S & 1-\mathrm{sp}_S & \mathrm{sn}_S     \\
 1 - \mathrm{sp}_S & 1 - \mathrm{sp}_S & \mathrm{sn}_S & \mathrm{sn}_S    \\
\end{block}
\end{blockarray}.
\end{align}
These minors are all equal to (up to a factor of $-1$):
$$
(1 - \mathrm{sn}_S  - \mathrm{sp}_S)^2,
$$
which can be seen after a brute-force calculation.
The minors are nonzero for all $\mathrm{sn}_S, \mathrm{sp}_S \in [0,1]$ such that $\mathrm{sn}_S + \mathrm{sp}_S \neq 1$. Thus, by the determinantal rank definition, all $3$ column matrices are rank $3$.
In contrast, the determinant of $P_2(\tilde{S} \mid Z,S^{P_0})$ is $0$ for all values of $\mathrm{sn}_S,\mathrm{sp}_S$.
Thus by the definition of Kruskal rank in \Cref{defn:krank}, $k_{P_2(\tilde{S} \mid Z,S^{P_0})} = 3$.
\end{proof}

\begin{lemma}[Kruskal rank $P_{N_z}(\tilde{S} \mid Z, S^{P_0}),\, N_z \geq 2$ ]\label{lemma:p_s_tilde-gen-z}
The Kruskal rank of $P(\tilde{S} \mid Z, S^{P_0})$ for $N_z \geq 2$ is $3$
as long as $\mathrm{sn}_S + \mathrm{sp}_S \neq 1$.
\end{lemma}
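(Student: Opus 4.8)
The plan is to deduce the statement directly from the noiseless result \Cref{lemma:k-rank-psuz-z-general}, by recognizing that $P_{N_z}(\tilde{S} \mid Z, S^{P_0})$ is obtained from $P_{N_z}(S \mid Z, S^{P_0})$ through left multiplication by a single $2N_z \times 2N_z$ matrix $T_S$ that is invertible exactly when $\mathrm{sn}_S + \mathrm{sp}_S \neq 1$. An invertible linear map preserves the linear independence of every subset of columns, hence preserves the Kruskal rank, so the desired value is the one already supplied by \Cref{lemma:k-rank-psuz-z-general}, namely $3$.

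Concretely, the first step is to verify the factorization
\[
P_{N_z}(\tilde{S} \mid Z, S^{P_0}) \;=\; T_S\, P_{N_z}(S \mid Z, S^{P_0}), \qquad
T_S \;=\; \begin{bmatrix} \mathrm{sn}_S\,\mathbf{I}_{N_z} & (1-\mathrm{sp}_S)\,\mathbf{I}_{N_z} \\ (1-\mathrm{sn}_S)\,\mathbf{I}_{N_z} & \mathrm{sp}_S\,\mathbf{I}_{N_z} \end{bmatrix} \in \R^{2N_z \times 2N_z}.
\]
Recalling from \Cref{subsec:defn} that column $j$ of $P_{N_z}(S \mid Z, S^{P_0})$ has $i$-th coordinate $\varpi_{N_z}(j-1)_i$ when $i \le N_z$ and $1 - \varpi_{N_z}(j-1)_{i-N_z}$ when $i > N_z$, the $i$-th coordinate of column $j$ of $T_S\,P_{N_z}(S \mid Z, S^{P_0})$ equals $\mathrm{sn}_S\,\varpi_{N_z}(j-1)_i + (1-\mathrm{sp}_S)\big(1 - \varpi_{N_z}(j-1)_i\big)$ for $i \le N_z$ and $(1-\mathrm{sn}_S)\,\varpi_{N_z}(j-1)_{i-N_z} + \mathrm{sp}_S\big(1 - \varpi_{N_z}(j-1)_{i-N_z}\big)$ for $i > N_z$. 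Since $\varpi_{N_z}(j-1)_i \in \{0,1\}$ and $ax + b(1-x) = a^{x}b^{1-x}$ for $x \in \{0,1\}$, these are exactly the entries $P_{N_z}(\tilde{S} \mid Z, S^{P_0})_{ij}$ written down in \Cref{subsec:defn}. This coordinatewise check is the only part of the argument that requires any care.

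The remaining steps are routine. Reordering rows and columns of $T_S$ so that index $i$ sits next to index $i+N_z$ makes $T_S$ block-diagonal with $N_z$ identical $2 \times 2$ blocks having rows $(\mathrm{sn}_S,\,1-\mathrm{sp}_S)$ and $(1-\mathrm{sn}_S,\,\mathrm{sp}_S)$, so $\det T_S = \big(\mathrm{sn}_S\mathrm{sp}_S - (1-\mathrm{sp}_S)(1-\mathrm{sn}_S)\big)^{N_z} = (\mathrm{sn}_S + \mathrm{sp}_S - 1)^{N_z}$, which is nonzero precisely because $\mathrm{sn}_S + \mathrm{sp}_S \neq 1$. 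As $T_S$ is invertible, a set of columns of $T_S\,P_{N_z}(S \mid Z, S^{P_0})$ is linearly independent if and only if the corresponding set of columns of $P_{N_z}(S \mid Z, S^{P_0})$ is, so by \Cref{defn:krank} the two matrices share the same Kruskal rank; \Cref{lemma:k-rank-psuz-z-general} identifies this common value as $3$ for every $N_z \ge 2$, completing the proof.

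If one prefers to parallel the noiseless development instead, the alternative is to rerun the induction in the proof of \Cref{lemma:k-rank-psuz-z-general} verbatim, using \Cref{lemma:p_s_tilde} as the $N_z = 2$ base case: the column-block splitting into indices $\{1,\dots,2^{n}\}$ and $\{2^{n}+1,\dots,2^{n+1}\}$ is unchanged, the block-rank lower bound invoked there still applies, and the only new ingredient is that the $3 \times 3$ minors now evaluate to $\pm(1 - \mathrm{sn}_S - \mathrm{sp}_S)^2$, which is nonzero under the hypothesis. I expect the main obstacle in either route to be purely bookkeeping --- getting the orientation of the change of coordinates right in the first route, or carefully tracking the minor signs in the second --- since no genuinely new idea beyond \Cref{lemma:k-rank-psuz-z-general} and \Cref{lemma:p_s_tilde} is needed.
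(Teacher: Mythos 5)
Your proof is correct, but it takes a genuinely different route from the paper. The paper proves this lemma by rerunning the induction on $N_z$ used for the noiseless matrix: it takes \Cref{lemma:p_s_tilde} as the base case, splits the $3$-column submatrices of $P_{n+1}(\tilde{S}\mid Z, S^{P_0})$ according to whether the column indices lie in $\{1,\dots,2^n\}$ or $\{2^n+1,\dots,2^{n+1}\}$, and handles the genuinely new configuration (a column paired with its ``shifted'' copy $m=j+2^n$) by row reduction and the block rank bound of \Cref{lemma:block-rank} — essentially your second, alternative route. You instead observe the factorization $P_{N_z}(\tilde{S}\mid Z, S^{P_0}) = T_S\, P_{N_z}(S\mid Z, S^{P_0})$ with $T_S$ the block confusion matrix, check it coordinatewise (your verification is right: for $i\le N_z$ the entry is $\mathrm{sn}_S$ or $1-\mathrm{sp}_S$ according to $\varpi_{N_z}(j-1)_i$, and symmetrically below), compute $\det T_S = (\mathrm{sn}_S+\mathrm{sp}_S-1)^{N_z}$, and use the fact that an invertible left factor preserves linear independence of every column subset, hence the Kruskal rank, so \Cref{lemma:k-rank-psuz-z-general} delivers the value $3$. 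What your route buys: it is shorter, it makes transparent that the hypothesis $\mathrm{sn}_S+\mathrm{sp}_S\neq 1$ is exactly invertibility of the misclassification map, and the same one-line argument also yields the companion rank result (\Cref{lemma:rank_p_s_tilde-gen-z} follows from \Cref{lemma:rank-psuz-z-general} since invertible left multiplication preserves rank), so both noisy lemmas come for free from their noiseless counterparts. What the paper's induction buys is self-containedness in the same style as the noiseless development, with the explicit minor value $\pm(1-\mathrm{sn}_S-\mathrm{sp}_S)^2$ appearing along the way; but your change-of-basis argument is sound and, if anything, the cleaner of the two.
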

\begin{proof}
We proceed by induction. For $N_z = 2$, \Cref{lemma:p_s_tilde} shows that the Kruskal rank is $3$.
Let $N_z = n$ for $n > 2$. Recall that $P_n(\tilde{S} \mid Z, S^{P_0})$ is the $2n \times 2^{n}$ matrix with column $j$
$$
\begin{bmatrix}
s_j \\
\mathbf{1}_n - s_j
\end{bmatrix}
$$
with the $i^\mathrm{th}$ element of $s_j$ denoted $s_{ij}$ and defined as:
$$
s_{ij} = \mathrm{sn}_S^{\varpi_n(j-1)_i} (1 - \mathrm{sp}_S)^{1-\varpi_n(j-1)_i}.
$$
The induction hypothesis is that the Kruskal rank of $P_{n}(\tilde{S} \mid Z, S^{P_0})$ is $3$.
The columns of $P_{n+1}(\tilde{S} \mid Z, S^{P_0})$ are of the form
$$
\begin{bmatrix}
s_j \\
1 - \mathrm{sp}_S\\
\mathbf{1}_n - s_j \\
\mathrm{sp}_S
\end{bmatrix}
$$
for $j \in \{1, \dots, 2^{n}\}$,
and
$$
\begin{bmatrix}
s_{j-2^{n}} \\
\mathrm{sn}_S\\
\mathbf{1}_n - s_{j-2^{n}} \\
1 - \mathrm{sn}_S
\end{bmatrix}
$$
for $j \in \{2^{n} + 1, \dots, 2^{n+1}\}$.
The $3$-column submatrices of $P_{N_z}(\tilde{S} \mid Z, S^{P_0})$ made from column $j,\ell,m$ indices fall into several classes.
When $j,\ell,m \in \{1, \dots, 2^{n}\}, \,j,\ell,m \in \{2^{n} + 1, \dots, 2^{n+1}\}$ or $j,\ell \in \{1, \dots, 2^{n}\}, m \in \{2^{n} + 1, \dots, 2^{n+1}\} \setminus \{j + 2^{n}, \ell + 2^{n}\}$, $j \in \{1, \dots, 2^{n}\}, m, \ell \in \{2^{n} + 1, \dots, 2^{n+1}\} \setminus \{j + 2^{n}\}$  all matrices are rank $3$ by the induction hypothesis. 
When $j,\ell \in \{1, \dots, 2^{n}\}$ but $m \in \{j + 2^{n}, \ell + 2^{n}\}$ the submatrix is
\begin{align*}
\begin{bmatrix}
s_j & s_\ell & s_{m-2^{n}} \\
1 - \mathrm{sp}_S & 1 - \mathrm{sp}_S&\mathrm{sn}_S\\
\mathbf{1}_n - s_j & \mathbf{1}_n - s_\ell & \mathbf{1}_n - s_{m-2^{n}} \\
\mathrm{sp}_S & \mathrm{sp}_S & 1 - \mathrm{sn}_S
\end{bmatrix}.
\end{align*}
WLOG, let $m = j + 2^{n}$. This leads to the submatrix:
\begin{align*}
\begin{bmatrix}
s_j & s_\ell & s_{j} \\
1 - \mathrm{sp}_S & 1 - \mathrm{sp}_S&\mathrm{sn}_S\\
\mathbf{1}_n - s_j & \mathbf{1}_n - s_\ell & \mathbf{1}_n - s_{j} \\
\mathrm{sp}_S & \mathrm{sp}_S & 1 - \mathrm{sn}_S
\end{bmatrix}.
\end{align*}
The rank of this submatrix is 
\begin{align*}
\mathrm{rank}\begin{bmatrix}
s_j & s_\ell & s_{j} \\
1 - \mathrm{sp}_S & 1 - \mathrm{sp}_S&\mathrm{sn}_S\\
\mathbf{1}_n - s_j & \mathbf{1}_n - s_\ell & \mathbf{1}_n - s_{j} \\
\mathrm{sp}_S & \mathrm{sp}_S & 1 - \mathrm{sn}_S
\end{bmatrix} & = 
\mathrm{rank}\begin{bmatrix}
s_j & s_\ell & s_{j} \\
1 - \mathrm{sp}_S & 1 - \mathrm{sp}_S&\mathrm{sn}_S\\
\mathbf{1}_n - s_j & \mathbf{1}_n - s_\ell & \mathbf{1}_n - s_{j} \\
1 & 1 & 1
\end{bmatrix} \\
& = 
\mathrm{rank}\begin{bmatrix}
s_j & s_\ell & s_{j} \\
\mathbf{1}_n - s_j & \mathbf{1}_n - s_\ell & \mathbf{1}_n - s_{j} \\
1 - \mathrm{sp}_S & 1 - \mathrm{sp}_S&\mathrm{sn}_S\\
0 & 0 & 1 - \frac{\mathrm{sn}_S}{1 - \mathrm{sp}_S}
\end{bmatrix} \\
& \geq 
\mathrm{rank}\begin{bmatrix}
s_j & s_\ell            \\ 
\mathbf{1}_n - s_j & \mathbf{1}_n - s_\ell      \\
1 - \mathrm{sp}_S & 1 - \mathrm{sp}_S\\
0 & 0    
\end{bmatrix} + \mathrm{rank}(1 - \frac{\mathrm{sn}_S}{1 - \mathrm{sp}_S}) \\
& = 3.
\end{align*}
The inequality follows from \Cref{lemma:block-rank-alt}.
Other scenarios follow similarly.
\end{proof}

\begin{lemma}[Domain restriction lemma]\label{lemma:domain}
If $\mathrm{sn}_S, \mathrm{sp}_S \in [0,0.5)$ or $\mathrm{sn}_S, \mathrm{sp}_S \in (0.5,1]$, the matrix $P_{N_z}(\tilde{S} \mid Z, S^{P_0}) \in \R^{2N_z \times 2^{N_z}}$ has column domains that are not invariant to column permutation.
\end{lemma}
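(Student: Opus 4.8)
The plan is to extract from the explicit entries of $P_{N_z}(\tilde{S} \mid Z, S^{P_0})$ an intrinsic combinatorial label on each column --- its domain, in the sense introduced in Section~\ref{sec:krank} --- and to show that under the stated half-interval restriction on $\mathrm{sn}_S,\mathrm{sp}_S$ this labeling is a bijection onto the principal strata, hence is disturbed by every nontrivial column permutation. Recall that the $j$-th column of $P_{N_z}(\tilde{S} \mid Z, S^{P_0})$ is $\big(s_j^{\top},(\mathbf{1}_{N_z}-s_j)^{\top}\big)^{\top}$, where the $i$-th entry of $s_j$ equals $\mathrm{sn}_S$ when $\varpi_{N_z}(j-1)_i=1$ and equals $1-\mathrm{sp}_S$ when $\varpi_{N_z}(j-1)_i=0$; thus every coordinate of column $j$ lies in $\{\mathrm{sn}_S,1-\mathrm{sn}_S,\mathrm{sp}_S,1-\mathrm{sp}_S\}$, and the top block of column $j$ records the binary string $\varpi_{N_z}(j-1)$ while the bottom block records its complement.

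First I would treat $\mathrm{sn}_S,\mathrm{sp}_S\in(1/2,1]$. Then $\mathrm{sn}_S>1/2>1-\mathrm{sp}_S$ and $\mathrm{sp}_S>1/2>1-\mathrm{sn}_S$, so in the $z$-th pair of coordinates $\{(1,z),(0,z)\}$ of column $j$ the larger entry sits in row $(1,z)$ exactly when $\varpi_{N_z}(j-1)_z=1$. Hence the domain of column $j$, namely the index set $\{z:\varpi_{N_z}(j-1)_z=1\}$ singled out by this comparison, is in bijection with $\varpi_{N_z}(j-1)$ and therefore with $j$. Consequently the $2^{N_z}$ column domains are pairwise distinct, so applying any $\sigma\neq\mathrm{id}$ to the columns permutes the list of domains nontrivially, which is the claim. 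The regime $\mathrm{sn}_S,\mathrm{sp}_S\in[0,1/2)$ is handled identically with all inequalities reversed --- now the smaller entry of each pair marks $\varpi_{N_z}(j-1)_z=1$ --- or, more slickly, by noting that it is the image of the previous regime under the involution $(\mathrm{sn}_S,\mathrm{sp}_S)\mapsto(1-\mathrm{sp}_S,1-\mathrm{sn}_S)$, which also swaps the roles of the two coordinate blocks, and then invoking the first case. The boundary values $\mathrm{sn}_S\in\{0,1\}$ or $\mathrm{sp}_S\in\{0,1\}$ pose no difficulty (some entries become $0$ or $1$, but the comparison against $1/2$ still recovers $\varpi_{N_z}(j-1)$), and $\mathrm{sn}_S+\mathrm{sp}_S\neq1$ is automatic under the hypothesis.

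The step carrying the real content, and the reason the half-interval hypothesis is indispensable, is the observation that \emph{without} it the labeling collapses: a direct check shows that replacing $(\mathrm{sn}_S,\mathrm{sp}_S)$ by $(1-\mathrm{sp}_S,1-\mathrm{sn}_S)$ and simultaneously complementing every bit of every $\varpi_{N_z}(j-1)$ --- a nontrivial column permutation of $P_{N_z}(\tilde{S} \mid Z, S^{P_0})$ --- reproduces the very same matrix, so over the full range $\mathrm{sn}_S+\mathrm{sp}_S\neq1$ the column domains are permutation-ambiguous. Restricting both $\mathrm{sn}_S$ and $\mathrm{sp}_S$ to the same side of $1/2$ removes exactly this symmetry, since the swapped pair then lands on the opposite side and is excluded. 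I expect the main subtlety to be confirming that this complement symmetry is the \emph{only} obstruction, but that follows from the injectivity of $j\mapsto\varpi_{N_z}(j-1)$ together with the fact established above that the $z$-block comparisons recover the bits of $\varpi_{N_z}(j-1)$ whenever both parameters lie strictly on one side of $1/2$.
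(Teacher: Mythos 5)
Your proof is correct, and it reaches the conclusion by a genuinely more direct route than the paper. The paper proves the lemma by induction on $N_z$: a base case at $N_z=2$ argued by contradiction against explicit $4\times 4$ permutation matrices, and an induction step that checks which pairs of columns of $P_{n+1}(\tilde{S}\mid Z,S^{P_0})$ could conceivably be exchanged without altering the domain pattern. You instead observe, for all $N_z$ at once, that when $\mathrm{sn}_S$ and $\mathrm{sp}_S$ lie on the same side of $1/2$ the values $\mathrm{sn}_S$ and $1-\mathrm{sp}_S$ lie on opposite sides, so the half-interval pattern of column $j$ is a faithful encoding of the binary string $\varpi_{N_z}(j-1)$; injectivity of $j\mapsto\varpi_{N_z}(j-1)$ then gives pairwise distinct column domains, which is stronger than needed and immediately rules out every nontrivial permutation. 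This shares the paper's key dichotomy (it is exactly the ``$\mathrm{sn}_S$ and $1-\mathrm{sp}_S$ on the same domain'' criterion in the paper's base case) but dispenses with the induction, and your closing remark identifying the $(\mathrm{sn}_S,\mathrm{sp}_S)\mapsto(1-\mathrm{sp}_S,1-\mathrm{sn}_S)$ plus bit-complement symmetry as the obstruction removed by the hypothesis is a nice piece of insight the paper leaves implicit. One small caution: your ``more slickly'' aside in the second paragraph, claiming the low-parameter regime is obtained from the high one by the involution ``which also swaps the roles of the two coordinate blocks,'' is not quite the right description of the symmetry (the swapped parameters reproduce the matrix with columns bit-complemented, not with blocks exchanged); this does not affect correctness, since your primary argument with the inequalities reversed already covers that regime, but the aside should either be stated as in your third paragraph or dropped.
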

\begin{proof}
We prove \Cref{lemma:domain} by induction on $N_z$. The base case is $N_z = 2$. Let $P$ be a $4\times 4$ permutation matrix
and let $P_{2}(\tilde{S} \mid Z, S^{P_0})$ be
\begin{align}
\begin{blockarray}{ccccc}
(0,0) & (1,0) & (0,1) & (1,1)   \\
\begin{block}{[cccc]c}
1 - \mathrm{sp}_S &   \mathrm{sn}_S & 1 - \mathrm{sp}_S & \mathrm{sn}_S     & (s=1,z=1) \\
1 - \mathrm{sp}_S &   1 - \mathrm{sp}_S & \mathrm{sn}_S & \mathrm{sn}_S     & (s=1,z=2) \\
\mathrm{sp}_S     &   1 - \mathrm{sn}_S & \mathrm{sp}_S & 1 - \mathrm{sn}_S & (s=0,z=1)     \\
\mathrm{sp}_S     &   \mathrm{sp}_S & 1 - \mathrm{sn}_S & 1 - \mathrm{sn}_S &  (s=0,z=2)     \\
\end{block}
\end{blockarray}
\end{align}
Recall from the definition in \Cref{subsec:defn} that the column indices $\{1,2,3,4\}$ of $P_2(\tilde{S} \mid Z, S^{P_0})$ map to the following principal strata $S^{P_0}$: $\varpi_2(0), \varpi_2(1), \varpi_2(2), \varpi_2(3)$.
In other words, column index $j$ is mapped to $S^{P_0}$ via the relation $\varpi_2(j-1)$.
We consider permutation matrix $P$ without loss of generality, and other cases are similarly shown,
$$ 
P = 
\begin{bmatrix}
0 & 0 & 0 & 1 \\
0 & 1 & 0 & 0 \\
0 & 0 & 1 & 0 \\
1 & 0 & 0 & 0
\end{bmatrix}
$$
Let $\mathcal{C} = [0, 1]$, and let $\mathcal{A}$ be one of two half intervals of $[0,1]$: $[0,0.5)$ or $(0.5,1]$. 
Let $\mathcal{B} = \mathcal{C} \setminus \mathcal{A}$. 
Note that $P_2(\tilde{S} \mid Z, S^{P_0})$ maps $\mathcal{C}\times\mathcal{C}$ to a matrix with elements in $\mathcal{C}$. 
Let $1 - \mathrm{sp}_S \in \mathcal{A}$ and let $\mathrm{sn}_S \in \mathcal{B}$ and suppose that the column domains for $P_2(\tilde{S} \mid Z, S^{P_0})$ are not invariant after permutation by matrix $P$. 
Then we have the following domain for the map given by $P_2(\tilde{S} \mid Z, S^{P_0})$:
$$ 
P_2(\tilde{S} \mid Z, S^{P_0})\mid_{\mathcal{A} \times \mathcal{B}}: \mathcal{A} \times \mathcal{B} \to
\begin{blockarray}{cccc}
(0,0) & (1,0) & (0,1) & (1,1)   \\
\begin{block}{[cccc]}
\mathcal{A} &   \mathcal{A} & \mathcal{A} & \mathcal{A}      \\
\mathcal{A} &   \mathcal{A} & \mathcal{A} & \mathcal{A}      \\
\mathcal{B}     &   \mathcal{B} & \mathcal{B} & \mathcal{B}      \\
\mathcal{B}     &   \mathcal{B} & \mathcal{B} & \mathcal{B} )     \\
\end{block}
\end{blockarray}
$$
However, we have,
\begin{align}
    \bar{P}_2(\tilde{S} \mid Z, S^{P_0})\mid_{\mathcal{A} \times \mathcal{B}} & =P_2(\tilde{S} \mid Z, S^{P_0})\mid_{\mathcal{A} \times \mathcal{B}} P\\
    & =   
\begin{blockarray}{cccc}
(0,0) & (1,0) & (0,1) & (1,1)   \\
\begin{block}{[cccc]}
1 - \mathrm{sp}_S &   \mathrm{sn}_S & 1 - \mathrm{sp}_S & \mathrm{sn}_S     \\
1 - \mathrm{sp}_S &   1 - \mathrm{sp}_S & \mathrm{sn}_S & \mathrm{sn}_S     \\
\mathrm{sp}_S     &   1 - \mathrm{sn}_S & \mathrm{sp}_S & 1 - \mathrm{sn}_S     \\
\mathrm{sp}_S     &   \mathrm{sp}_S & 1 - \mathrm{sn}_S & 1 - \mathrm{sn}_S      \\
\end{block}
\end{blockarray}
    \begin{bmatrix}
0 & 0 & 0 & 1 \\
0 & 1 & 0 & 0 \\
0 & 0 & 1 & 0 \\
1 & 0 & 0 & 0
\end{bmatrix}\\
& = 
\begin{blockarray}{cccc}
(0,0) & (1,0) & (0,1) & (1,1)   \\
\begin{block}{[cccc]}
\mathrm{sn}_S     & 1 - \mathrm{sp}_S &   \mathrm{sn}_S & 1 - \mathrm{sp}_S \\
\mathrm{sn}_S     & 1 - \mathrm{sp}_S &   1 - \mathrm{sp}_S & \mathrm{sn}_S \\
1 - \mathrm{sn}_S & \mathrm{sp}_S     &   1 - \mathrm{sn}_S & \mathrm{sp}_S     \\
1 - \mathrm{sn}_S & \mathrm{sp}_S     &   \mathrm{sp}_S & 1 - \mathrm{sn}_S      \\
\end{block}
\end{blockarray}.
\end{align}
But we see that the column domains are invariant after column permutation:
$$ 
\bar{P}_2(\tilde{S} \mid Z, S^{P_0})\mid_{\mathcal{A} \times \mathcal{B}}: \mathcal{A} \times \mathcal{B} \to
\begin{blockarray}{ccccc}
(0,0) & (1,0) & (0,1) & (1,1)   \\
\begin{block}{[cccc]c}
\mathcal{A} &   \mathcal{A} & \mathcal{A} & \mathcal{A}     & (s=1,z=1) \\
\mathcal{A} &   \mathcal{A} & \mathcal{A} & \mathcal{A}     & (s=1,z=2) \\
\mathcal{B}     &   \mathcal{B} & \mathcal{B} & \mathcal{B} & (s=0,z=1)     \\
\mathcal{B}     &   \mathcal{B} & \mathcal{B} & \mathcal{B} &  (s=0,z=2)     \\
\end{block}
\end{blockarray}
$$
In order for the columns  $\bar{P}_2(\tilde{S} \mid Z, S^{P_0})\mid_{\mathcal{A} \times \mathcal{B}}$ to be on the same domain as $P_2(\tilde{S} \mid Z, S^{P_0})\mid_{\mathcal{A} \times \mathcal{B}}$, a necessary and sufficient condition is that $\mathrm{sn}_S$ and $1 - \mathrm{sp}_S$ are on the same domain.
In other words, $\{\mathrm{sn}_S \in \mathcal{A}, \mathrm{sp}_S \in \mathcal{B}\}$ or $\{\mathrm{sn}_S \in \mathcal{B}, \mathrm{sp}_S \in \mathcal{A}\}$. 

Thus $\bar{P}_2(\tilde{S} \mid Z, S^{P_0})\mid_{\mathcal{A} \times \mathcal{B}}$ maps $(\mathrm{sn}_S, \mathrm{sp}_S)$ to the same space that $P_2(\tilde{S} \mid Z, S^{P_0})\mid_{\mathcal{A} \times \mathcal{B}}$. 
We contradict our statement that the columns are not invariant to permutation.

The case for $N_z > 2$. Let $N_z = n > 2$ and let the column domains of $P_n(\tilde{S} \mid Z, S^{P_0})$ be not invariant to permutation.
Furthermore suppose that $\mathrm{sn}_S, \mathrm{sp}_S \in \mathcal{A}$ or $\mathrm{sn}_S, \mathrm{sp}_S \in \mathcal{B}$.
Then matrix $P_{n+1}(\tilde{S} \mid Z, S^{P_0})$ has columns 
$$
\begin{bmatrix}
s_j \\
1 - \mathrm{sp}_S\\
\mathbf{1}_n - s_j \\
\mathrm{sp}_S
\end{bmatrix}
$$
for $j \in \{1, \dots, 2^{n}\}$
and
$$
\begin{bmatrix}
s_{j-2^{n}} \\
\mathrm{sn}_S\\
\mathbf{1}_n - s_{j-2^{n}} \\
1 - \mathrm{sn}_S
\end{bmatrix}
$$
for $j \in \{2^{n} + 1, \dots, 2^{n+1}\}$.
Permuting any two columns $j, k \in \{1, \dots, 2^{n}\}$  or $j, k \in \{2^{n} + 1, \dots, 2^{n+1}\}$ yields different column domains given the induction hypothesis.
If $j \in \{1, \dots, 2^{n}\}$ and $k = j + 2^{n}$, then the columns are 
$$
\begin{bmatrix}
s_j & s_j\\
1 - \mathrm{sp}_S & \mathrm{sn}_S\\
\mathbf{1}_n - s_j &\mathbf{1}_n - s_j \\
\mathrm{sp}_S & 1 -  \mathrm{sn}_S
\end{bmatrix}
$$
Let the domain of $s_j$ be $\mathcal{D}$, and let $\mathcal{D}^{\mathsf{c}} = [0,1]^n \setminus \mathcal{D}$ be the domain of $\mathbf{1}_n - s_j$. Then the domains are
$$
\begin{bmatrix}
\mathcal{D} & \mathcal{D}\\
\mathcal{A} & \mathcal{B}\\
\mathcal{D}^{\mathsf{c}} &\mathcal{D}^{\mathsf{c}} \\
\mathcal{B}& \mathcal{A}
\end{bmatrix}
$$ if $\mathrm{sp}_S,\mathrm{sn}_S \in \mathcal{B}$ and  
$$
\begin{bmatrix}
\mathcal{D} & \mathcal{D}\\
\mathcal{B} & \mathcal{A}\\
\mathcal{D}^{\mathsf{c}} &\mathcal{D}^{\mathsf{c}} \\
\mathcal{A}& \mathcal{B}
\end{bmatrix}
$$ if $\mathrm{sp}_S,\mathrm{sn}_S \in \mathcal{A}$.  
These two columns are not invariant to permutation.
Because no two columns may be interchanged without a change in domain, right multiplying $P_{n+1}(\tilde{S} \mid Z, S^{P_0})$ by any $2^{n+1} \times 2^{n+1}$ permutation matrix $P\neq \mathbf{I}_{n+1}$ to will yield a matrix with different column domains than $P_{n+1}(\tilde{S} \mid Z, S^{P_0})$.
\end{proof}

\section{Rank properties related to VE}\label{subsec:rank-ve}

In this section we show that when $N_z \geq 2$ the rank of  $P_{N_z}(\tilde{S} \mid Z, S^{P_0}) = N_z + 1$ when $\mathrm{sn}_S + \mathrm{sp}_S \neq 1$.
\begin{lemma}[Rank $P_2(\tilde{S} \mid Z,S^{P_0})$ ]\label{lemma:rank-p_s_tilde}
  The rank of $P_2(\tilde{S} \mid Z,S^{P_0})$, defined in \Cref{eq:p_s_tilde}, is $3$ as long as $\mathrm{sn}_S + \mathrm{sp}_S \neq 1$.
\end{lemma}
\begin{proof}
The determinant of $P_2(\tilde{S} \mid Z,S^{P_0})$ is $0$.
The determinant of the $3$-minor $M_{4,4}$ is $(1 - \mathrm{sn}_S - \mathrm{sp}_S)^2$ which is nonzero as long as $\mathrm{sn}_S + \mathrm{sp}_S \neq 1$.
\end{proof}

\begin{lemma}[Rank $P_{N_z}(\tilde{S} \mid Z, S^{P_0}),\, N_z \geq 2$ ]\label{lemma:rank_p_s_tilde-gen-z}
The rank of $P(\tilde{S} \mid Z, S^{P_0})$ for $N_z \geq 2$ is $N_z + 1$
as long as $\mathrm{sn}_S + \mathrm{sp}_S \neq 1$.
\end{lemma}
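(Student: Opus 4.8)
The plan is to run the same induction on $N_z$ used for the noiseless matrix in \Cref{lemma:rank-psuz-z-general}, taking \Cref{lemma:rank-p_s_tilde} as the base case $N_z = 2$ (rank $3 = N_z + 1$ whenever $\mathrm{sn}_S + \mathrm{sp}_S \neq 1$). Assuming $\mathrm{rank}\lp P_n(\tilde{S} \mid Z, S^{P_0}) \rp = n+1$, I would use the column descriptions of $P_{n+1}(\tilde{S} \mid Z, S^{P_0})$ recorded in the proof of \Cref{lemma:p_s_tilde-gen-z}: the $2n$ rows carrying the $s_j$ and $\mathbf{1}_n - s_j$ coordinates, once reordered, reproduce $P_n(\tilde{S} \mid Z, S^{P_0})$ in \emph{both} the left ($j \le 2^n$) and right ($j > 2^n$) column blocks, so that after a row permutation
\[
P_{n+1}(\tilde{S} \mid Z, S^{P_0}) \cong \begin{bmatrix} P_n(\tilde{S} \mid Z, S^{P_0}) & P_n(\tilde{S} \mid Z, S^{P_0}) \\ (1 - \mathrm{sp}_S)\mathbf{1}_{2^n}^T & \mathrm{sn}_S \mathbf{1}_{2^n}^T \\ \mathrm{sp}_S \mathbf{1}_{2^n}^T & (1 - \mathrm{sn}_S)\mathbf{1}_{2^n}^T \end{bmatrix}.
\]

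Next I would verify the side-condition needed to split off the block: since the row of $P_n(\tilde{S} \mid Z, S^{P_0})$ indexed by $(s=1, z)$ and the one indexed by $(s=0, z)$ sum to $\mathbf{1}_{2^n}^T$ for every $z \le n$, we have $\mathbf{1}_{2^n} \in \mathrm{range}\lp P_n(\tilde{S} \mid Z, S^{P_0})^T \rp$. This is exactly the hypothesis under which \Cref{lemma:block-rank-ext} applies, giving
\[
\mathrm{rank}\lp P_{n+1}(\tilde{S} \mid Z, S^{P_0})\rp = \mathrm{rank}\lp P_n(\tilde{S} \mid Z, S^{P_0})\rp + \mathrm{rank}\lp \begin{bmatrix} \mathrm{sn}_S \mathbf{1}_{2^n}^T \\ (1 - \mathrm{sn}_S)\mathbf{1}_{2^n}^T \end{bmatrix} - \begin{bmatrix} (1 - \mathrm{sp}_S)\mathbf{1}_{2^n}^T \\ \mathrm{sp}_S \mathbf{1}_{2^n}^T \end{bmatrix} \rp.
\]
The second matrix equals $(1 - \mathrm{sn}_S - \mathrm{sp}_S)$ times the $2 \times 2^n$ matrix whose two rows are $-\mathbf{1}_{2^n}^T$ and $\mathbf{1}_{2^n}^T$, hence has rank $1$ precisely because $\mathrm{sn}_S + \mathrm{sp}_S \neq 1$ — this is the only place the assumption is used. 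Combining with the induction hypothesis yields $\mathrm{rank}\lp P_{n+1}(\tilde{S} \mid Z, S^{P_0})\rp = (n+1) + 1 = n+2$, which is $N_z + 1$ for $N_z = n+1$, closing the induction.

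I expect essentially no conceptual obstacle: the argument is structurally identical to the noiseless case of \Cref{lemma:rank-psuz-z-general}, with the scalar $1 - \mathrm{sn}_S - \mathrm{sp}_S$ playing the role that the constant $1$ played there. The only care needed is bookkeeping — confirming that the row permutation genuinely exposes $P_n(\tilde{S} \mid Z, S^{P_0})$ in both column groups, that the new bottom two rows carry exactly the scalars $1-\mathrm{sp}_S,\mathrm{sn}_S$ and $\mathrm{sp}_S,1-\mathrm{sn}_S$, and that \Cref{lemma:block-rank-ext} is invoked with the verified range condition on $\mathbf{1}_{2^n}$ — after which the rank computation is immediate.
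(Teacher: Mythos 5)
Your proposal is correct and follows essentially the same route as the paper's proof: induction on $N_z$ with \Cref{lemma:rank-p_s_tilde} as the base case, the same row-permuted block form with $P_n(\tilde{S}\mid Z,S^{P_0})$ repeated in both column blocks, the same verification that $\mathbf{1}_{2^n}\in\mathrm{range}\lp P_n(\tilde{S}\mid Z,S^{P_0})^T\rp$ so that \Cref{lemma:block-rank-ext} applies, and the same rank-one difference block contributing $+1$ exactly when $\mathrm{sn}_S+\mathrm{sp}_S\neq 1$. The only difference is that you spell out the factorization of the difference block as $(1-\mathrm{sn}_S-\mathrm{sp}_S)$ times a fixed rank-one matrix, which the paper leaves implicit.
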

\begin{proof}
We proceed by induction. For $N_z = 2$, \Cref{lemma:rank-p_s_tilde} shows that the rank is $3$.
Let $N_z = n$ for $n > 2$. Recall that $P_n(\tilde{S} \mid Z, S^{P_0})$ is the $2n \times 2^{n}$ matrix with column $j$
$$
\begin{bmatrix}
s_j \\
\mathbf{1}_n - s_j
\end{bmatrix}
$$
with the $i^\mathrm{th}$ element of $s_j$ denoted $s_{ij}$ and defined as:
$$
s_{ij} = \mathrm{sn}_S^{\varpi_n(j-1)_i} (1 - \mathrm{sp}_S)^{1-\varpi_n(j-1)_i}.
$$
The induction hypothesis is that the rank of $P_{n}(\tilde{S} \mid Z, S^{P_0})$ is $n+1$.
The columns of $P_{n+1}(\tilde{S} \mid Z, S^{P_0})$ are of the form
$$
\begin{bmatrix}
s_j \\
1 - \mathrm{sp}_S\\
\mathbf{1}_n - s_j \\
\mathrm{sp}_S
\end{bmatrix}
$$
for $j \in \{1, \dots, 2^{n}\}$,
and
$$
\begin{bmatrix}
s_{j-2^{n}} \\
\mathrm{sn}_S\\
\mathbf{1}_n - s_{j-2^{n}} \\
1 - \mathrm{sn}_S
\end{bmatrix}
$$
for $j \in \{2^{n} + 1, \dots, 2^{n+1}\}$.
After a row permutation we can express $P_{n+1}(\tilde{S} \mid Z, S^{P_0})$ as a block matrix:
\begin{align*}
    \begin{bmatrix}
        P_n(\tilde{S} \mid Z, S^{P_0}) & P_n(\tilde{S} \mid Z, S^{P_0}) \\
        (1-\mathrm{sp}_S)\mathbf{1}_{2^n}^T & \mathrm{sn}_S\mathbf{1}_{2^n}^T \\
        \mathrm{sp}_S\mathbf{1}_{2^n}^T & (1 - \mathrm{sn}_S)\mathbf{1}_{2^n}^T 
    \end{bmatrix}
\end{align*}
Recall that by construction the sum of the $i^\mathrm{th}$ row with the $(i + n)^\mathrm{th}$ row of $P_n(\tilde{S} \mid Z, S^{P_0})$ is $\mathbf{1}_{2^n}^T$ for $i \leq n$.
Then by \Cref{lemma:block-rank-ext}, $\mathrm{rank}\lp P_{n+1}(S \mid Z, S^{P_0})\rp$ is 
\begin{align}
\mathrm{rank}\lp P_{n+1}(S \mid Z, S^{P_0})\rp & = \mathrm{rank} \lp P_{n}(S \mid Z, S^{P_0})
\rp + 
\mathrm{rank} \lp
\begin{bmatrix}
        \mathrm{sn}_S\mathbf{1}_{2^n}^T \\
        (1 - \mathrm{sn}_S)\mathbf{1}_{2^n}^T 
\end{bmatrix}  - 
\begin{bmatrix}
        (1-\mathrm{sp}_S)\mathbf{1}_{2^n}^T \\
        \mathrm{sp}_S\mathbf{1}_{2^n}^T 
\end{bmatrix} \rp \\
& = n + 1 + 1
\end{align}
given that $\mathrm{sn}_S + \mathrm{sp}_S \neq 1$.
\end{proof}

\section{Main results}\label{sec:main-res}
We define identifiability as in \cite{rothenbergIdentificationParametricModels1971}; this definition is used below.
\begin{definition}[Parameter identifiability]\label{defn:id}
  Let $\theta \in \Theta$ be a parameter indexing a parametric density function $f(y \mid \theta)$. $\theta$ is identifiable if there does not exist a parameter value $\theta^\prime \in \Theta, \theta^\prime \neq \theta$ for which the density $f(y \mid \theta) = f(y \mid \theta^\prime)$ for all observations $y$.
\end{definition}

\begin{proof}{Proof of \Cref{thm:id-noisy}}\label{proof:id-noisy}\\
Define the three way array $L$ with dimensions $2N_z \times N_a \times N_r$ and $(i,j,r)^\mathrm{th}$ element $P(\tilde{S} = \ind{i \leq N_z}, A = k \mid Z = z_{i - N_z \ind{i > N_z}}, R = r)$.
Recall that the definition of matrix $P_{N_z}(\tilde{S} \mid Z, S^{P_0})$ requires that column $j$ be
$$
\begin{bmatrix}
s_j \\
\mathbf{1}_{N_z} - s_j
\end{bmatrix}
$$
where the $i^\mathrm{th}$ element of $s_j$ is denoted as $s_{ij}$ and is defined as:
$$
s_{ij} = \mathrm{sn}_S^{\varpi_{N_z}(j-1)_i} (1 - \mathrm{sp}_S)^{1-\varpi_{N_z}(j-1)_i}
$$
Let the matrices $P_{N_z}(S^{P_0} \mid R)^T, P_{N_z}(A \mid S^{P_0})$ be defined as in \Cref{subsec:defn}.
Then 
\begin{align*}
    P(\tilde{S} = \ind{i \leq N_z},& A = k \mid Z = z_{i - N_z \ind{i > N_z}}, R = r) = \\
    &\sum_{j=1}^{2^{N_z}} P_{N_z}(\tilde{S} \mid Z, S^{P_0})_{i,j} P_{N_z}(S^{P_0} \mid R)^T_{r,j}P_{N_z}(A \mid S^{P_0})_{k,j} .
\end{align*}
Given that $\mathrm{sn}_S + \mathrm{sp}_S \neq 1$, as shown in \Cref{lemma:p_s_tilde-gen-z}, $k_{P_{N_z}(\tilde{S} \mid Z, S^{P_0})} = 3$ and $\mathrm{rank}(P_{N_z}(\tilde{S} \mid Z, S^{P_0})) = N_z + 1$. 
Furthermore, by assumptions stated in \Cref{thm:id-noisy}, $\mathrm{rank}(P_{N_z}(S^{P_0} \mid R)^T) = 2^{N_z}$ and $P_{N_z}(S^{P_0} \mid R)^T \in \R^{N_r \times 2^{N_z}}$ so by \Cref{defn:krank}, $k_{P_{N_z}(S^{P_0} \mid R)^T} = 2^{N_z}$.
Given that $k_{P_{N_z}(A \mid S^{P_0})} \geq 2^{N_z} - 1$ as stated in \Cref{thm:id-noisy}, the conditions in \Cref{lemma:triple-prod-constrain} hold:
\begin{align}
\min(3,2^{N_z}) + 2^{N_z} - 1 & \geq 2^{N_z} + 2 \\    
\min(3,2^{N_z} - 1) + 2^{N_z} & \geq 2^{N_z} + 2 \\    
\end{align}
and 
\begin{align}
& \mathrm{rank}(P_{N_z}(S \mid Z, S^{P_0})) + \mathrm{rank}(P_{N_z}(S^{P_0} \mid R))  + \mathrm{rank}(P_{N_z}(A \mid S^{P_0})) \\
& \geq N_z + 1  + 2^{N_z} + 2^{N_z} - 1\\
& = N_z + 2^{N_z + 1}
\end{align}
by the fact that $\mathrm{rank}(P_{N_z}(A \mid S^{P_0})) \geq k_{(P_{N_z}(A \mid S^{P_0})}$.
Also
\begin{align}
N_z + 2^{N_z + 1} - 2(2^{N_z} - 1) &  = N_z - 1 \\
& \geq
\begin{cases}
\min(N_z - 2, \mathrm{rank}(P_{N_z}(A \mid S^{P_0})) - k_{(P_{N_z}(A \mid S^{P_0})}) \\
\min(N_z - 2, 0)
\end{cases}.
\end{align}
Given that $P_{N_z}(A \mid S^{P_0})$ has columns that sum to $1$, and $P_{N_z}(S^{P_0} \mid R)^T$ has rows that sum to $1$, we can apply \Cref{lemma:triple-prod-constrain} to the $3$-way array $L$.
Applying \Cref{lemma:triple-prod-constrain} yields that the triple-product decomposition $[P_{N_z}(\tilde{S} \mid Z, S^{P_0}),P_{N_z}(A \mid S^{P_0}),P_{N_z}(S^{P_0} \mid R)^T]$ is unique up to a common column permutation.
However, \Cref{thm:id-noisy} states the assumption that $\mathrm{sn}_S, \mathrm{sp}_S$ lie in a common half-interval.
By \Cref{lemma:domain}, the only permutation matrix consistent with the column domain of $P_{N_z}(\tilde{S} \mid Z, S^{P_0})$ is the identity matrix.
We conclude that the $3$-way decomposition of $L$, $[P_{N_z}(\tilde{S} \mid Z, S^{P_0}),P_{N_z}(A \mid S^{P_0}),P_{N_z}(S^{P_0} \mid R)^T]$, is unique.
It follows that two different decompositions $[P_{N_z}(\tilde{S} \mid Z, S^{P_0}),P_{N_z}(A \mid S^{P_0}),P_{N_z}(S^{P_0} \mid R)^T]$ and $[P_{N_z}(\tilde{S} \mid Z, S^{P_0})^\prime,P_{N_z}(A \mid S^{P_0})^\prime,(P_{N_z}(S^{P_0} \mid R)^T)^\prime]$ yield different $L$s.
By the fact that $L$ is a complete characterization of the data distribution $P(\tilde{S} = s, A = k \mid Z = z_{j}, R = r)$ and \Cref{defn:id} the parameter set $[P_{N_z}(\tilde{S} \mid Z, S^{P_0}),P_{N_z}(A \mid S^{P_0}),P_{N_z}(S^{P_0} \mid R)^T]$ is strictly identifiable.


Define the matrix $P(\tilde{Y} \mid Z, R, A = k)$ with dimensions $N_z \times N_r$ with elements $P(\tilde{Y} = y\mid R = r, Z = z, A = k)$
\[
P(\tilde{Y} \mid Z, R, A = k)_{i,r} = P(\tilde{Y} = 1 \mid Z = z_{i}, R = r, A = k).
\]
Let the matrix $P_{N_z}(\tilde{Y} \mid Z, S^{P_0}, A = k)$ be in $\R^{N_z \times 2^{N_z}}$ for all $k \in \{1,\dots,N_a\}$
with elements
\begin{align}
\begin{split}\label{eq:noisy-y-mat}
 P_{N_z}(\tilde{Y} \mid Z, S^{P_0}, A = k)_{i, j} &= \varpi_{N_z}(j-1)_{i} r_Y P(Y = 1 \mid Z = z_{i}, S^{P_0} = \varpi_{N_z}(j-1),  A = k) \\
& + (1 - \mathrm{sp}_Y)
\end{split}
\end{align}
where $r_Y = \mathrm{sp}_Y + \mathrm{sn}_Y - 1$.
Then
\begin{align*}
  P(\tilde{Y} = 1 \mid Z = z_{i}, A = k,  R = r) = \sum_{j=1}^{2^{N_z}} & P_{N_z}(\tilde{Y} \mid Z, S^{P_0}, A = k)_{i, j}P_{N_z}(S^{P_0} \mid R)_{j,r} \\
  & \times P_{N_z}(A = k \mid S^{P_0})_{j,j} / P(A = k \mid R = r).
\end{align*}
Recalling the definitions of diagonal matrices $P(A = k \mid R)$ and $P_{N_z}(A = k \mid S^{P_0})$, the expression can be rewritten as matrix multiplication:
\begin{align}
  P(\tilde{Y} \mid Z,  R, A = k) & = P_{N_z}(\tilde{Y} \mid Z, S^{P_0}, A = k)P_{N_z}(A = k \mid S^{P_0}) P_{N_z}(S^{P_0} \mid R) P(A = k \mid R)^{-1}.
\end{align}
Given our assumption that $P_{N_z}(S^{P_0} \mid R)$ is full row rank,$P_{N_z}(S^{P_0} \mid R)P_{N_z}(S^{P_0} \mid R)^+ = \mathbf{I}_{2^{N_z}}$.
We assume without loss of generality that $P(A = k \mid S^{P_0} = u) > 0\, \forall\, k,\, u$, so $P_{N_z}(A = k \mid S^{P_0})$ is invertible, and that $P(A = k \mid R = r) > 0 \forall \, k, \, r$ so $P(A = k \mid R)$ is invertible.
\begin{align}\label{eq:matrix-pyku}
  P(\tilde{Y} \mid Z,  R, A = k)P(A = k \mid R) P_{N_z}(S^{P_0} \mid R)^+P_{N_z}(A = k \mid S^{P_0})^{-1} & = P_{N_z}(\tilde{Y} \mid Z, A = k, S^{P_0})
\end{align}
If to the contrary that $P(A=k \mid S^{P_0} = u) = 0$ for some $k$ and $u$, we adopt the convention that $P(\tilde{Y} = y \mid Z = z, S^{P_0} = u, A = k)$ is undefined.
We can then reduce the set of principal strata included in the sum to include only those for which $P(A = k \mid S^{P_0} = u) > 0$.
It follows from the full-row-rank assumption on $P_{N_z}(S^{P_0} \mid R)$ that the matrix formed from any subset of rows of this matrix is still full row-rank.
Given the modified matrices, \Cref{eq:matrix-pyku} will hold with the reduced set of principal strata.
We can use a similar technique when $P(A = k \mid R = r) = 0$ for some combination of $k$ and $r$.
This condition is empirically testable.

It then follows the definition of $P_{N_z}(\tilde{Y} \mid Z, S^{P_0}, A = k)$ in \Cref{eq:noisy-y-mat} that $\mathrm{sp}_Y$ is identifiable, as are the parameters $r_Y P(Y = 1 \mid Z = z_j, S^{P_0} = \varpi_{N_z}(j-1), A = k)$ for all $z_j, j \in \{1, \dots, 2^{N_z}\}$ and $k$.

Let any allowable post-infection outcome vaccine efficacy estimand, necessarily where $u_{j} \,u_{l} = 1$, be defined as
$$
\mathrm{VE}_{I,jl}^{u}(k) = 1 - \frac{\Exp{Y(z_j) \mid S^{P_0} = u, A = k}}{\Exp{Y(z_l) \mid S^{P_0} = u, A = k}}.
$$

By \Crefrange{cond:SUTVA}{cond:pc-obs-unconfound-multi-z} $P(Y = 1 \mid Z = z, S^{P_0}=u, A = k) = P(Y(z) = 1 \mid S^{P_0} = u, A = k)$ for all $z \in \{z_1, \dots, z_{N_z}\}$ and $\Exp{Y(z) \mid S^{P_0} = u, A = k} = P(Y(z) = 1 \mid S^{P_0} = u, A = k)$.
Note that $\mathrm{sp}_Y = 1 - P_{N_z}(\tilde{Y} \mid Z, S^{P_0}, A = k)_{1, 1}$ by our definition of $P_{N_z}(\tilde{Y} \mid Z, S^{P_0}, A = k)$ in \Cref{eq:noisy-y-mat}.
This is because the the first column of $P_{N_z}(\tilde{Y} \mid Z, S^{P_0}, A = k)$ corresponds to the principal stratum that is always uninfected, or $S^{P_0} = \varpi(0)$, which results in $P(\tilde{Y}=1 \mid Z = z, S^{P_0} = \varpi(0), A = k) = 1 - \mathrm{sp}_Y$ for all $z, k$.
The first row of the matrix $P_{N_z}(\tilde{Y} \mid Z, S^{P_0}, A = k)$ corresponds to $\tilde{Y} = 1$.
Then 
\[
P(Y = 1 \mid Z = z, S^{P_0} = u, A = k) = \frac{P_{N_z}(\tilde{Y} \mid Z, S^{P_0}, A = k)_{z, j} - P_{N_z}(\tilde{Y} \mid Z, S^{P_0}, A = k)_{1, 1}}{r_Y}
\]
where $j = \varpi_{N_z}^{-1}(u)+1$, so $\mathrm{VE}^{u}_{I,jl}(k)$ is identifiable.
\end{proof}

\begin{proof}{Proof of \Cref{cor:id-noisy-A}} \label{proof:id-noisy-A}\\
By the conditions set forth in \Cref{cor:id-noisy-A} we have that 
\begin{align*}
    P(\tilde{S} = \ind{i \leq N_z},& \tilde{A} = k \mid Z = z_{i - N_z \ind{i > N_z}}, R = r) = \\
    &\sum_{j=1}^{2^{N_z}} P_{N_z}(\tilde{S} \mid Z, S^{P_0})_{i,j} P_{N_z}(S^{P_0} \mid R)^T_{r,j}P_{N_z}(\tilde{A} \mid S^{P_0})_{k,j} .
\end{align*}
This decomposition holds because of our nondifferential misclassification assumption, namely $\tilde{A} \indy S^{P_0}, \tilde{S}, R, Z \mid A$, which allows for the following complete characterization of $\tilde{A} \mid S^{P_0}$:
$$P(\tilde{A} = k \mid S^{P_0} = u) = \sum_{\ell = 1}^{N_z} P(\tilde{A} = k \mid A = \ell)P(A = \ell \mid S^{P_0} = u).$$

Recall that $\mathrm{sn}_S,\mathrm{sp}_S$ lie in the same half interval of $[0,1]$, so by the same logic as \Cref{proof:id-noisy}, the distributions  
$P(\tilde{S} = 1 \mid Z = z, S^{P_0} = u),P(\tilde{A} = k \mid S^{P_0} = u),P(S^{P_0} = u \mid R = r)$ are identifiable. 
Define the matrix $P(\tilde{Y} \mid Z, R, \tilde{A} = k)$ with dimensions $N_z \times N_r$ with elements $P(\tilde{Y} = y\mid R = r, Z = z, \tilde{A} = k)$
\[
P(\tilde{Y} \mid Z, R, \tilde{A} = k)_{i,r} = P(\tilde{Y} = 1 \mid Z = z_{i}, R = r, \tilde{A} = k).
\]
Let the matrix $P_{N_z}(\tilde{Y} \mid Z, S^{P_0}, \tilde{A} = k)$ be defined in the same way as \Cref{eq:noisy-y-mat}.
Then
\begin{align}
  P(\tilde{Y} = 1 \mid Z = z_{i}, \tilde{A} = k,  R = r) = \sum_{j=1}^{2^{N_z}} & P_{N_z}(\tilde{Y} \mid Z, S^{P_0}, \tilde{A} = k)_{i, j} \\
  & \times P_{N_z}(S^{P_0} \mid R)_{j,r} P_{N_z}(\tilde{A} = k \mid S^{P_0})_{j,j} / P(\tilde{A} = k \mid R = r)
\end{align}
which can be represented as matrix multiplication, recalling the definitions of matrices $P_{N_z}(\tilde{A} = k \mid S^{P_0})$, and $ P(\tilde{A} = k \mid R)$:
\begin{align}
  P(\tilde{Y} \mid Z,  R, \tilde{A} = k) & = P_{N_z}(\tilde{Y} \mid Z, S^{P_0}, \tilde{A} = k) P_{N_z}(\tilde{A} = k \mid S^{P_0})P_{N_z}(S^{P_0} \mid R) P(\tilde{A} = k \mid R)^{-1}
\end{align}
Given our assumption that $P_{N_z}(S^{P_0} \mid R)$ is full row rank,$P_{N_z}(S^{P_0} \mid R)P_{N_z}(S^{P_0} \mid R)^+ = \mathbf{I}_{2^{N_z}}$.
Without loss of generality we assume that $P(\tilde{A} = k \mid R = r) > 0 \, \forall \, k, r$, which is estimable from observed data and $P(\tilde{A} = k \mid S^{P_0}=u) > 0 \, \forall \, k, u$, which is identified via the .
Then:
\begin{align}
  P(\tilde{Y} \mid Z,  R, \tilde{A} = k)P(\tilde{A} = k \mid R) P_{N_z}(S^{P_0} \mid R)^+P_{N_z}(\tilde{A} = k \mid S^{P_0})^{-1} & = P_{N_z}(\tilde{Y} \mid Z, \tilde{A} = k, S^{P_0})
\end{align}
It then follows the definition of $P_{N_z}(\tilde{Y} \mid Z, S^{P_0}, \tilde{A} = k)$ that $\mathrm{sp}_Y$ is identifiable, as are the parameters $r_Y P(Y = 1 \mid Z = z_j, S^{P_0} = \varpi_{N_z}(j-1), \tilde{A} = k)$ for all $j \in \{1, \dots, 2^{N_z}\}$ and $k \in \{1, \dots, N_a\}$.

Let any allowable post-infection outcome vaccine efficacy estimand, necessarily where $u_{j} \,u_{l} = 1$, be defined as
$$
\mathrm{VE}_{I,jl}^{u} = 1 - \frac{\Exp{Y(z_j) \mid S^{P_0} = u}}{\Exp{Y(z_l) \mid S^{P_0} = u}}.
$$

By \Crefrange{cond:SUTVA}{cond:pc-obs-unconfound-multi-z} $P(Y = 1 \mid Z = z, S^{P_0}=u, \tilde{A} = k) = P(Y(z) = 1 \mid S^{P_0} = u, \tilde{A} = k)$ and $\Exp{Y(z) \mid S^{P_0} = u, \tilde{A} = k} = P(Y(z) = 1 \mid S^{P_0} = u, \tilde{A} = k)$ for all $z \in \{z_1, \dots, z_{N_z}\}$.
Note that $\mathrm{sp}_Y = 1 - P_{N_z}(\tilde{Y} \mid Z, S^{P_0}, \tilde{A} = k)_{1, 1}$ by our definition of $P_{N_z}(\tilde{Y} \mid Z, S^{P_0}, \tilde{A} = k)$ in \Cref{eq:noisy-y-mat}.
Then 
\[
P(Y = 1 \mid Z = z, S^{P_0} = u, \tilde{A} = k) = \frac{P_{N_z}(\tilde{Y} \mid Z, S^{P_0}, \tilde{A} = k)_{z, j} - P_{N_z}(\tilde{Y} \mid Z, S^{P_0}, \tilde{A} = k)_{1, 1}}{r_Y}
\]
where $j = \varpi_{N_z}^{-1}(u)+1$.
Then 
\[
\mathrm{VE}_{I,jl}^{u} = 1 - \frac{\sum_k \lp P_{N_z}(\tilde{Y} \mid Z, S^{P_0}, \tilde{A} = k)_{z, j} - P_{N_z}(\tilde{Y} \mid Z, S^{P_0}, \tilde{A} = k)_{1, 1}\rp P(\tilde{A}=k \mid S^{P_0} = u)}{\sum_k \lp P_{N_z}(\tilde{Y} \mid Z, S^{P_0}, \tilde{A} = k)_{z, l} - P_{N_z}(\tilde{Y} \mid Z, S^{P_0}, \tilde{A} = k)_{1, 1}\rp P(\tilde{A}=k \mid S^{P_0} = u)}.
\] 
\end{proof}

\begin{lemma}[Identifiability of counterfactual post-infection outcome expectations]
  Suppose $P(S_i^{P_0} = u \mid R = r)$ is known 
\end{lemma}

\section{Kruskal rank properties}\label{sec:krank}
In the section that follows, we use properties and several theorems and lemmas that are proven in \cite{kruskalThreewayArraysRank1977}.
Where appropriate we will indicate on which pages the proofs of the theorems and lemmas can be found.
\begin{lemma}[Rank lemma]\label{lemma:rank}
 Let $$H_{AB}(n) = \min_{\mathrm{card}(A^\prime)=n} \left\{\mathrm{rank}(A^\prime) + \mathrm{rank}(B^\prime)\right\} - n$$
 for an integer $n$ where $A^\prime$ is an $n$-column subset of the matrix $A$ and $B^\prime$ is the same column-index subset of a matrix $B$.
 For any diagonal matrix $D \in \R^{n \times n}$ with rank $\delta$, 
 $$
\mathrm{rank}(A D B^T) \geq H_{AB}(\delta).
 $$
 See proof on p. 121 in \cite{kruskalThreewayArraysRank1977}.
\end{lemma}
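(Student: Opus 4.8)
The plan is to reduce the bound to Sylvester's rank inequality, after first throwing away the part of $D$ that contributes nothing. Since $D$ is diagonal with rank $\delta$, exactly $\delta$ of its diagonal entries are nonzero; let $\mathcal{I}$ be the set of indices of those entries, so $\mathrm{card}(\mathcal{I}) = \delta$. Writing the product in rank-one form, $A D B^{T} = \sum_{r} d_{rr}\, a_{r} b_{r}^{T}$ where $a_{r}, b_{r}$ are the $r$-th columns of $A$ and $B$, the terms with $d_{rr} = 0$ vanish, so $A D B^{T} = A_{\mathcal{I}} \tilde{D} B_{\mathcal{I}}^{T}$, where $A_{\mathcal{I}}, B_{\mathcal{I}}$ are the submatrices formed from the columns indexed by $\mathcal{I}$ and $\tilde{D}$ is the $\delta \times \delta$ diagonal matrix obtained by deleting the zero rows and columns of $D$; crucially, $\tilde{D}$ is now invertible. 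This is the only real bookkeeping step: it replaces the (trivial) inner dimension by the effective inner dimension $\delta$, which is what makes the eventual bound nontrivial.

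Next I would invoke Sylvester's rank inequality: for any $P$ with $\delta$ columns and any $Q$ with $\delta$ rows, $\mathrm{rank}(PQ) \geq \mathrm{rank}(P) + \mathrm{rank}(Q) - \delta$. If a self-contained argument is wanted, this is just rank--nullity applied to $P$ restricted to $\mathrm{range}(Q)$: $\mathrm{rank}(PQ) = \mathrm{rank}(Q) - \dim(\ker P \cap \mathrm{range}(Q)) \geq \mathrm{rank}(Q) - \dim \ker P = \mathrm{rank}(P) + \mathrm{rank}(Q) - \delta$. Applying this with $P = A_{\mathcal{I}} \tilde{D}$ and $Q = B_{\mathcal{I}}^{T}$, whose common inner dimension is $\delta$, gives $\mathrm{rank}(A D B^{T}) \geq \mathrm{rank}(A_{\mathcal{I}} \tilde{D}) + \mathrm{rank}(B_{\mathcal{I}}^{T}) - \delta$. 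Since $\tilde{D}$ is invertible it preserves rank, and transposition preserves rank, so the right-hand side equals $\mathrm{rank}(A_{\mathcal{I}}) + \mathrm{rank}(B_{\mathcal{I}}) - \delta$.

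Finally I would note that $\mathcal{I}$ is one particular $\delta$-element subset of the column indices, so by the definition of $H_{AB}(\delta)$ as the minimum of $\mathrm{rank}(A') + \mathrm{rank}(B') - \delta$ over all such subsets, $\mathrm{rank}(A_{\mathcal{I}}) + \mathrm{rank}(B_{\mathcal{I}}) - \delta \geq H_{AB}(\delta)$, and hence $\mathrm{rank}(A D B^{T}) \geq H_{AB}(\delta)$, as claimed. I do not expect any serious obstacle: the content is precisely Sylvester's inequality plus the observation that only the support of $D$ matters, and the argument above is essentially the one recorded on p.~121 of \cite{kruskalThreewayArraysRank1977}.
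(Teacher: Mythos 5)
Your proof is correct: restricting to the $\delta$ columns indexed by the support of $D$, applying Sylvester's rank inequality to $A_{\mathcal I}\tilde D$ and $B_{\mathcal I}^{T}$, and then bounding $\mathrm{rank}(A_{\mathcal I})+\mathrm{rank}(B_{\mathcal I})-\delta$ below by the minimum defining $H_{AB}(\delta)$ is exactly what is needed, and each step (the rank--nullity derivation of Sylvester, the invertibility of $\tilde D$) is sound. The paper itself supplies no argument for this lemma beyond the citation to p.~121 of \cite{kruskalThreewayArraysRank1977}, and your support-restriction-plus-Sylvester route is essentially the argument given there, so your write-up simply makes the cited proof self-contained.
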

\subsection{Kruskal's triple-product decomposition uniqueness theorem}

\begin{theorem}[Kruskal triple product decomposition uniqueness]\label{thm:krank-unique}
Let matrices $A, B, C$ be defined as in \Cref{defn:triple-prod}, with respective ranks $r_A, r_B, r_C$, and let array $L$ also be defined as in \Cref{defn:triple-prod}. 
Suppose that $k_A \leq r_A, k_B \leq r_B$, and $k_C \leq r_C$. Then if
 $$
 r_A + r_B + r_C - (2 M + 2) \geq \begin{cases}
 \min(r_A - k_A, r_B - k_B)  \\
 \min(r_A - k_A, r_C - k_C)
 \end{cases},
 $$ 
 $\min(k_A, k_B) + r_C \geq M + 2$, and $\min(k_A, k_C) + r_B \geq M + 2$ the decomposition $L = [A, B, C]$ is unique up to column permutation matrix $P$ and column scaling $\Lambda, G, N$ such that $\Lambda G N$ is the identity matrix. In other words, $L$ can be represented as the triple product of any three matrices $[\tilde{A}, \tilde{B}, \tilde{C}]$ such that $[\tilde{A} = A P \Lambda, \tilde{B} = B P G, \tilde{C} = C P N]$. See proof in \cite{kruskalThreewayArraysRank1977} on page 126.
\end{theorem}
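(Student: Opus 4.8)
{Proof sketch for \Cref{thm:krank-unique}}\\
The plan is to follow Kruskal's argument, whose engine is \Cref{lemma:rank} applied to one-dimensional contractions of the array. Suppose $X = [\tilde A, \tilde B, \tilde C]$ is a second decomposition. For a covector $w \in \R^{K}$ let $M_w$ be the $I \times J$ matrix with entries $(M_w)_{ij} = \sum_{k=1}^{K} w_k X_{ijk}$; by \Cref{defn:triple-prod} this equals $A\,\mathrm{diag}(C^T w)\,B^T$ and also $\tilde A\,\mathrm{diag}(\tilde C^T w)\,\tilde B^T$. If $C^T w$ has $\delta$ nonzero entries, \Cref{lemma:rank} gives $\mathrm{rank}(M_w) \geq H_{AB}(\delta)$, and symmetrically $\mathrm{rank}(M_w) \geq H_{\tilde A \tilde B}(\tilde\delta)$ with $\tilde\delta$ the number of nonzero entries of $\tilde C^T w$. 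The inequalities $\min(k_A,k_B)+r_C \geq R+2$, $\min(k_A,k_C)+r_B \geq R+2$, and the one on $r_A+r_B+r_C$ are precisely the bookkeeping needed to make $\delta \mapsto H_{AB}(\delta)$ grow steeply enough to control these supports (note these force $k_A \geq 2$ since $r_B, r_C \leq R$).

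The substantive step is a support-comparison claim for the pair $(C,\tilde C)$: choosing $w$ orthogonal to a carefully selected subset of columns of $C$ makes $C^T w$ supported on a small prescribed index set, and then evaluating $\mathrm{rank}(M_w)$ from both factorizations and invoking the rank inequalities forces the support of $\tilde C^T w$ to lie inside the image of that index set under a fixed bijection. Carried out for all such $w$, this verifies the hypotheses of Kruskal's permutation lemma for $(C,\tilde C)$, yielding $\tilde C = C P N$ for a permutation matrix $P$ and nonsingular diagonal $N$. Along the way one checks that $\tilde A, \tilde B, \tilde C$ must have ranks $r_A, r_B, r_C$ as well, since otherwise $M_w$ could not reach the ranks forced above for generic $w$.

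Repeating the device with a contraction along the second mode, i.e. the $I \times K$ matrix $N_v$ with $(N_v)_{ik} = \sum_{j=1}^{J} v_j X_{ijk} = (A\,\mathrm{diag}(B^T v)\,C^T)_{ik}$, and using $\min(k_A,k_C)+r_B \geq R+2$ in place of $\min(k_A,k_B)+r_C \geq R+2$, gives $\tilde B = B P' M$ for a permutation $P'$ and nonsingular diagonal $M$. Substituting the now-known forms of $\tilde B$ and $\tilde C$ back into $X = [\tilde A,\tilde B,\tilde C] = [A,B,C]$ and using $k_A \geq 2$ to exclude column collisions forces $P' = P$, determines $\tilde A = A P \Lambda$ for the unique diagonal $\Lambda$ making the three products consistent, and yields $\Lambda M N = \mathbf{I}_R$, which is the assertion.

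The main obstacle is the permutation lemma together with the support-comparison step that supplies its hypotheses: translating the three rank inequalities into support-containment statements holding uniformly in $w$ requires an induction on the number of columns and a genericity argument for the contracting vector, and this is precisely the technical heart of \cite{kruskalThreewayArraysRank1977}. I would not reproduce that machinery but invoke it directly, citing page~126 of that reference for the full argument.
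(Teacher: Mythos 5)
Your proposal is correct and takes essentially the same route as the paper: the paper offers no independent argument for this theorem, simply deferring to \cite{kruskalThreewayArraysRank1977} (p.~126), and your sketch accurately summarizes Kruskal's own proof (mode contractions $A\,\mathrm{diag}(C^Tw)B^T$, the rank lemma via $H_{AB}$, and the permutation lemma) before invoking that same reference for the technical core. This is also the machinery the paper itself reuses later in \Cref{lemma:triple-prod} and \Cref{lemma:triple-prod-constrain}, so nothing further is needed.
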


\subsection{Corollary to \Cref{thm:krank-unique}}\label{subsec:trip-lemma-pf}
\begin{corollary}[Uniqueness with column and row sum conditions]\label{lemma:triple-prod-constrain}
  Suppose $B$ has rows that sum to $1$ and $C$ has columns that sum to $1$, or $B\mathbf{1}_{R \times 1} = \mathbf{1}_{J \times 1}$, and $\mathbf{1}_{1 \times K} C = \mathbf{1}_{1 \times R}$.
  If the rank conditions in \Cref{thm:krank-unique} on $A, B, C$ also hold, and $C$ is full column rank then $[A, B, C]$ is the unique triple product decomposition of array $L$ up to a common column permutation.
\end{corollary}

\begin{proof}
  Suppose that $L = [A, B, C]$ and that $[\bar{A}, \bar{B}, \bar{C}]$ is another decomposition of $L$, where $\bar{B}, \bar{C}$ satisfy the respective row- and column-sum constraints. 
Let $r_{\bar{B}}, r_{\bar{C}}$ be the ranks of $\bar{B}$ and $\bar{C}$ respectively.
\Cref{defn:triple-prod} implies that $A \mathrm{diag}(xC) B^T =  \bar{A}\mathrm{diag}(x\bar{C}) \bar{B}^T$ for all $x \in \R^{1 \times I}$. 
 If for any $y \in \R^{1 \times K}$ such that $y \bar{C} = 0 \implies y C = 0$ then $\mathrm{col}(C) \subset \mathrm{col}(\bar{C})$, $\mathrm{null}(C) \supset \mathrm{null}(\bar{C})$, and $r_C \leq r_{\bar{C}}$. If $y \bar{C} = 0$ then 
 $$
 \bar{A} \mathrm{diag}(y\bar{C}) \bar{B}^T = 0 \implies A \mathrm{diag}(yC) B^T = 0
 $$
 Recall the definition of $H_{AB}(n)$ from \Cref{lemma:rank}.
\cite{kruskalThreewayArraysRank1977} shows that the condition on the ranks and Kruskal ranks above imply the following inequalities (proof omitted):
\begin{align}
    k_A & \geq \max(R - r_B + 2, R - r_C + 2), \label{eq-line-1:in-eq-ka}\\
    k_B & \geq R - r_C + 2, \label{eq-line-2:in-eq-kb}\\
    k_C & \geq R - r_B + 2, \label{eq-line-2:in-eq-kc}\\
    H_{AB}(n) & \geq R - r_C + 2\, \mathrm{ if }\, n \geq R - r_C + 2 \label{eq-line:in-eq-hab}\\
    H_{AC}(n) & \geq R - r_B + 2\, \mathrm{ if }\, n \geq R - r_B + 2 \label{eq-line:in-eq-hac}\\
    H_{BC}(n) & \geq 1\, \mathrm{ if }\, n \geq 1 \label{eq-line:in-eq-hbc}
\end{align}
The inequality \cref{eq-line:in-eq-hab} implies that when $H_{AB}(n) < R - r_C + 2$ then $n < R - r_C + 2$.
When $n < R - r_C + 2$, the inequalities \crefrange{eq-line-1:in-eq-ka}{eq-line-2:in-eq-kc} and the definition of $H_{AB}(n)$ imply that $H_{AB}(n) = n$.
Then 
 \begin{align*}
     0 & = \mathrm{rank}(A \mathrm{diag}(yC) B^T) \\
     & \geq H_{AB}(\mathrm{rank}(\mathrm{diag}(yC))) \geq 0,
 \end{align*} 
 where the second to last inequality comes from \Cref{lemma:rank} and 
 the last inequality comes from the definition of $H_{AB}(n)$. This implies $yC = 0$.
 Let the function $w(y)$ for a generic vector $y$ return the number of nonzero entries in the vector $y$. 
 Let $v$ be any vector such that $w(v\bar{C}) \leq R - \bar{K}_0 + 1$. Then we'll show that $w(vC) \leq w(v\bar{C})$.
 \begin{align}
     R - r_C + 1 & \geq R - \bar{K}_0 + 1 \geq w(v\bar{C}) = \mathrm{rank}(\mathrm{diag}(v\bar{C}) \\
     & \geq \mathrm{rank}(A \mathrm{diag}(y\bar{C}) \bar{B}^T) = \mathrm{rank}(A \mathrm{diag}(yC) B^T) \\
     & \geq H_{AB}(\mathrm{rank}(\mathrm{diag}(vC)) = H_{AB}(w(vC)).
 \end{align}
The final line implies that $H_{AB}(w(vC)) = w(vC)$, which shows that $w(v\bar{C}) \geq w(vC)$ when $R - \bar{K}_0 + 1 \geq w(v\bar{C})$.

Given this condition, Kruskal's permutation lemma (proved on page 134 of \cite{kruskalThreewayArraysRank1977})
shows that for any matrices $C$ and $\bar{C}$ that satisfy the inequality, $\bar{C} = C P_C N$ where $P_C$ is a permutation matrix and $N$ is a diagonal nonsingular scaling matrix. 
If we have the stronger condition that every two columns of $C$ are linearly independent then $P_C$ and $N$ are unique.
Our matrices satisfy these conditions, so we have that $\bar{C} = C P_C N$, and a similar argument can be used to show $\bar{B} = B P_B M$

Given that we also have the condition that $\mathbf{1}_{1 \times K} C = \mathbf{1}_{1 \times R}$ and $\mathbf{1}_{1 \times K} \bar{C} = \mathbf{1}_{1 \times R}$, then this implies that $\bar{C} = C P_C$ because $\mathbf{1}_{1 \times K} \bar{C} = \mathbf{1}_{1 \times K} C P_C N = \mathbf{1}_{1 \times R} N$ which only equals $\mathbf{1}_{1 \times R}$ if $N = \mathbf{I}_{R \times R}$.

Furthermore, if $r_B = R$, the equation $B \nu = \mathbf{1}_{J \times 1}$ has a unique solution in $\nu \in \R^{R \times 1}$, namely $\nu = \mathbf{1}_{R \times 1}$.
This implies that $M$ is the identity matrix, as the condition $\bar{B} \mathbf{1}_{R\times 1} = \mathbf{1}_{J \times 1}$ results in:
\begin{align}
 \mathbf{1}_{J \times 1} & = \bar{B} \mathbf{1}_{R\times 1}\\
 & = B P_B M \mathbf{1}_{R\times 1} \\
 & \implies P_B M \mathbf{1}_{R\times 1} = \mathbf{1}_{R \times 1}.
\end{align}
Given that $M$ is a nonsingular diagonal matrix and $P_B$ is a permutation matrix, $M$ must be the identity to solve the equation $P_B M \mathbf{1}_{R \times 1} = \mathbf{1}_{R \times 1}$.

We now have $\bar{C} = C P_C$ and $\bar{B} = B P_B$. 
We can apply Kruskal's permutation matrix proof from pages 129-130 in \cite{kruskalThreewayArraysRank1977} to show that $P_C = P_B = P$.
The following two identities hold for any diagonal scaling matrices $M,N$, any permutation matrix $P$, and any vector $v$:
\begin{align}
  M \mathrm{diag}(v) N & = \mathrm{diag}(v M N) \label{eq:scaling} \\
  P \mathrm{diag}(v) P^T & = \mathrm{diag}(v P^T)\label{eq:permuting}.
\end{align}
Given \Crefrange{eq:scaling}{eq:permuting}
and the condition that $L = [A, B, C] = [\bar{A}, \bar{B}, \bar{C}]$, then, for all vectors $v \in \R^{1 \times J}$,
\begin{align}
B \mathrm{diag}(vA) C^T & = \bar{B} \mathrm{diag}(v\bar{A}) \bar{C}^T \\
& = B P \mathrm{diag}(v\bar{A}) P^T C^T \\
& = B \mathrm{diag}(v\bar{A}P^T) C^T.
\end{align}
The equality $B \mathrm{diag}(vA) C^T =B \mathrm{diag}(v\bar{A}P^T) C^T$ implies
\begin{align}
B \mathrm{diag}(v(A - \bar{A}P^T)) C^T = 0
\end{align}
for all $v$.
Furthermore, 
\begin{align}
0 & = \mathrm{rank}(B \mathrm{diag}(v(A - \bar{A}P^T)) C^T) \\
& \geq H_{BC}(\mathrm{rank}(\mathrm{diag}(v(A - \bar{A}P^T))) \geq 0.
\end{align}
The last line follows from \Cref{lemma:rank}.
Then using the implication from \cref{eq-line:in-eq-hbc} that if $H_{BC}(n) < 1 \implies n = 0$, $\mathrm{rank}(\mathrm{diag}(v(A - \bar{A}P^T))=0$ or 
$v(A - \bar{A}P^T) = 0$ for all $v$. 
This further implies that
$$
A = \bar{A} P^T
$$
or 
$$
\bar{A} = A P.
$$
\end{proof}

\section{Supporting lemmas and definitions from other work}

\begin{lemma}[Block rank lemmas \cite{tian2004rank}]\label{lemma:block-rank-alt}
Let $A \in \R^{m\times n}, B \in \R^{m \times k}, C \in \R^{l \times n}$.
$$
\mathrm{rank} \lp \begin{bmatrix}
A & B \\
C & 0
\end{bmatrix} \rp = \mathrm{rank}(B) + \mathrm{rank}(C) + \mathrm{rank}((I - BB^+)A(I - C^+C))
$$
If $\mathrm{range}(B) \subseteq \mathrm{range}(A)$ and $\mathrm{range}(C^T) \subseteq \mathrm{range}(A^T)$
$$
\mathrm{rank} \lp \begin{bmatrix}
A & B \\
C & D
\end{bmatrix} \rp = \mathrm{rank}(A) + \mathrm{rank}(D - C A^+ B)
$$

\end{lemma}
\begin{lemma}[Block rank lemma extension]\label{lemma:block-rank-ext}
Let $A \in \R^{m\times n}, B \in \R^{m \times k}, C \in \R^{l \times n}$.
If $\mathrm{range}(C^T) \subseteq \mathrm{range}(A^T)$
$$
\mathrm{rank} \lp \begin{bmatrix}
A & A \\
C & D
\end{bmatrix} \rp = \mathrm{rank}(A) + \mathrm{rank}(D - C)
$$
\end{lemma}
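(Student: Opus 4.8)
The plan is to reduce the claim directly to the second identity in \Cref{lemma:block-rank} (the Tian block-rank formula) with the special choice $B = A$. First I would note that the standing hypothesis $\mathrm{range}(C^T) \subseteq \mathrm{range}(A^T)$ is precisely one of the two range conditions required to invoke that identity, while the other, $\mathrm{range}(B) \subseteq \mathrm{range}(A)$, holds trivially because here $B = A$. Applying the formula therefore gives
\[
\mathrm{rank} \lp \begin{bmatrix} A & A \\ C & D \end{bmatrix} \rp = \mathrm{rank}(A) + \mathrm{rank}(D - C A^+ A).
\]

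The only remaining task is to show $C A^+ A = C$, after which the right-hand side collapses to $\mathrm{rank}(A) + \mathrm{rank}(D - C)$, which is the claim. I would establish $C A^+ A = C$ using the fact that $A^+ A$ is the orthogonal projector onto $\mathrm{range}(A^T)$: it is symmetric (a Moore–Penrose axiom) and restricts to the identity on $\mathrm{range}(A^T)$. Since every column of $C^T$ lies in $\mathrm{range}(A^T)$ by hypothesis, we get $A^+ A\, C^T = C^T$; transposing and using the symmetry of $A^+ A$ yields $C A^+ A = C$. Substituting into the displayed equation finishes the proof.

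There is essentially no genuine obstacle here beyond bookkeeping; the single point worth stating carefully is the projector property of $A^+ A$ and why, under the stated range condition, it forces $C A^+ A = C$. If one prefers to avoid invoking the symmetry axiom, the same conclusion follows by writing the rows of $C$ as linear combinations of the rows of $A$, i.e. $C = K A$ for some matrix $K$, and then $C A^+ A = K A A^+ A = K A = C$ by the Moore–Penrose identity $A A^+ A = A$.
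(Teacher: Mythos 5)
Your proposal is correct and follows essentially the same route as the paper: invoke the second identity of \Cref{lemma:block-rank} with $B=A$ (the condition $\mathrm{range}(A)\subseteq\mathrm{range}(A)$ being trivial) and then use the fact that $A^+A$ projects onto $\mathrm{range}(A^T)$ together with $\mathrm{range}(C^T)\subseteq\mathrm{range}(A^T)$ to conclude $CA^+A=C$. Your extra justification of $CA^+A=C$ (via symmetry of $A^+A$, or via writing $C=KA$) simply fills in a step the paper states without elaboration.
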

\begin{proof}
Given that $\mathrm{range}(A) \subseteq \mathrm{range}(A)$, we can apply the second block rank lemma from \Cref{lemma:block-rank-alt} with $B = A$.
$$
\mathrm{rank} \lp \begin{bmatrix}
A & A \\
C & D
\end{bmatrix} \rp = \mathrm{rank}(A) + \mathrm{rank}(D - C A^+ A).
$$
By supposition, $\mathrm{range}(C^T) \subseteq \mathrm{range}(A^T)$ and $A^+ A$ is the projection matrix onto the column space of $A^T$.
Then $C A^+ A = C$, and the statement follows.
\end{proof}

\section{Details behind numerical examples}\label{sec:numerical-deets}

We have three simulation scenarios where we vary the sample size to determine the power: a two-arm trial to determine vaccine efficacy against severe symptoms, a three-arm trial to determine relative vaccine efficacy against severe symptoms, and a two-arm trial to determine vaccine efficacy against transmission. 
All trials are designed such that the assumptions of \Cref{thm:id-noisy} are satisfied, so the three-arm trial includes $16$ study sites, and a categorical covariate with $7$ levels, and both two-arm trials include $8$ study sites, and a categorical covariate with $3$ levels. Within each scenario we suppose that $A$, the categorical covariate, to be measured perfectly.
In addition, we assume a $3$-level, pretreatment categorical covariate has been measured for each participant.
We simulate from the parametric model defined in \Cref{subsec:sens}, which requires that we specify $\mu_u^r$, or the log-odds of belonging to stratum $u$ relative to base stratum $u_0$ for each study site $r$. 
Let the ordered collection of log-odds of being in stratum $u$ relative to stratum $u_{2^{N_z}}$ for the reference covariate level $x=1$ be $\mu^r = \lp\mu_{u_1}^r, \mu_{u_2}^r,\dots, \mu_{u_{2^{N_z}-1}}^r, 0\rp$.

Let $\mathrm{softmax}$ be the function from $v \in \R^L$ to the $L+1$-dimensional probability simplex, defined elementwise for the $i^\mathrm{th}$ element as:
$$
\mathrm{softmax}(v)_i = \frac{e^{v_i}}{\sum_{l=1}^{L} e^{v_l^r}}
$$
and let $\mathrm{softmax}^{-1}$ be the inverse function from $\theta \in $ the $L+1$-dimensional simplex to $\R^L$, where the $i^\mathrm{th}$ element, $i < L+1$ is defined as
$$
\mathrm{softmax}(\theta)^{-1}_i = \log(\theta_i) - \log(\theta_{L+1})
$$

Let $\theta_u^{r,x} = P(S^{P_0}=u \mid R = r, X = x)$, and let $\theta^{r,x}$ be the ordered vector $(\theta_{u_1}^{r,x}, \theta_{u_1}^{r,x}, \dots, \theta_{u_{2^{N_z}}}^{r,x})$ .   
For the $2$-arm trials, the population principal strata proportions are as follows:

$$
\theta^{r,1} \overset{\mathrm{iid}}{\sim} \mathrm{Dirichlet}((78, 10, 2, 10)) \forall r
$$
while for the $3$-arm trials, the proportions are
{
\begin{align*}
\theta^{r,1} \overset{\mathrm{iid}}{\sim} \mathrm{Dirichlet}((70, 13,1, 1, 1, 1, 1, 12)) \forall r
\end{align*}
}
Assuming equal proportions of participants in each treatment group, these parameter settings equate to a cumulative true incidence of roughly $0.16$ and $0.19$, respectively.
Recall from \Cref{subsec:sens} that $\eta^x \in \R^{2^{N_z}}$, so $\eta^x_{u}$ is the change in log-odds of belonging to principal stratum $u$ vs. $u_0$ relative to $x=1$. We set $\eta^x_{2^{N_z}} = 0$ for identifiability. 
Then let $\mu_u^r = \mathrm{softmax}^{-1}(\theta^{r,1})$, 
\[
  \theta^{r,x} = \mathrm{softmax}\lp\mu_u^r + \eta^x\rp, 
\]
where for all $x > 1$
$$
\eta^{x}_{i} \overset{\mathrm{iid}}{\sim} \mathrm{Normal}(0, 1),\, i < 2^{N_z}, \, \eta^{x}_{2^{N_z}} = 0.
$$
Let the $N_a$-vector $a^{u,x}$ be defined elementwise as $a_k^{u,x}$ where $a_k^{u,x} = P(A = k \mid S^{P_0} = u, X = x)$.
{
\begin{align*}
a^{u,1} \overset{\mathrm{iid}}{\sim} \mathrm{Dirichlet}(4 \mathbf{1}_{N_a}) \forall u,
\end{align*}
}
and $\nu^{u} = \mathrm{softmax}^{-1}(a^{u,1})$. Then recall that $\gamma^x \in \R^{N_a}$ such that $\gamma^{x}_k$ is the change in log-odds of $A = k$ relative to $A = k_0$, and that $\gamma^x_{N_a} = 0$ for identifiability. 
Then 
{
\begin{align*}
a^{u,x} = \mathrm{softmax}\lp \nu^u + \gamma^x \rp,
\end{align*}
}
and for all $x > 1$
$$
\gamma^{x}_{i} \overset{\mathrm{iid}}{\sim} \mathrm{Normal}(0, 1),\, i < N_a, \, \gamma^{x}_{N_a} = 0.
$$
For data generating process that does not adhere to the \nameref{cond:a-indy} assumption, $a^{u,x,r}$ is generated as follows
{
  \begin{align}
    \begin{split}\label{eq:dgp-non-indy}
    a^{u,x,r} & = \mathrm{softmax}\lp \nu^u + \gamma^x + \varepsilon^{r} \rp \\
      \varepsilon_k^{r} & \sim \text{Normal}(0, \sigma^2_{\varepsilon}) \,\forall r > 1, k < N_a  \\
      \varepsilon_{N_a}^{r} & = 0 \forall r \\
      \varepsilon_k^{1} & = 0 \forall k
  \end{split}
  \end{align}
}
We set $\sigma_{\varepsilon} = \{0.5, 1, 2\}$ for our simulation study scenarios.

Finally, recall that 
\begin{equation*}
    \log\frac{P(Y(z_j) = 1 \mid S^{P_0} = u, A = k, X = x)}{P(Y(z_j) = 0 \mid S^{P_0} = u, A = k, X = x)} = \alpha_j^{u} + \delta^{u}_{j,k} + \omega^x_j,
\end{equation*}

where $\omega_j^x$ is the change in log-odds of $Y(z_j) = 1$, all else being equal, compared to $x=1$.
In all of our simulations, $\omega_j^x = (x - 1) \log(1.1)$ for all $j$.
For the $2$-arm trial example, we let $\alpha_1^{(1,1)} = \log(0.3 / 0.7), \alpha_2^{(1,1)} = \log(0.3 / 0.7) + \log(0.4)$, and 
$\delta^{(1,1)}_{1,k} = (k - 1) \log(0.925), \delta^{(1,1)}_{2,k} = (k - 1) \log(0.825)$.
Further, we let $\alpha_1^{(1,0)} = \log(0.15/0.85)$, $\alpha_2^{(0,1)} = \log(0.2/0.8)$, and $\delta^{(1,0)}_{1,k} = (k - 1) \log(0.925)$, and $\delta^{(0,1)}_{2,k} = 0$

For the $3$-arm trial example, we let $\alpha_1^{(1,1,1)} = \log(0.3 / 0.7), \alpha_2^{(1,1,1)} = \log(0.3 / 0.7), \alpha_3^{(1,1,1)} = \log(0.3 / 0.7) + \log(0.4)$, and 
$\delta^{(1,1,1)}_{j,k} = (k - 1) \log(0.925)$ for $j=1,2,3$.
Further, we let $\alpha_1^{(1,0,1)} = \log(0.2/0.8)$, $\alpha_3^{(1,0,1)} = \log(0.1/0.9)$,$\alpha_1^{(1,1,0)} = \log(0.3/0.7)$, $\alpha_2^{(1,1,0)} = \log(0.15/0.85)$,$\alpha_2^{(0,1,1)} = \log(0.25/0.75)$, $\alpha_3^{(0,1,1)} = \log(0.08/0.92)$, $\alpha_3^{(0,0,1)} = \log(0.25/0.75)$, $\alpha_2^{(0,1,0)} = \log(0.25/0.75)$, $\alpha_1^{(1,0,0)} = \log(0.1/0.9)$ and $\delta^{u}_{j,k} = 0$ for all $k$, $u\in \{(1,0,0),(0,1,0),(1,1,0),(0,0,1),(1,0,1),(0,1,1)\}$, and all allowable $j$

In the $2$- and $3$-arm trial examples that pertain to inferring vaccine efficacy against severe symptoms, we set $\mathrm{sn}_S = 0.8, \mathrm{sp}_S = 0.99$ which reflects the sensitivity and specificity of a typical PCR collected via nasopharyngeal swab \citep{kisslerViralDynamicsAcute2021}, and $\mathrm{sn}_Y = 0.99, \mathrm{sp}_S = 0.9$ to reflect the fact that most severe illness caused by the pathogen of interest will be reported, but that there are many severe illness episodes that are reported that may be caused by other pathogens. 
These lead to a true rate of severe illness of $0.04$ but a rate of reported severe illness of $0.14$. For comparison \cite{montoComparativeEfficacyInactivated2009} symptom reporting data shows that $10\%$ of participants reported at least one severe symptom, but the cumulative incidence was $0.07$.


For each hypothetical participant in a study site $R = r$ in our study we draw data in the following manner
\begin{align}
\begin{split}
    Z_i & \overset{\mathrm{iid}}{\sim} \mathrm{Categorical}(\frac{1}{N_z} \mathbf{1}_{N_z}) \\
    X_i & \overset{\mathrm{iid}}{\sim} \mathrm{Categorical}(\frac{1}{3}\mathbf{1}_{3}) \\
   S^{P_0}_i \mid R = r, X = x & \overset{\mathrm{iid}}{\sim} \mathrm{Categorical}(\theta^{r,x}) \\
   A_i \mid S^{P_0} = u, X = x & \overset{\mathrm{iid}}{\sim}  \mathrm{Categorical}(a^{u,x}) \\
   Y_i \mid S^{P_0} = u, A = k, X = x, Z = j & \overset{\mathrm{iid}}{\sim}\mathrm{Bernoulli}(\mathrm{inv\_logit}(\alpha_j^u + \delta_{j,k}^u + \omega_x))\\
   \tilde{Y}_i \mid Y = y & \overset{\mathrm{iid}}{\sim}\mathrm{Bernoulli}(y \mathrm{sn}_Y + (1 - y) (1 - \mathrm{sp}_Y)) \\
   \tilde{S}_i \mid S^{P_0} = u, Z = j & \overset{\mathrm{iid}}{\sim}\mathrm{Bernoulli}(u_j \mathrm{sn}_S + (1 - u_j) (1 - \mathrm{sp}_S)) \\
   \tilde{A}_i \mid A = a & \overset{\mathrm{iid}}{\sim}\mathrm{Categorical}(p_{N_a}^a)
   \end{split}
\end{align}
and we do this for all sites $R \in \{1, \dots, N_r\}$.

We fit the model defined in \Cref{eq:misclass-obs-model-multi-site-multi-z}.
Recall
\begin{align}
  A_i  \mid S_i^{P_0} = u, X_i = x, R_i = r & \sim \text{Categorical}(\boldsymbol{\pi}_{u,x,r})\\
  S_i^{P_0}  \mid R_i = r, X_i = x & \sim \text{Categorical}(\boldsymbol{\rho}_{r,x}) \\
  Y_i(z_j) \mid S^{P_0}_i = u, A_i = k, X_i = x & \sim \text{Bernoulli}(\beta^{u,x}_{j,k}) 
  \end{align}
  where we define $\boldsymbol{\pi}_{u,x,r}$ as in \Cref{eq:a-interact-mod}:
  \begin{align}
    \begin{split}
    \boldsymbol{\pi}_{u,x,r} & = \mathrm{softmax}\lp \boldsymbol{\nu}^u + \boldsymbol{\gamma}^{x} + \boldsymbol{\gamma}^{u,x} + \boldsymbol{\epsilon}^{r} \rp \\
    \epsilon_k^{r} & \sim \text{Normal}(0, \sigma^2_{\epsilon}) \,\forall r > 1, k < N_a  \\
      \gamma_k^{u,x} & \sim \text{Normal}(0, \sigma^2_{\gamma}) \,\forall u, x > 1, k < N_a  \\
      \sigma_{\epsilon} & \sim \text{StudentT}^+(3, 0, 0.1), \, \sigma_{\gamma} \sim \text{StudentT}^+(3, 0, 0.1) \\
      \epsilon_{N_a}^{r}  = 0 \forall r,\, \epsilon_k^{1} = 0 \forall k,\,\gamma_{N_a}^{x} &= 0 \forall x,\, \gamma_k^{1} = 0 \forall k,\, \gamma_{N_a}^{u,x} = 0 \forall u,x,\, \gamma_k^{u,1} = 0, \gamma_k^{1,x} = 0 \forall k \\
  \end{split}
  \end{align}
  We define a hierarchical model for $\beta_{j,k}^{u,x}$ as:
  \begin{align}
    \begin{split}
      \text{logit}  \beta_{j,k}^{u,x} & = \alpha_{j}^u + \epsilon_{j,k}^{u,x} \\
      \epsilon_{j,k}^{u,x} & \sim \text{Normal}(0, \tau^2_{\epsilon}) \\
      \tau_{\epsilon} &\sim \text{Normal}^+(0, 5^2) \\
      \alpha_j^u & \sim \text{Normal}(0, 1.7^2)
      \end{split}
    \end{align}
  We use the following priors:
\begin{align}
    \begin{split}
    \mathrm{sn}_S & \sim \mathrm{Beta}(0.5, 1, 4, 2) \\
    \mathrm{sp}_S & \sim \mathrm{Beta}(0.5, 1, 10, 2) \\
    \mathrm{sn}_Y & \sim \mathrm{Beta}(0.5, 1, 1, 1) \\
    \mathrm{sp}_Y & \sim \mathrm{Beta}(0.5, 1, 4, 2) \\
      \boldsymbol{\rho}_{r,x} & \overset{\text{iid}}{\sim} \text{Dirichlet}((7,1,1, 1)^T), \,\forall r,x \\
      \nu_k^{u} & \sim \mathrm{Normal}(0, 1.3^2), \forall u, 1 \leq k < N_a  \\
      \gamma_k^{x} & \sim \mathrm{Normal}(0, 1.3^2), \forall x > 1, 1 \leq k < N_a 
    \end{split}
\end{align}
where $\mathrm{Beta}(0.5, 1, 4, 2)$ is the shifted, scaled Beta distribution in which the first two arguments define the support of the distribution, and the second two parameters are shape parameters. For example, for $\mathrm{sn}_S$ this corresponds to $\chi \sim \mathrm{Beta}(4, 2)$ and $\mathrm{sn}_S = (1 - 0.5) \chi + 0.5$.

When $N_z = 3$, the prior for $\boldsymbol{\rho}_{r,x}$ changes to:
$$
\boldsymbol{\rho}_{r,x} \overset{\text{iid}}{\sim} \text{Dirichlet}((2,1/3\,\mathbf{1}_{2^{N_z}-1}^T)^T), \,\forall r,x 
$$

We use Stan for inference \citep{stanmanual} using the \texttt{cmdstanr} package \citep{cmdstan} in \texttt{R} \citep{R}.
Four MCMC chains were run for each model on each scenario.
All models for the two-arm trials were run for $3{,}000$ warmup and $3{,}000$ post-warmup iterations; nearly all $\hat{R}$ statistics \citep{gelman2013bayesian} were below $1.01$, as recommended by \cite{vehtarirhat}. The maximum $\hat{R}$ recorded across all two-arm trials was $1.04$.
For the three-arm trials, all models were run for $3{,}000$ warmup and $12{,}000$ post-warmup iterations.
Nearly all $\hat{R}$ statistics were below $1.01$ with a maximum $\hat{R}$ recorded across all three-arm trials was $1.02$.

\section{Asymptotic variance derivations}

\begin{lemma}[Affine transformation of vectorized normal]\label{lemma:vec-normal}
  Suppose $X \in \R^{n \times p}$, $\text{vec}(X) \sim N(0, \Sigma)$ for a positive definite matrix $\Sigma$.
  Given a matrix $D \in \R^{k \times n}$ with $\rank D = k$ and a matrix $C \in \R^{p \times m}$ with $\rank C = m$, $\text{vec}(D X C)  \sim N(0, (C^T \otimes D)\Sigma (C \otimes D^T))$ where $(C^T \otimes D)\Sigma (C \otimes D^T)$ is positive definite.
\end{lemma}
\begin{proof}
  $\text{vec}(DXC) = (C^T \otimes D) \text{vec}(X)$.
  Affine transformations of multivariate normal vectors, $Y=B Z$, where $Z \sim N(0, \Sigma)$ and $B$ is a full-row rank matrix leads to $Y \sim N(0, B \Sigma B^T)$, $B \Sigma B^T$ positive definite.
  Then $$\text{vec}(DXC) \sim \text{N}(0,(C^T \otimes D) \Sigma (C \otimes D^T)).$$
  The rank of $C^T \otimes D = \rank C^T \rank D = m k$.
  The dimensions of $C^T \otimes D$ are $m k \times n p$, so $C^T \otimes D$ is full-row rank.
\end{proof}

\begin{lemma}[Limiting distribution of least squares estimator]\label{lemma:ls-dist}
  Let $b = X \omega$ where $X$ is full column rank with elements in $\R$ and $\omega \not = 0$.
  Given estimators $\hat{X}, \hat{b}$ the least-squares estimator for $\omega$, $\hat{\omega}$, is $\lp \hat{X}^T \hat{X} \rp^{-1}\hat{X}^T \hat{b}$.
  Let $X^+ = (X^T X)^{-1} X^T, \vc{\sqrt{N}D_X(\hat{X} - X)} \overset{d}{\to} \mathrm{Normal}(0, \Sigma),$ and $\sqrt{N}D_b(\hat{b} - b) \overset{d}{\to} \mathrm{Normal}(0, C)$ where $D_X, D_b$ are positive definite diagonal scaling matrices.
  Then $$\sqrt{N}(\hat{\omega} - \omega) \overset{d}{\to} \text{Normal}(0, (\omega^T \otimes X^+ D_X^{-1})\Sigma(\omega \otimes (X^+D_X^{-1})^T) + X^+D_b^{-1} C (X^+D_b^{-1})^T).$$
\end{lemma}
\begin{proof}
  $\omega(X,b) = \lp X^T X \rp^{-1}X^T b$, so we can linearize this function in $X$ and $b$, following the logic in \cite{bonhomme_non-parametric_2016}:
  \begin{align*}
    \omega(X + dX, b + db) & = \omega(X,b) + d \lp X^T X \rp^{-1} X^T b + \lp X^T X \rp^{-1} (d X^T) b + \lp X^T X \rp^{-1}  X^T db 
  \end{align*}
  For a square invertible matrix $\Omega$, $d \Omega^{-1} = -\Omega^{-1}\, d \Omega\, \Omega^{-1}$ so 
  \begin{align}
    d \lp X^T X \rp^{-1} & = - \lp X^T X \rp^{-1} d (X^T X) \lp X^T X \rp^{-1} \\
                         & = - \lp X^T X \rp^{-1} (d X^T X + X^T dX)\lp X^T X \rp^{-1}
  \end{align}
  Then, 
  \begin{align*}
    \omega(X + dX, b + db)  = & \omega(X,b) - \lp X^T X \rp^{-1} (d X^T X + X^T dX)\lp X^T X \rp^{-1} X^T b  \\
                              & + \lp X^T X \rp^{-1} (d X^T) b + \lp X^T X \rp^{-1}  X^T db  + o(1)\\
                              & =\omega(X,b) + \lp X^T X \rp^{-1} d X^T(I - X X^+) b + X^+ dX  X^+ b  \\
                              & + X^+ db   + o(1)
  \end{align*}
  Note that $(I - X X^+) b = (I - X X^+) X \omega = 0$ and $D_X^{-1}D_X = D_b^{-1}D_b = I$
  So
  \begin{align}
    \sqrt{N}(\hat{\omega} - \omega) & = X^+ D_X^{-1} \sqrt{N} D_X(\hat{X} - X)\,  \omega + X^+ D_b^{-1}\sqrt{N}D_b(\hat{b} - b)   + o_p(1)
  \end{align}
  By assumption,
  $$\sqrt{N}D_b(\hat{b} - b) \overset{d}{\to} N(0, C), \, \vc{\sqrt{N}D_X(\hat{X} - X)}  \overset{d}{\to} \text{Normal}(0,\Sigma),$$ so
  \begin{align}
    \vc{X^+ D_X^{-1} \sqrt{N}D(\hat{X} - X)  \,\omega} & \overset{d}{\to} \text{Normal}(0, (\omega^T \otimes X^+ D_X^{-1})\Sigma(\omega \otimes (X^+D_X^{-1})^T)) \\
    X^+ D_b^{-1}\sqrt{N}D(\hat{b} - b) & \overset{d}{\to} N(0, X^+D_b^{-1} C (X^+D_b^{-1})^T)
\end{align}
given that $\vc{X^+ D_X^{-1} \sqrt{N}D_X(\hat{X} - X)  \,\omega} = X^+ D_X^{-1}\sqrt{N}D_X(\hat{X} - X)  \,\omega$, we get
\begin{align}
  \sqrt{N}(\hat{\omega} - \omega)\overset{d}{\to} \text{Normal}(0, (\omega^T \otimes X^+ D_X^{-1})\Sigma(\omega \otimes (X^+D_X^{-1})^T) + X^+D_b^{-1} C (X^+D_b^{-1})^T)
\end{align}
\end{proof}

\subsubsection{Interpretation for VE modeling}\label{sssec:ls-dist}
When considering \Cref{lemma:ls-dist} in the context of the two-arm model presented in the paper, the vector $\omega$ is the vector
$$(\mathrm{sp}_Y, \beta_z^{(0,1)}, \beta_z^{(1,0)}, \beta_z^{(1,1)})$$, the matrix $X$ is $P_{N_z}^x(S^{P_0} \mid R)$, and the matrix $D_X$ in \Cref{lemma:ls-dist} is the square-root of the proportion of individuals assigned to a study site $R_i = r$.
$D_b$ is the proportion of people assigned to a study site $R_i = r$ and $Z_i = z$.
If the design of the study calls for equal proportions of individuals spread out between study sites and equal numbers of people assigned to treatment and control for each study site, then $D_b = (2 N_r)^{-\frac{1}{2}}\mathbf{I}_{N_r}$ and $D_X = (N_r)^{-\frac{1}{2}}\mathbf{I}_{N_r}$.
This simplifies the expression above to:
\begin{align}\label{eq:ls-dist-simple}
  \sqrt{N}(\hat{\omega} - \omega)\overset{d}{\to} \text{Normal}(0, N_r \lp (\omega^T \otimes X^+ )\Sigma(\omega \otimes (X^+)^T) + 2 X^+C (X^+)^T\rp)
\end{align}
This highlights that variance increases as a function of $N_r$ due to distributing people over more sites.
This effect is in tension with the Moore-Penrose inverse $X^+$ because if we take the view that study sites are drawn from a superpopulation of study sites with a generating distribution for proportions of principal strata, the more study sites we have, the better estimate of the inverse of the second moment matrix $X^T X$.

\section{Selection bias in $\mathrm{VE}_I^{\mathrm{net}}$}

Recall $\theta_{(i,j)}= P(S_i^{P_0} = (i,j))$ and $\beta_z^{(i,j)} = P(Y_i(z) = 1 \mid S_i^{P_0} = (i,j))$, $\theta_{(+,j)} = \theta_{(0,j)} +\theta_{(1,j)}$, and $\theta_{(i,+)} = \theta_{(i,0)} +\theta_{(i,1)}$.
For notational ease in the derivation, let the target estimand $\Exp{Y_i(0) - Y_i(1) \mid S^{P_0} = (1,1)}$ be denoted as $\Delta^{(1,1)}$, and note that $\beta_0^{(1,1)} = \beta_1^{(1,1)} + \Delta^{(1,1)}$.
\begin{align*}
  &  \Exp{Y_i(0) \mid S_i(0) = 1} - \Exp{Y_i(1) \mid S_i(1) = 1} \\
   &= \frac{P(Y_i(0) = 1, S_i(0) = 1)}{P(S_i(0) = 1)}  - \frac{P(Y_i(1) = 1, S_i(1) = 1)}{P(S_i(1) = 1)} \\
                               & = \frac{\beta_0^{(1,1)} \theta_{(1,1)} + \beta_0^{(0,1)} \theta_{(0,1)}}{\theta_{(+,1)}}  - \frac{\beta_1^{(1,1)} \theta_{(1,1)} + \beta_1^{(1,0)} \theta_{(1,0)}}{\theta_{(1,+)}} \\
                               & = \frac{1}{\theta_{(1,+)}\theta_{(+,1)}} \theta_{(1,+)}\lp \beta_0^{(1,1)} \theta_{(1,1)} + \beta_0^{(0,1)} \theta_{(0,1)}\rp - \theta_{(+,1)}\lp \beta_1^{(1,1)} \theta_{(1,1)} + \beta_1^{(1,0)} \theta_{(1,0)}\rp \\
                               & =  \frac{1}{\theta_{(1,+)}\theta_{(+,1)}}\Bigg(\theta_{(1,1)}^2 (\beta_0^{(1,1)} - \beta_1^{(1,1)})+\theta_{(1,0)}\theta_{(1,1)} \beta_0^{(1,1)} -\beta_1^{(1,1)} \theta_{(1,1)}\theta_{(0,1)}\\
                               & + \beta_0^{(0,1)} \theta_{(1,0)}\theta_{(0,1)} + \beta_0^{(0,1)} \theta_{(1,1)}\theta_{(0,1)}-\beta_1^{(1,0)} \theta_{(1,0)}\theta_{(1,1)}-\beta_1^{(1,0)} \theta_{(1,0)}\theta_{(0,1)}\Bigg) \\
                               & =  \Delta^{(1,1)} + \frac{1}{\theta_{(1,+)}\theta_{(+,1)}}\Bigg(-\theta_{(0,1)}\theta_{(1,0)} (\beta_0^{(1,1)} - \beta_1^{(1,1)})+\theta_{(1,0)}\theta_{(1,1)} \beta_1^{(1,1)} -\beta_0^{(1,1)} \theta_{(1,1)}\theta_{(0,1)}\\
  & + \beta_0^{(0,1)} \theta_{(1,0)}\theta_{(0,1)} + \beta_0^{(0,1)} \theta_{(1,1)}\theta_{(0,1)}-\beta_1^{(1,0)} \theta_{(1,0)}\theta_{(1,1)}-\beta_1^{(1,0)} \theta_{(1,0)}\theta_{(0,1)}\Bigg) \\
  & =  \Delta^{(1,1)} + \frac{1}{\theta_{(1,+)}\theta_{(+,1)}}\Bigg(\theta_{(0,1)}\theta_{(1,0)} (\beta_1^{(1,1)} - \beta_1^{(1,0)} +\beta_0^{(0,1)} - \beta_0^{(1,1)})\\
  &  +\theta_{(1,0)}\theta_{(1,1)} (\beta_1^{(1,1)}-\beta_1^{(1,0)}) + \theta_{(1,1)}\theta_{(0,1)}(\beta_0^{(0,1)}-\beta_0^{(1,1)})\Bigg) 
\end{align*}
This leads to the following expression in terms of expectations:
\begin{align*}
    \begin{split}
    \Exp{Y_i(0) \mid S_i(0) = 1} &  - \Exp{Y_i(1) \mid S_i(1) = 1}  =\\
    & \Exp{Y_i(0) - Y_i(1) \mid S^{P_0} = (1,1)}  \\
    & + (\Exp{Y_i(1) \mid S^{P_0} = (1,1)} - \Exp{Y_i(1) \mid S^{P_0} = (1,0)})\frac{\theta_{(1,1)}\theta_{(1,0)}}{\theta_{(1,+)}\theta_{(+,1)}}  \\
    & + (\Exp{Y_i(0) \mid S^{P_0} = (0,1)} - \Exp{Y_i(0) \mid S^{P_0} = (1,1)})\frac{\theta_{(1,1)}\theta_{(0,1)}}{\theta_{(1,+)}\theta_{(+,1)}} \\
    & + (\Exp{Y_i(0) \mid S^{P_0} = (0,1)} - \Exp{Y_i(0) \mid S^{P_0} = (1,1)})\frac{\theta_{(1,0)}\theta_{(0,1)}}{\theta_{(1,+)}\theta_{(+,1)}} \\
    & + (\Exp{Y_i(1) \mid S^{P_0} = (1,1)} - \Exp{Y_i(1) \mid S^{P_0} = (1,0)})\frac{\theta_{(1,0)}\theta_{(0,1)}}{\theta_{(1,+)}\theta_{(+,1)}} 
\end{split}
\end{align*}
Even under monotonicity where $\theta_{(1,0)}=0$ the selection bias remains:
\begin{align*}
  \begin{split}
    \Exp{Y_i(0) \mid S_i(0) = 1} &  - \Exp{Y_i(1) \mid S_i(1) = 1}  =\\
                                 & \Exp{Y_i(0) - Y_i(1) \mid S^{P_0} = (1,1)}  \\
                                 & + (\Exp{Y_i(0) \mid S^{P_0} = (0,1)} - \Exp{Y_i(0) \mid S^{P_0} = (1,1)})\frac{\theta_{(1,1)}\theta_{(0,1)}}{\theta_{(1,+)}\theta_{(+,1)}}.
  \end{split}
\end{align*}

\section{Asymptotic bias derivations}

Let asymptotic bias for an estimator $\hat{\phi}$ of a parameter $\phi$ be defined as $\text{AsympBias}(\hat{\phi}) = \hat{\phi} - \phi$.

\subsection{Asymptotic bias of $\mathrm{VE}_S$ under measurement error}

Let the latent binary infection state be denoted $S_i$, the noisy measurement denoted as $\tilde{S}_i$ with sensitivity $\mathrm{sn}_S$ and specificity $\mathrm{sp}_S$.
Treatment assignment is $Z_i$ with $1$ indicating the treated group while $0$ indicates the placebo group.
The target estimand is $\mathrm{VE}_S = 1 - \frac{\Exp{S_i \mid Z_i = 1}}{\Exp{S_i \mid Z_i = 0}}$ which equals the causal estimand under randomization and consistency.
We investigate the asymptotic bias of the naive estimator:
$$
\hat{\mathrm{VE}}_S = 1 - \frac{\sum_i \ind{\tilde{S}_i = 1} \ind{Z_i = 1} / \sum_i \ind{Z_i = 1}}{\sum_i \ind{\tilde{S}_i = 1} \ind{Z_i = 0} / \sum_i \ind{Z_i = 0}}.
$$
$\hat{\mathrm{VE}}_S$ converges in probability to the following estimand:
$$
\tilde{\mathrm{VE}}_S = 1 - \frac{\Exp{\tilde{S}_i \mid Z_i = 1}}{\Exp{\tilde{S}_i \mid Z_i = 0}}.
$$
Recall the identity for binary variables, and $z \in \{0,1\}$:
\begin{align}\label{eq:ident}
  \Exp{S_i \mid Z_i = z} = \frac{\Exp{\tilde{S}_i \mid Z_i = z} - (1 - \mathrm{sp}_S)}{\mathrm{sn}_S + \mathrm{sp}_Y - 1}
\end{align}
to express the target estimand in terms of the noisy infection measurements:
$$
\mathrm{VE}_S = \frac{\Exp{\tilde{S}_i \mid Z_i = 1} - (1 - \mathrm{sp}_S)}{\Exp{\tilde{S}_i \mid Z_i = 0} - (1 - \mathrm{sp}_S)}
$$
Then the bias is:
\begin{align}
  \text{AsympBias}(\hat{\mathrm{VE}}_I) & = \frac{\Exp{\tilde{S}_i \mid Z_i = 1} - (1 - \mathrm{sp}_S)}{\Exp{\tilde{S}_i \mid Z_i = 0} - (1 - \mathrm{sp}_S)} -
                                          \frac{\Exp{\tilde{S}_i \mid Z_i = 1}}{\Exp{\tilde{S}_i \mid Z_i = 0}} \\
                                        & = -\frac{1 - \mathrm{sp}_S}{\Exp{\tilde{S}_i \mid Z_i = 0}}\frac{(\Exp{\tilde{S}_i \mid Z_i = 0} - \Exp{\tilde{S}_i \mid Z_i = 1})}{\Exp{\tilde{S}_i \mid Z_i = 0} - (1 - \mathrm{sp}_S)} \label{eq:fin-line}
\end{align}
Again using the identity \cref{eq:ident} in the numerator and denominator of the second fraction in line \cref{eq:fin-line}, the bias can be expressed as:
$$
\text{AsympBias}(\hat{\mathrm{VE}}_I) = -\frac{1-\mathrm{sp}_S}{\Exp{\tilde{S}_i \mid Z_i = 0}}\mathrm{VE}_I.
$$

\subsection{Bias of $\mathrm{VE}_I$ under erroneous monotonocity assumption}

Recall the definition of $\mathrm{VE}_I$:
$$
\mathrm{VE}_I = 1 - \frac{P(Y_i(1) = 1 \mid S^{P_0} = (1, 1))}{P(Y_i(0) = 1 \mid S^{P_0} = (1, 1))}.
$$
The sample estimator, $\hat{p}_{1+zr} = \frac{\sum_{i}\ind{R_i = r}\ind{S_i = 1}\ind{Z_i = z}}{\sum_{i}\ind{R_i = r}\ind{Z_i = z}}$ converges in probability to $P(S_i = 1 \mid Z_i = z, R_i = r)$ for $z\in\{0,1\}$.
Under \crefrange{cond:SUTVA}{cond:pc-obs-unconfound-multi-z} and , $P(S_i = 1 \mid Z_i = z, R_i = r) = P(S_i(z) = 1 \mid R_i = r)$.
Under monotonicity, $P(S_i(1) = 1, S_i(0)=0 \mid R_i = r) = 0$ so $P(S_i(1) = 1 \mid R_i = r) = P(S_i(1) = 1, S_i(0)=1 \mid R_i = r)$ and $\hat{\theta}_{11}^r = \hat{p}_{1+zr} \overset{\mathrm{P}}{\to} \theta_{11}^r$.
Futhermore, $\hat{p}_{11zr} = \frac{\sum_{i}\ind{R_i = r}\ind{S_i = 1}\ind{Y_i = z}\ind{Z_i = 1}}{\sum_{i}\ind{R_i = r}\ind{Z_i = z}}$ converges in probability to $P(Y_i = 1, S_i = 1 \mid Z_i = z, R_i = r)$.
For $z = 1$, this is equivalent to:
\begin{align}
  P(Y_i = 1,  S_i = 1 \mid Z_i = 1, R_i = r) & = P(Y_i(1) = 1, S_i(1) = 1 \mid R_i = r) \\
                                            & = P(Y_i(1) = 1 \mid S_i(1) = 1, R_i = r)P(S_i(1) = 1 \mid R_i = r) \\
                                            & = P(Y_i(1) = 1 \mid S_i^{P_0} = (1, 1), R_i = r)P(S_i^{P_0} = (1, 1) \mid R_i = r)  \\
                                             & = P(Y_i(1) = 1 \mid S_i^{P_0} = (1, 1))P(S_i^{P_0} = (1, 1) \mid R_i = r) 
\end{align}
Where the first line follows from \crefrange{cond:SUTVA}{cond:pc-obs-unconfound-multi-z}, the third line follows from the equivalence of $\{S_i(1) = 1\}$ and $\{S_i^{P_0} = (1, 1)\}$, and the fourth line follows from \cref{cond:homogeneity}.
Then it follows that a consistent estimator for $P(Y_i(1) = 1 \mid S_i^{P_0} = (1, 1))$ is .
$$
\frac{\hat{p}_{111r}}{\hat{p}_{1+1r}} \overset{\mathrm{P}}{\to}P(Y_i(1) = 1 \mid S_i^{P_0} = (1, 1))
$$
provided $\hat{p}_{1+1r}$ is bounded away from zero.
Let $P(S_i(1) = j, S_i(0)=k \mid R_i = r) = \theta_{jk}^r$ for $j, k \in \{0,1\}$, and let $\beta_z^u = P(Y_i(z) = 1 \mid S_i^{P_0} = u)$.
When monotonicity does not hold, 
$$
P(S_i(1) = 1 \mid R_i = r) = \theta_{11}^r + \theta_{10}^r,
$$
and 
\begin{align*}
  P(Y_i(1) = 1, S_i(1) = 1 \mid R_i = r)  = \theta_{11}^r\beta_1^{(1,1)} + \theta_{10}^r\beta_1^{(1,0)}
\end{align*}
Following the notation introduced at the beginning of the Appendix, the asymptotic limit of our simple moment estimator for the proportion is:
\begin{align}
  \frac{\hat{p}_{111r}}{\hat{p}_{1+1r}} \overset{\mathrm{P}}{\to} \frac{\theta_{11}^r \beta_1^{(1,1)} + \theta_{10}^r \beta_1^{(1,0)}}{\theta_{11}^r + \theta_{10}^r}\label{eq:asymp-treat}
  \end{align}

Under monotonicity, a consistent estimator for $\theta_{01}^r$ is $\hat{\theta}_{01}^r = \hat{p}_{1+0r} - \hat{p}_{1+1r}$.
One may construct a consistent estimator for the vector $(\beta_0^{(0,1)}, \beta_0^{(1,1)})^T$ by recognizing that the following system of equations can be solved:
\begin{align}
  p_{110{r_1}} & = \theta_{01}^{r_1} \beta_0^{(0,1)} + \theta_{11}^{r_1} \beta_0^{(1,1)} \\
  p_{110{r_2}} & = \theta_{01}^{r_2} \beta_0^{(0,1)} + \theta_{11}^{r_2} \beta_0^{(1,1)} 
\end{align}
provided the matrix
$$
\begin{bmatrix}
  \theta_{01}^{r_1}  &\theta_{11}^{r_1} \\
  \theta_{01}^{r_2}  &\theta_{11}^{r_2} 
  \end{bmatrix}
$$
is invertible.
Then the following holds:
$$
\begin{bmatrix}
  \hat{\theta}_{01}^{r_1}  &\hat{\theta}_{11}^{r_1} \\
  \hat{\theta}_{01}^{r_2}  &\hat{\theta}_{11}^{r_2} 
\end{bmatrix}^{-1} \begin{bmatrix}
                     \hat{p}_{110{r_1}} \\
                     \hat{p}_{110{r_2}}
                   \end{bmatrix} \overset{\mathrm{P}}{\to}
                   \begin{bmatrix}
                     \beta_0^{(0,1)}\\
                     \beta_0^{(1,1)}
                     \end{bmatrix}.
$$

When monotonicity does not hold, however,
$$
\hat{\theta}_{01}^r = \hat{p}_{1+0r} - \hat{p}_{1+1r} \overset{\mathrm{P}}{\to}\theta_{01}^r - \theta_{10}^r.
$$
Furthermore, 
$$
\begin{bmatrix}
  \hat{\theta}_{01}^{r_1}  &\hat{\theta}_{11}^{r_1} \\
  \hat{\theta}_{01}^{r_2}  &\hat{\theta}_{11}^{r_2} 
\end{bmatrix}^{-1}
\begin{bmatrix}
  \hat{p}_{110{r_1}} \\
  \hat{p}_{110{r_2}}
\end{bmatrix} \overset{\mathrm{P}}{\to}
\begin{bmatrix}
  \theta_{01}^{r_1} - \theta_{10}^{r_1}  & \theta_{11}^{r_1} + \theta_{10}^{r_1}\\
  \theta_{01}^{r_2} - \theta_{10}^{r_2}  & \theta_{11}^{r_2} + \theta_{10}^{r_2}
\end{bmatrix}^{-1}
\begin{bmatrix}
  \theta_{01}^{r_1} \beta_0^{(0,1)} + \theta_{11}^{r_1} \beta_0^{(1,1)} \\
  \theta_{01}^{r_2} \beta_0^{(0,1)} + \theta_{11}^{r_2} \beta_0^{(1,1)} 
\end{bmatrix}.
$$
Let
$$\mathbf{U}_{{r_1},{r_2}} = \begin{bmatrix} \theta_{01}^{r_1} & \theta_{01}^{r_2} \\ \theta_{10}^{r_1} & \theta_{10}^{r_2}\end{bmatrix},\,
\mathbf{V}_{{r_1},{r_2}} = \begin{bmatrix} \theta_{11}^{r_2} & \theta_{01}^{r_2} - \theta_{10}^{r_2} \\ \theta_{11}^{r_1} & \theta_{01}^{r_1} - \theta_{10}^{r_1}\end{bmatrix}.
$$
Then the moment estimator for $\beta_0^{(1,1)}$ converges in probability to:
\begin{align}\label{eq:asymp-plac}
  \frac{\deter{\mathbf{U}_{{r_1},{r_2}}}\beta_0^{(0,1)} + \deter{\mathbf{V}_{{r_1},{r_2}}}\beta_0^{(1,1)}}{\deter{\mathbf{U}_{{r_1},{r_2}}} + \deter{\mathbf{V}_{{r_1},{r_2}}}}.
\end{align}
The asymptotic bias for the moment estimator of $\beta_0^{(1,1)}$ is
\begin{align}
  \text{AsympBias}(\hat{\beta}_0^{(1,1)}) = \frac{\deter{\mathbf{U}_{{r_1},{r_2}}}}{\deter{\mathbf{U}_{{r_1},{r_2}}} + \deter{\mathbf{V}_{{r_1},{r_2}}}}(\beta_0^{(0,1)} - \beta_0^{(1,1)})
\end{align}
while the asymptotic bias for the moment estimator of $\beta_1^{(1,1)}$ is
\begin{align}
  \text{AsympBias}(\hat{\beta}_1^{(1,1)}) = \frac{\theta_{10}^{r_3}}{\theta_{11}^{r_3} + \theta_{10}^{r_3}}(\beta_1^{(1,0)} - \beta_1^{(1,1)})
\end{align}
The superscript $r_3 \in \{1, \dots, N_r\}$, while, of course, $r_1 \neq r_2$. 
We can express our asymptotic bias for our plug-in estimator for $\mathrm{VE}_I$ as
$$
\frac{\beta_1^{(1,1)}}{\beta_0^{(1,1)}} - \frac{\beta_1^{(1,1)} +\text{AsympBias}(\hat{\beta}_1^{(1,1)}) }{\beta_0^{(1,1)} + \text{AsympBias}(\hat{\beta}_0^{(1,1)})}.
$$
This leads to a tidy expression for the asymptotic bias in our ratio estimator:
$$
\frac{\beta_1^{(1,1)}\frac{\deter{\mathbf{U}_{{r_1},{r_2}}}}{\deter{\mathbf{U}_{{r_1},{r_2}}} + \deter{\mathbf{V}_{{r_1},{r_2}}}}(\beta_0^{(0,1)} - \beta_0^{(1,1)}) - \beta_0^{(1,1)}\frac{\theta_{10}^{r_3}}{\theta_{11}^{r_3} + \theta_{10}^{r_3}}(\beta_1^{(1,0)} - \beta_1^{(1,1)})}{\beta_0^{(1,1)}\hat{\beta}_0^{(1,1)}}
$$
The expression shows that the bias can be quite large in magnitude if $\beta_0^{(1,1)}$ is small and there is selection bias in the sense that $\beta_1^{(1,0)} > \beta_1^{(1,1)}$.
This bias can be mitigated by a countervailing selection bias in the placebo group, if $\beta_0^{(0,1)} < \beta_0^{(1,1)}$, provided the ratio of determinants is positive, which is not guaranteed.

\end{appendices}

\bibliography{ve.bib}
\end{document}